\documentclass[11pt,a4paper]{article}

\textwidth 16.0 cm

\textheight 26 cm

\topmargin -0.9 in

\oddsidemargin 0.0 in

\usepackage{titlesec}
\usepackage{amssymb}
\usepackage{amsmath,amsthm,amssymb,color}
\usepackage[pdftex,pagebackref,colorlinks]{hyperref}
\setcounter{tocdepth}{3}
\usepackage{graphicx}
\usepackage{multirow}
\usepackage{makecell}
\usepackage{booktabs}
\usepackage{enumerate}
\date{}
\usepackage{url}
\urldef{\mailsa}\path|{alfred.hofmann, ursula.barth, ingrid.haas, frank.holzwarth,|
\urldef{\mailsb}\path|anna.kramer, leonie.kunz, christine.reiss, nicole.sator,|
\urldef{\mailsc}\path|erika.siebert-cole, peter.strasser, lncs}@springer.com|

\pagestyle{plain}
\newtheorem{theorem}{Theorem}
\newtheorem{lemma}{Lemma}
\newtheorem{corollary}{Corollary}
\newtheorem{definition}{Definition}
\newtheorem{proposition}{Proposition}

\newtheorem{remark}{Remark}

\newtheorem{construction}{Construction}

\begin{document}

\title{Minimal Ternary Linear Codes from Vectorial Functions}

\author{Yanjun Li, Haibin Kan, Fangfang Liu, Jie Peng,  Lijing Zheng, and Zepeng Zhuo\thanks{Corresponding author
\newline\indent Y. Li is with Institute of Statistics and Applied Mathematics, Anhui University of Finance and Economics,  Bengbu, Anhui 233030,   China; and School of Computer Sciences, Fudan University, Shanghai 200433, China
(yanjlmath90@163.com).
\newline \indent H. Kan is with Shanghai Key Laboratory of Intelligent Information Processing, School of Computer Science, Fudan University; Shanghai Engineering Research Center of Blockchain, Shanghai Institute for Advanced Communication and Data Science, Shanghai 200433, China (Emails:~hbkan@fudan.edu.cn).
\newline \indent F. Liu and Z. Zhuo  are with School of Mathematics and Statistics,
Huaibei Normal University, Huaibei, Anhui 235000, China (E-mails:~19810900401@163.com and zzp781021@sohu.com).
\newline \indent J. Peng is with Mathematics and Science College of
Shanghai Normal University, Guilin Road \#100, Shanghai 200234, China (E-mail:~jpeng@shnu.edu.cn).
%\newline \indent C.H. Tan is with National University of Singapore, Singapore. (Email:~tsltch@nus.edu.sg).
\newline \indent L. Zheng is  with the School of Mathematics and Physics, University of South China, Hengyang, Hunan 421001, China (Email:~zhenglijing817@163.com).}}
\maketitle
\begin{abstract}
The study on minimal linear codes has received great attention due to their significant applications in secret sharing schemes and secure two-party computation. Until now, numerous minimal linear codes have been discovered.
However, to the best of our knowledge, no infinite family of minimal ternary linear codes was found from vectorial functions. In this paper, we present a necessary and sufficient condition for a large class of ternary linear codes from vectorial functions such that those codes are minimal. Based on that, we construct several minimal ternary linear codes with three-weight from vectorial regular plateaued functions, and determine their weight distributions. Moreover, we also give a necessary and sufficient condition for a large family of ternary linear codes from vectorial functions such that the codes are minimal and violate the AB condition simultaneously. According to this characterization, we find several minimal ternary linear codes violating the AB condition. Notably, our results show that our method can be applied to solve a problem on minimal linear codes proposed by  Li et al.
\end{abstract}

\noindent {\bf Keywords:} Minimal ternary linear code, vectorial function, three-weight, AB condition

%\noindent {\bf Mathematics Subject Classification 2020:} 06E30, 94A60, 94D10.

\section{Introduction}

Linear codes with few weights are important in  communication, consumer electronics and data storage system.
%A code $C\subseteq \mathbb{F}_p^n$ is called minimal if every codeword in $C$ is minimal, that is, every codeword $c$ in $C$ only covers $ac$ for all $a\in\mathbb{F}_p$, and no other codewords in $C$.
 As a special kind of linear codes, minimal linear codes have attracted  enormous research attention because of their wide and significant applications in  secret sharing schemes \cite{Carlet-ding-2007-IT, Yuan-2006-IT} and secure two-party computation \cite{Chabanne-2014, Ding-2003}.

In 1998, according to the ratio of the maximum and minimum
nonzero Hamming weights in a linear code $C$, Ashikhmin and Barg \cite{AB-condition-1998} found a sufficient condition (which is called the AB condition in this paper) to judge whether $C$ is minimal (see Lemma \ref{lemABcondition} of this paper). A number of linear codes satisfying the AB condition have been reported from the study of linear codes with few weights (see, e.g., \cite{Ding-2003, Ding-2015-IT2, Hengzl-2016-FFA, Luogj-2018, Sihem-2017, Sihem-2019, Shimj-2017-IT, Tangcm-2016-IT, Yang-2017-DCC, Zhouzc-2016-DCC}). In \cite{Cohen-2003}, Cohen et al. gave the first sporadic  example to show  that the AB condition is not necessary for a linear code to be minimal.
However, there is no infinite family of minimal linear codes violating the AB condition until Chang and
Hyun's breakthrough in \cite{Chang-2018-DCC}, where the first infinite family of such codes was found from the linear code
\begin{align}\label{eqcodeBooleanintro}
C_f=\bigg\{c(\mu,\nu)=\big(\mu f(x)+\nu \cdot x\big)_{x\in\mathbb{F}_{p}^{n*}} : \mu\in\mathbb{F}_p,\nu\in\mathbb{F}_{p}^{n}\bigg\},
\end{align}
where $f$ is a function from $\mathbb{F}_p^n$ to $\mathbb{F}_p$.
 Then for the minimality of the linear code $C_f$ given in \eqref{eqcodeBooleanintro}, when $p=2$, by using the Walsh transform of Boolean functions, Ding et al. \cite{Ding-2018-IT} gave a complete characterization (that is, they found a necessary and sufficient condition such that $C_f$  is minimal). Based on that characterization, Ding et al. found three infinite families of minimal binary linear codes violating the AB condition. In \cite{Heng-2018-FFA}, the authors studied the minimality of $C_f$ for $p=3$, and presented an infinite family of minimal ternary linear codes violating the AB condition. In \cite{Bartoli-2019-IT}, the authors studied the minimality of $C_f$ for any odd prime $p$, and provided infinite families of minimal linear codes violating the AB condition.

Besides, under the construction of linear code $C_f$ defined as in \eqref{eqcodeBooleanintro}, many other minimal linear codes violating the AB condition were also discovered by selecting  different kinds of functions $f$ from $\mathbb{F}_p^n$ to $\mathbb{F}_p$. For instance, such linear codes were found from the functions $f$ with $V(f)=\{x\in\mathbb{F}_p^n:f(x)=0\}$ being a cutting blocking set \cite{Bonini-2020-JAC, Pasalic-2021-CCDS}; from  characteristic functions \cite{Sihem-2020-IT, taoran-2021-IT}; from Maiorana-McFarland functions \cite{Duxiaoni-2024-DCC, Rodriguez-2023-IT, Xu-2020-FFA, Zhangfengrong-2021-DCC}; from weakly regular plateaued/bent functions \cite{Sihem-2020-IT2, Xu-2021-CCDS}; and from the direct sum functions \cite{Rodriguez-2023-IT, Zhangfengrong-2022-DCC}.

In \cite{LYJetal-2023-IT}, the authors generalized the construction of linear codes $C_f$ given in \eqref{eqcodeBooleanintro} to the following construction:
\begin{align}\label{lietalcon}
C_F=\bigg\{c(\mu,\nu)=\big(\mu \cdot F(x)+\nu \cdot x\big)_{x\in\mathbb{F}_{p}^{n*}} : \mu\in\mathbb{F}_p^m,\nu\in\mathbb{F}_{p}^{n}\bigg\},
\end{align}
where $F$ is a function from $\mathbb{F}_p^n$ to $\mathbb{F}_p^m$, $m\geq2$. Then when $p=2$, by using the Walsh transform of vectorial Boolean functions,  they gave a necessary and sufficient condition such that $C_F$ is minimal and violates the AB condition. Based on that, they also obtained several infinite families of minimal binary linear codes with higher dimensions and  violating the AB condition. By choosing some two-to-one functions $F$ in Construction \eqref{lietalcon}, Mesnager et al. \cite{Sihem-2023-IT} also presented several infinite families of minimal binary linear codes with few weights and  flexible parameters. However, to the best of our knowledge, when $p\neq2$, no infinite family of minimal linear codes was found from Construction \eqref{lietalcon}.

In this paper, we study the case of $p=3$ for Construction \eqref{lietalcon}. We first completely determine the parameters (including the length, dimension and minimal distance) of the ternary linear codes $C_F$, and then present a necessary and sufficient condition such that $C_F$ is minimal by using the Walsh transform of the functions $F:\mathbb{F}_3^n\to\mathbb{F}_3^m$. Based on that, we find several minimal linear codes with three-weight and completely determine their weight distributions. Moreover, we also give a necessary and sufficient condition such that the ternary linear code $C_F$ is minimal and violates the AB condition simultaneously, by analyzing the properties of the function $F=(f,G)$, where $f$ is a function from $\mathbb{F}_3^n$ to $\mathbb{F}_3$ and $G$ is a function from $\mathbb{F}_3^n$ to $\mathbb{F}_3^{m-1}$.
According to this condition, we construct several  minimal ternary linear codes violating the AB condition. We emphasize that the minimal ternary codes violating the AB condition obtained in this paper are valid for any positive integers $n\geq 6$ and $m\geq2$ (without the restrictions for $n$ even and $2\leq m\leq \frac{n}{2}$ or $m=n+1$, which are restricted in \cite{LYJetal-2023-IT}), and hence one can use our method to  answer the question given in Remark 12 of \cite{LYJetal-2023-IT}.

The rest of this paper is organized as follows. In Section \ref{sec:Preliminaries}, we fix some notations, recall some  essential definitions, and introduce some necessary knowledge related to functions from $\mathbb{F}_p^n$ to $\mathbb{F}_p$ and linear codes. In Section \ref{sec:generic1}, we fully characterize the parameters and minimality of a large class of ternary linear codes. In Section \ref{sec:threeweight}, we present several minimal ternary linear codes with three-weight and completely determine their weight distributions. Section \ref{sec:mvab} gives a generic construction of minimal ternary linear codes violating the AB condition. Based on that, several minimal ternary linear codes violating the AB condition are constructed in Section \ref{sec:appli}. Finally, Section \ref{sec:conclusion} summarizes the main contributions of this paper.

\section{Preliminaries}\label{sec:Preliminaries}

 In this paper, $n$ and $m$ are two positive integers. For any set $E$, let $E^{*}=E \setminus\{0\}$ and let $\# E$ be the cardinality of $E$. For a complex number $\xi\in \mathbb{C}$, $|\xi|$ denotes the (complex) magnitude of $\xi$.
Let $p$ be a prime number, $\mathbb{F}_{p^n}$ be the finite field of $p^n$ elements, and
$\mathbb{F}_p^n$ be the $n$-dimensional linear space over $\mathbb{F}_p$. %The finite field $\mathbb{F}_{p^n}$ is always seen as the $n$-dimensional linear space $\mathbb{F}_p^n$, since every element $a\in\mathbb{F}_{p^n}$ can be represented as $a=a_1\alpha_1+a_2\alpha_2+\cdots+a_n\alpha_n$, where $a_i\in\mathbb{F}_p$ and $\{\alpha_1,\alpha_2,\ldots,\alpha_n\}$ is a basis of $\mathbb{F}_{p^n}$ over $\mathbb{F}_p$.

 %The \emph{trace function} ${\rm Tr}_r^n(x)$  is a function from $\mathbb{F}_{p^n}$ to $\mathbb{F}_{p^r}$, where $r|n$, defined as
%\begin{align*}
%{\rm Tr}_r^n(x)\!=\!x\!+\!x^{p^r}\!+\!x^{p^{2r}}\!+\!\cdots\!+\!x^{p^{n-r}}, \hspace{0.2cm} \forall~x\in\mathbb{F}_{p^n}.
%\end{align*}
%When $r=1$, the trace function ${\rm Tr}_1^n(x)$ is also called the {\it absolute trace function}. For any $r| k$, $k| n$, and any $x,y\in\mathbb{F}_{p^n}$, $\alpha\in\mathbb{F}_{p^r}$, it is easily seen that ${\rm Tr}_r^n(x)={\rm Tr}_r^n(x^{p^r})$, ${\rm Tr}_r^n(x+y)={\rm Tr}_r^n(x)+{\rm Tr}_r^n(y)$, ${\rm Tr}_r^n(\alpha x)=\alpha {\rm Tr}_r^n(x)$ and ${\rm Tr}_r^n(x)={\rm Tr}_r^k({\rm Tr}_k^n(x))$.

\subsection{Functions and Walsh transform over $\mathbb{F}_p^n$ }

Let $f$ be a function from $\mathbb{F}_p^n$ to $\mathbb{F}_p$. Then $f$ can be uniquely determined by the following polynomial (called the algebraic normal form (ANF) of $f$):
\begin{align*}
f(x_1,x_2,\ldots,x_n)=\sum_{(j_1,j_2,\ldots,j_n)\in \mathbb{F}_p^n}a_{(j_1,j_2,\ldots,j_n)}\prod_{i=1}^nx_i^{j_i}, \hspace{0.5cm}a_{(j_1,j_2,\ldots,j_n)}\in\mathbb{F}_p.
\end{align*}
The algebraic degree of $f$, denoted by $\deg(f)$, is defined as the maximum value $j_1+j_2+\cdots+j_n$ in the ANF of $f$ such that $a_{(j_1,j_2,\ldots,j_n)}\neq 0$. %If $f$ is a function from $\mathbb{F}_{p^n}$ to $\mathbb{F}_p$, then it can be represented uniquely by
%\begin{align*}
%f(x)=\sum_{j=1}^{p^n-1}a_jx^j,\hspace{0.5cm}a_j\in\mathbb{F}_{p^n}.
%\end{align*}
%In this case, the algebraic degree of $f$ is defined as the maximum weight of $j$ such that $a_j\neq0$, that is, the maximum value $j_1+j_2+\cdots+j_n$ such that $a_j\neq0$, where $j=\sum_{i=1}^nj_ip^{i}$.
 If $\deg(f)\leq 1$, then $f$ is called affine.

The {\it direct and inverse Walsh transform} of a function $f:\mathbb{F}_p^n\to \mathbb{F}_p$ at the point $\nu\in\mathbb{F}_p^n$ are defined, respectively, by
\begin{align*}
W_f(\nu)=\sum_{x\in \mathbb{F}_p^n}\zeta_p^{f(x)-\nu\cdot x} \hspace{0.3cm}{\rm and} \hspace{0.3cm}
\zeta^{f(\nu)}=\frac{1}{p^n}\sum_{x\in \mathbb{F}_p^n}W_f(x)\zeta_p^{\nu\cdot x},
\end{align*}
 where $\zeta_p=e^{2\pi \sqrt{-1}/p}$ denotes a primitive $p$-th complex root of unity, $\nu\cdot x$ is the standard inner product of $\nu$ and $x$ in $\mathbb{F}_p^n$. %(in the finite field $\mathbb{F}_{p^n}$, $\nu\cdot x={\rm Tr}_1^n(\nu x)$).
  The inverse Walsh transform shows that any function $f:\mathbb{F}_{p^n}\to\mathbb{F}_p$ can be recovered from its Walsh transforms. %The {\it Walsh support} of $f$ is defined as
% \begin{align*}
 %{\rm suppt}(W_f)=\{\nu\in\mathbb{F}_p^n:W_f(\nu)\neq0\}.
% \end{align*}

 By using the fact that $|\xi|^2=\xi \bar{\xi}$ for any complex number $\xi$, and
 \begin{align}\label{eqlinearwalsh}
 \sum_{x\in\mathbb{F}_p^n}\zeta_p^{\nu\cdot x}
 =\begin{cases}
 p^n,~~&{\rm if}~\nu={\bf 0_n},\\
 0,~~&{\rm otherwise},
 \end{cases}
 \end{align}
 where ${\bf 0_n}$ is the zero vector of $\mathbb{F}_p^n$, we can deduce the following result.

\begin{lemma}\label{Titsworth}
Let $f$ be a function from $\mathbb{F}_p^n$ to $\mathbb{F}_p$. Then it holds that
\begin{align*}
\sum_{\nu\in\mathbb{F}_p^n}W_f(\nu)\overline{W_f(\nu+\tau)}
=\begin{cases} p^{2n}, ~~&{\rm if}~\tau={\bf 0_n},\\
 0,~~&{\rm otherwise}.
\end{cases}
\end{align*}
\end{lemma}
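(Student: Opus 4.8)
The plan is to reduce everything to the orthogonality relation \eqref{eqlinearwalsh} by expanding both Walsh transforms as exponential sums over $\mathbb{F}_p^n$ and then interchanging the order of summation so that the sum over $\nu$ is performed first. Writing $W_f(\nu)=\sum_{x\in\mathbb{F}_p^n}\zeta_p^{f(x)-\nu\cdot x}$ and taking the complex conjugate of the second factor (using that $\overline{\zeta_p^{\,a}}=\zeta_p^{-a}$ for $a\in\mathbb{F}_p$), I obtain
\begin{align*}
\overline{W_f(\nu+\tau)}=\sum_{y\in\mathbb{F}_p^n}\zeta_p^{-f(y)+(\nu+\tau)\cdot y}.
\end{align*}
Multiplying the two sums gives a double sum over $(x,y)$, and collecting the terms involving $\nu$ yields
\begin{align*}
W_f(\nu)\overline{W_f(\nu+\tau)}=\sum_{x,y\in\mathbb{F}_p^n}\zeta_p^{\,f(x)-f(y)+\tau\cdot y}\,\zeta_p^{\,\nu\cdot(y-x)}.
\end{align*}

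Next I would sum this identity over all $\nu\in\mathbb{F}_p^n$ and swap the summation order, isolating the inner sum $\sum_{\nu\in\mathbb{F}_p^n}\zeta_p^{\nu\cdot(y-x)}$. By \eqref{eqlinearwalsh} this inner sum equals $p^n$ when $y=x$ and vanishes otherwise, so only the diagonal terms $y=x$ survive. On the diagonal the exponent $f(x)-f(y)+\tau\cdot y$ collapses to $\tau\cdot x$, leaving
\begin{align*}
\sum_{\nu\in\mathbb{F}_p^n}W_f(\nu)\overline{W_f(\nu+\tau)}=p^n\sum_{x\in\mathbb{F}_p^n}\zeta_p^{\,\tau\cdot x}.
\end{align*}

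Finally I would apply \eqref{eqlinearwalsh} a second time to the remaining sum $\sum_{x\in\mathbb{F}_p^n}\zeta_p^{\tau\cdot x}$, which equals $p^n$ precisely when $\tau={\bf 0_n}$ and is $0$ otherwise; multiplying by the factor $p^n$ gives $p^{2n}$ in the case $\tau={\bf 0_n}$ and $0$ in all other cases, as claimed. There is no serious obstacle here: the argument is a routine two-step application of the orthogonality of additive characters, and the only point requiring care is the correct handling of the complex conjugate and the sign of the exponent of $\zeta_p$ when forming $\overline{W_f(\nu+\tau)}$, since a sign slip there would misalign the $\nu$-dependent exponents and destroy the collapse to the diagonal.
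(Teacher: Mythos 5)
Your argument is correct and is exactly the computation the paper has in mind: it sketches the same deduction from the orthogonality relation \eqref{eqlinearwalsh} (together with $|\xi|^2=\xi\bar\xi$) without writing out the double-sum expansion that you supply. The diagonal collapse and the second application of \eqref{eqlinearwalsh} are handled correctly, including the sign in the conjugated exponent.
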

Lemma \ref{Titsworth} is the so-called {\it Titsworth's theorem} when $p=2$. The case $\tau={\bf 0_n}$ of Lemma \ref{Titsworth}, that is,
$\sum_{\nu\in\mathbb{F}_p^n}|W_f(\nu)|^2=p^{2n}$,
 is the well-known {\it Parseval's identity}. It implies that $\max_{\nu\in\mathbb{F}_p^n}|W_f(\nu)|\geq p^{n/2}$. The equality holds if and only if $|W_f(\nu)|=p^{n/2}$ for all $\nu\in\mathbb{F}_p^n$. In this case, $f$ is called {\it bent}.

Note that when $p=2$, one has $\overline{W_f(\nu+\tau)}=W_f(\nu+\tau)$. Thus, by Lemma \ref{Titsworth}, $f$ is affine if $W_f(\nu)\leq 0$ (or $\geq0$) for any $\nu\in\mathbb{F}_p^n$ (see also Lemma 1 of \cite{LYJetal-2023-IT}). When $p> 2$, we conjecture that $f$ is affine if $Re(W_f(\nu))\leq 0$ (or $\geq0$) for any $\nu\in\mathbb{F}_p^n$, where $Re(W_f(\nu))$ denotes the real part of the complex number $W_f(\nu)$. But we cannot prove this result, and so we invite the interested readers to attack.

\begin{definition}\label{defs-plateaued}
Let $s$  be an integer with $0\leq s\leq n$. A function $f$ from $\mathbb{F}_p^n$ to $\mathbb{F}_p$ is called {\it $s$-plateaued} if $|W_f(\nu)|^2\in\{0,p^{n+s}\}$ for all $\nu\in\mathbb{F}_p^n$.
\end{definition}
According to Definition \ref{defs-plateaued}, it is easily seen that any bent function is $0$-plateaued.  %For any $s$-plateaued function $f:\mathbb{F}_p^n\to\mathbb{F}_p$, it is easily obtain from Parseval's identity that
%\begin{align}\label{eqsupppl}
%\#{\rm suppt}(f)=p^{n-s}.
%\end{align}

 In \cite[Theorem 2]{Hyun-2016-IT}, Mesnager et al. \cite{Sihem-2019} introduced the following definition of {\it weakly regular} plateaued functions.

\begin{definition}\label{defregular-pla}
Let $f:\mathbb{F}_p^n\to\mathbb{F}_p$ be an $s$-plateaued function. If there is a complex number $\epsilon\in\{\pm1,\pm\sqrt{-1}\}$ such that $W_f(\nu)\in\{0,\epsilon p^{\frac{n+s} {2}} \zeta_p^{g(\nu)}\}$ for all $\nu\in\mathbb{F}_p^n$, then $f$ is called {\it weakly regular $s$-plateaued}, where $g$ is a function from $\mathbb{F}_p^n$ to $\mathbb{F}_p$ with $g(x)=0$ for any $x\in\mathbb{F}_p^n\backslash S_f:=\{\nu\in\mathbb{F}_p^n:W_f(\nu)\neq 0\}$. In particular, $f$ is said to be {\it regular} if $\epsilon=1$.
\end{definition}

%\begin{lemma}\label{lemwalsh<0}
%Let $f$ be a function from $\mathbb{F}_p^n$ to $\mathbb{F}_p$. Then $Re(W_f(\nu))\leq 0~({\rm or}\geq 0)$ for any $\nu\in \mathbb{F}_2^n$ if and only if $f$ is affine with $Re(\zeta_p^{f({\bf0_n})})\leq 0~({\rm or}\geq 0)$, where $Re(\xi)$ denotes the real part of the complex number $\xi$.
%\end{lemma}

 A mapping $F$ from $\mathbb{F}_p^n$ to $\mathbb{F}_p^m$ is called an {\it$(n,m)$-function}.  The function $\mu\cdot F$ is  called the {\it component} of $F$ in terms of $\mu\in\mathbb{F}_p^m$. The {\it algebraic degree} of $F$ is defined to be the largest algebraic degree of its components.
 The {\it Walsh transform} of $F$ at the point $(\mu,\nu)\in\mathbb{F}_p^m\times \mathbb{F}_p^n$ is defined by the Walsh transform of $\mu\cdot F$ at the point $\nu\in\mathbb{F}_p^n$, that is,
\begin{align*}
W_F(\mu,\nu)=\sum_{x\in\mathbb{F}_p^n}\zeta_p^{\mu\cdot F(x)-\nu\cdot x}.
\end{align*}

\begin{definition}\label{defvecpla}
An $(n,m)$-function $F$ is said to be vectorial plateaued if $\mu\cdot F$ is plateaued for all $\mu\in\mathbb{F}_p^{m*}$, and $F$ is said to be vectorial $s$-plateaued if $\mu\cdot F$ is $s$-plateaued for all $\mu\in\mathbb{F}_p^{m*}$. In particular, $F$ is called vectorial (weakly) regular $s$-plateaued if $\mu\cdot F$ is (weakly) regular $s$-plateaued for all $\mu\in\mathbb{F}_p^{m*}$.
\end{definition}

When $s=0$, vectorial $0$-plateaued functions $F:\mathbb{F}_p^n\to\mathbb{F}_p^m$ are also called {\it vectorial bent} functions, which exist for $m\leq n$, and $F$ is called a {\it planar} function if $n=m$. It is shown in \cite{Nyberg} that vectorial regular bent functions only exist for $m\leq\frac{n}{2}$.

\subsection{Minimal linear codes}

Let $c=(c_1,c_2,\ldots,c_n)\in\mathbb{F}_p^n$. The set ${\rm suppt}(c):=\{1\leq i\leq n:c_i\in\mathbb{F}_p^*\}$ is called the {\it support} of $c$, whose cardinality is called the {\it (Hamming) weight} of $c$, denoted by ${\rm wt}(c)$, i.e., ${\rm wt}(c)=\#\{1\leq i\leq n:c_i\in \mathbb{F}_p^*\}$. A vector $c^{(1)}\in\mathbb{F}_p^n$ is {\it covered} by a vector $c^{(2)}\in\mathbb{F}_p^n$ if ${\rm suppt}(c^{(1)})\subseteq{\rm suppt}(c^{(2)})$. We denote it by $c^{(1)}\preceq c^{(2)}$. %The {\it(Hamming) distance} of $c^{(1)},c^{(2)}\in\mathbb{F}_q^n$, denoted by $d(c^{(1)},c^{(2)})$, is defined as the weight of $c^{(1)}-c^{(2)}$.

%\begin{lemma}\cite[Lemma 9]{Ding-2018-IT}\label{lemding}
%Let $c^{(1)}\in\mathbb{F}_{2}^{n}$ and $c^{(2)}\in\mathbb{F}_{2}^{n}$. Then $c^{(1)}\preceq c^{(2)}$ if and only if
% ${\rm wt}(c^{(1)}+c^{(2)})={\rm wt}(c^{(2)})-{\rm wt}(c^{(1)})$.
%\end{lemma}

A code $C\subseteq \mathbb{F}_p^n$ is called an $[n,k,d]$ linear code if it is a $k$-dimensional linear subspace of $\mathbb{F}_p^n$ with minimum nonzero Hamming weight $d$, where $n$, $k$ and $d$ are called the {\it length, dimension and minimal distance} of $C$, respectively. The {\it weight enumerator} of $C$ is defined by
$1+A_1z+A_2z^2+\cdots+A_nz^n$, where $A_i$ is the number of codewords in $C$ whose weight is $i$. A code $C$ is said to be a {\it$t$-weight} code if the number of nonzero $A_i$ in the weight enumerator of $C$ is equal to $t$.

 A codeword $c\in C$ is called {\it minimal} if it only covers $\lambda c$ for any $\lambda\in\mathbb{F}_p$. A linear code $C$ is said to be minimal if every codeword in $C$ is minimal. The following result is a sufficient condition for a linear code to be minimal.

\begin{lemma}\cite{AB-condition-1998}\label{lemABcondition}
Let $C\subseteq \mathbb{F}_p^n$ be a linear code. Let $w_{\max}$ and $w_{\min}$ be the maximum and minimum  nonzero  Hamming weights in $C$, respectively. If
\begin{align}\label{eqAB-condition}
\frac{w_{\min}}{w_{\max}}> \frac{p-1}{p},
\end{align}
then $C$ is minimal.
\end{lemma}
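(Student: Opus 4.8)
The plan is to prove the statement by contradiction, extracting from a failure of minimality a numerical inequality that contradicts \eqref{eqAB-condition}. Suppose $C$ is not minimal. Then there exist codewords $a,b\in C$ such that $b$ covers $a$, i.e. $a\preceq b$ (equivalently ${\rm suppt}(a)\subseteq{\rm suppt}(b)$), yet $a\neq\lambda b$ for every $\lambda\in\mathbb{F}_p$. Both codewords must be nonzero: if $a=\mathbf{0}$ then $a=0\cdot b$, and if $b=\mathbf{0}$ the covering condition forces ${\rm suppt}(a)=\emptyset$, hence $a=\mathbf{0}$ as well; either case contradicts the choice of $a,b$. The goal is to turn this configuration into a bound relating $w_{\min}$ and $w_{\max}$.

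The central step is a weight-counting identity for the family of codewords $\{a-\lambda b:\lambda\in\mathbb{F}_p\}$. I would evaluate $\sum_{\lambda\in\mathbb{F}_p}{\rm wt}(a-\lambda b)$ coordinate by coordinate. At a coordinate $i\notin{\rm suppt}(b)$ we have $b_i=0$, and the covering hypothesis ${\rm suppt}(a)\subseteq{\rm suppt}(b)$ gives $a_i=0$ too, so $(a-\lambda b)_i=0$ for all $\lambda$ and this coordinate contributes nothing. At a coordinate $i\in{\rm suppt}(b)$ we have $b_i\neq 0$, so $a_i-\lambda b_i=0$ for exactly one value $\lambda=a_ib_i^{-1}$ and is nonzero for the remaining $p-1$ values of $\lambda$; hence this coordinate contributes exactly $p-1$. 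Summing over all coordinates yields $\sum_{\lambda\in\mathbb{F}_p}{\rm wt}(a-\lambda b)=(p-1)\,{\rm wt}(b)$. This is precisely where the covering relation $a\preceq b$ is used, and I expect the coordinatewise bookkeeping here to be the only genuinely delicate point of the argument.

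To finish, I would invoke the assumption that $a$ is not a scalar multiple of $b$: it guarantees $a-\lambda b\neq\mathbf{0}$ for all $p$ values of $\lambda$, so every term on the left is a nonzero codeword of weight at least $w_{\min}$, giving $\sum_{\lambda\in\mathbb{F}_p}{\rm wt}(a-\lambda b)\geq p\,w_{\min}$. Combining this with the identity above and with ${\rm wt}(b)\leq w_{\max}$ produces $p\,w_{\min}\leq(p-1)\,{\rm wt}(b)\leq(p-1)\,w_{\max}$, that is $\frac{w_{\min}}{w_{\max}}\leq\frac{p-1}{p}$. This directly contradicts the hypothesis \eqref{eqAB-condition}. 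Therefore no such pair $(a,b)$ can exist, every codeword of $C$ is minimal, and $C$ is minimal. The remainder after the counting identity is just this short chain of inequalities, so the whole weight of the proof rests on getting the coordinatewise count right.
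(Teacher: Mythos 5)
Your proof is correct: the coordinatewise count $\sum_{\lambda\in\mathbb{F}_p}{\rm wt}(a-\lambda b)=(p-1)\,{\rm wt}(b)$ under the covering hypothesis, combined with the lower bound $p\,w_{\min}$ from the non-proportionality of $a$ and $b$, is exactly the standard Ashikhmin--Barg argument. The paper itself states this lemma only as a citation to \cite{AB-condition-1998} without reproducing a proof, and your argument matches the one given there.
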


Condition \eqref{eqAB-condition} is called the {\it AB condition} in this paper, which is only a sufficient (but not necessary) condition for a linear code to be minimal.  The following result provides a necessary and sufficient condition for a linear code to be minimal.

\begin{theorem}\cite{Heng-2018-FFA}\label{thhengFFA}
Let $C\subseteq \mathbb{F}_p^n$ be a linear code. Then $C$ is minimal if and only if
 \begin{align}\label{eqthheng}
 \sum_{\lambda\in\mathbb{F}_p^{n*}}{\rm wt}(\lambda c^{(1)}+c^{(2)})\neq (p-1){\rm wt}(c^{(2)})-{\rm wt}(c^{(1)}),
 \end{align}
 for any $\mathbb{F}_p$-linear independent codewords $c^{(1)},c^{(2)}\in C$.
\end{theorem}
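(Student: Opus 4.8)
The plan is to translate the combinatorial notion of minimality into the arithmetic identity \eqref{eqthheng} by a direct coordinate-by-coordinate count, reading the sum over $\lambda$ as ranging over the nonzero scalars $\mathbb{F}_p^*$. First I would reformulate minimality in terms of covering. By definition $C$ is minimal precisely when no codeword covers a codeword outside its own scalar line; taking the contrapositive, $C$ fails to be minimal if and only if there exist codewords $c^{(1)},c^{(2)}$ with ${\rm suppt}(c^{(1)})\subseteq{\rm suppt}(c^{(2)})$ yet $c^{(1)}\notin\{\lambda c^{(2)}:\lambda\in\mathbb{F}_p\}$. Here one checks that such a pair is necessarily $\mathbb{F}_p$-linearly independent: both codewords must be nonzero (a zero vector forces any covered/covering vector to be a scalar multiple), and $c^{(1)}$ lying off the line of $c^{(2)}$ is exactly linear independence. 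Thus the whole statement reduces to showing that, for an $\mathbb{F}_p$-linearly independent pair, the covering relation $c^{(1)}\preceq c^{(2)}$ is equivalent to equality in \eqref{eqthheng}.

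Next I would evaluate the left-hand sum. Fixing linearly independent $c^{(1)},c^{(2)}$, I split the $n$ coordinates into four classes according to whether the pair $(c^{(1)}_i,c^{(2)}_i)$ equals $(0,0)$, $(0,\ast)$, $(\ast,0)$, or $(\ast,\ast)$, with $\ast$ denoting a nonzero entry; let $n_{01},n_{10},n_{11}$ count the last three classes. Since ${\rm wt}(\lambda c^{(1)}+c^{(2)})$ sums the indicator that $\lambda c^{(1)}_i+c^{(2)}_i\neq 0$ over $i$, I would count, for each class, how many $\lambda\in\mathbb{F}_p^*$ leave the $i$-th entry nonzero. A coordinate with $c^{(1)}_i=c^{(2)}_i=0$ never contributes; a coordinate with exactly one nonzero entry contributes for all $p-1$ values of $\lambda$; and the crucial case is when both entries are nonzero, where $\lambda c^{(1)}_i+c^{(2)}_i=0$ holds for exactly one $\lambda\in\mathbb{F}_p^*$, so such a coordinate contributes $p-2$ times. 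Summing gives $\sum_{\lambda\in\mathbb{F}_p^*}{\rm wt}(\lambda c^{(1)}+c^{(2)})=(p-1)(n_{01}+n_{10})+(p-2)n_{11}$, while ${\rm wt}(c^{(1)})=n_{10}+n_{11}$ and ${\rm wt}(c^{(2)})=n_{01}+n_{11}$.

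Finally I would compare this with the right-hand side $(p-1){\rm wt}(c^{(2)})-{\rm wt}(c^{(1)})=(p-1)n_{01}+(p-2)n_{11}-n_{10}$. Subtracting, the $n_{01}$ and $n_{11}$ terms cancel and the identity \eqref{eqthheng} becomes $(p-1)n_{10}=-n_{10}$, i.e. $p\,n_{10}=0$, which over the integers forces $n_{10}=0$. Since $n_{10}$ counts exactly the coordinates where $c^{(1)}$ is nonzero while $c^{(2)}$ vanishes, $n_{10}=0$ is precisely ${\rm suppt}(c^{(1)})\subseteq{\rm suppt}(c^{(2)})$, that is, $c^{(1)}\preceq c^{(2)}$. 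Combining with the first paragraph, equality in \eqref{eqthheng} for some independent pair is equivalent to the existence of a covering pair, which is equivalent to non-minimality; hence the strict inequality holding for every $\mathbb{F}_p$-linearly independent pair characterizes minimality. I expect the counting to be routine; the only genuinely delicate points are the bookkeeping in the ``both nonzero'' case that yields the $p-2$ coefficient, and the care needed in the reformulation to guarantee that a covering-but-not-proportional pair is always linearly independent, so that the quantifier ``for any $\mathbb{F}_p$-linearly independent codewords'' matches the covering condition exactly.
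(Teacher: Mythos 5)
Your proof is correct: the coordinate-by-coordinate count giving $(p-1)(n_{01}+n_{10})+(p-2)n_{11}$ for the left-hand side, the reduction of the equality to $p\,n_{10}=0$, and the observation that a covering-but-not-proportional pair is automatically $\mathbb{F}_p$-linearly independent together yield exactly the stated equivalence (you also correctly read the index set $\mathbb{F}_p^{n*}$ as the typo it is for $\mathbb{F}_p^{*}$, consistent with the paper's $p=3$ specialization in Remark~1). The paper itself gives no proof of this theorem --- it is quoted from \cite{Heng-2018-FFA} --- and your argument is essentially the standard counting proof found in that reference, so there is nothing further to reconcile.
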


\begin{remark}\label{rmkq=3}
%When $p=2$, Theorem \ref{thhengFFA} is reduced to that $C\subseteq \mathbb{F}_2^n$ is minimal if and only if
%\begin{align*}
 %{\rm wt}(c^{(1)}+c^{(2)})\neq {\rm wt}(c^{(2)})-{\rm wt}(c^{(1)}),
 %\end{align*}
% for any $c^{(1)},c^{(2)}, c^{(1)}+c^{(2)}\in C^*$, which is a result of \cite{Ding-2018-IT}.
 When $p=3$, Theorem \ref{thhengFFA} is reduced to that $C\subseteq \mathbb{F}_3^n$ is minimal if and only if
\begin{align}\label{eqhengq=3}
 {\rm wt}(c^{(1)}+c^{(2)})+{\rm wt}(-c^{(1)}+c^{(2)})\neq 2{\rm wt}(c^{(2)})-{\rm wt}(c^{(1)}),
 \end{align}
 for any $c^{(1)},c^{(2)}, c^{(1)}+c^{(2)},c^{(1)}-c^{(2)}\in C^*$.
\end{remark}

\section{A generic construction of minimal ternary  linear codes}\label{sec:generic1}

In this section, we give a generic construction of ternary linear codes from vectorial functions, and determine its parameters and minimality. %The construction is given as follows.

\begin{construction}\label{ourcon}
Let $F$ be a vectorial function from $\mathbb{F}_3^n$ to $\mathbb{F}_3^m$. Define a linear code as follows:
\begin{align*}
C_F=\bigg\{c(\mu,\nu)=\big(\mu\cdot F(x)+\nu\cdot x\big)_{x\in\mathbb{F}_{3}^{n*}} : \mu\in\mathbb{F}_3^m,\nu\in\mathbb{F}_3^n\bigg\}.
\end{align*}
\end{construction}

\subsection{The parameters of $C_F$}

We first determine the length and dimension of $C_F$ in the following proposition.

\begin{proposition}\label{propdim}
Let $F$ be a function from $\mathbb{F}_3^n$ to $\mathbb{F}_3^m$ such that $\mu\cdot F(x)$ is not linear for any $\mu\in \mathbb{F}_3^{m*}$. Then the linear code $C_F$ generated by Construction \ref{ourcon} has length $3^n-1$ and dimension $n+m$.
\end{proposition}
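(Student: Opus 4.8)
The length is immediate and I would dispatch it first: the coordinates of each codeword $c(\mu,\nu)$ are indexed by $x\in\mathbb{F}_3^{n*}$, so the length is $\#\mathbb{F}_3^{n*}=3^n-1$. For the dimension, the plan is to view Construction \ref{ourcon} as the image of the $\mathbb{F}_3$-linear map
\[
\Phi:\mathbb{F}_3^m\times\mathbb{F}_3^n\longrightarrow\mathbb{F}_3^{\,3^n-1},\qquad (\mu,\nu)\longmapsto c(\mu,\nu),
\]
whose source has dimension $n+m$; linearity of $\Phi$ follows from the bilinearity of the inner products appearing in $\mu\cdot F(x)+\nu\cdot x$. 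Since $\dim_{\mathbb{F}_3}C_F=(n+m)-\dim\ker\Phi$, it suffices to prove that $\Phi$ is injective, i.e. $\ker\Phi=\{(\mathbf 0,\mathbf 0)\}$.

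So I would assume $c(\mu,\nu)=\mathbf 0$, which means $\mu\cdot F(x)+\nu\cdot x=0$ for every $x\in\mathbb{F}_3^{n*}$, and split into two cases. If $\mu=\mathbf 0$, then $\nu\cdot x=0$ for all $x\neq\mathbf 0$; evaluating at the standard basis vectors $e_1,\dots,e_n$ (each nonzero) forces every coordinate of $\nu$ to vanish, so $\nu=\mathbf 0$ (one may alternatively invoke the orthogonality relation \eqref{eqlinearwalsh}). If $\mu\neq\mathbf 0$, the identity reads $\mu\cdot F(x)=-\nu\cdot x$ for all $x\in\mathbb{F}_3^{n*}$, i.e. the component $\mu\cdot F$ agrees with the linear form $x\mapsto-\nu\cdot x$ on the whole punctured space. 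Promoting this agreement to an equality on all of $\mathbb{F}_3^n$ shows $\mu\cdot F$ is linear, contradicting the hypothesis that $\mu\cdot F$ is not linear for any $\mu\in\mathbb{F}_3^{m*}$. This contradiction gives $\ker\Phi=\{(\mathbf 0,\mathbf 0)\}$, hence $\dim C_F=n+m$.

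The step I expect to require the most care is precisely the promotion just invoked, since it is the only place where the exclusion of the coordinate $x=\mathbf 0$ from the code matters. A priori $\mu\cdot F$ and $-\nu\cdot x$ are only known to coincide on $\mathbb{F}_3^{n*}$ and could still disagree at the origin: a function equal to $-\nu\cdot x$ on $\mathbb{F}_3^{n*}$ but with $\mu\cdot F(\mathbf 0)=c\neq 0$ equals $-\nu\cdot x+c\prod_{i=1}^{n}(1-x_i^2)$, which has algebraic degree $2n$ and is therefore non-linear while still producing the zero codeword. Hence to close the argument one must know that the two functions also agree at $x=\mathbf 0$, i.e. that $\mu\cdot F(\mathbf 0)=0$; this is guaranteed by the standing normalization $F(\mathbf 0)=\mathbf 0$ (so $\mu\cdot F(\mathbf 0)=\mu\cdot\mathbf 0=0$), and with it in force the equality on $\mathbb{F}_3^{n*}$ upgrades to an equality on all of $\mathbb{F}_3^n$ and the contradiction with non-linearity is genuine. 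I would therefore make this normalization explicit, as it is exactly what makes the hypothesis ``$\mu\cdot F$ not linear'' force $\ker\Phi$ to be trivial.
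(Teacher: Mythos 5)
Your proof is correct, and it is essentially the only reasonable argument here: the paper itself omits the proof of Proposition \ref{propdim}, deferring to the binary analogue in \cite{LYJetal-2023-IT}, and the content of that omitted proof is exactly your kernel computation for the evaluation map $\Phi$ (length $=\#\mathbb{F}_3^{n*}=3^n-1$, dimension $=n+m-\dim\ker\Phi$, kernel trivial). The one place where you go beyond the paper is your discussion of the excluded coordinate $x=\mathbf{0_n}$, and it is a genuine catch rather than pedantry: as literally stated, Proposition \ref{propdim} does not include the hypothesis $F(\mathbf{0_n})=\mathbf{0_m}$, and without it the conclusion can fail. For instance, take $n=m=1$ and $F(x)=1-x^2$, so $F(0)=1$ and $F(1)=F(2)=0$; then $\mu F(x)=\mu(1-x^2)$ is not linear for either $\mu\in\mathbb{F}_3^{*}$, yet $c(1,0)=(F(1),F(2))=(0,0)$ is the zero codeword, so $\ker\Phi$ is nontrivial and $\dim C_F=1<n+m$. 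The normalization $F(\mathbf{0_n})=\mathbf{0_m}$ is imposed in Proposition \ref{weightdistri}, Theorem \ref{thminimality}, and everywhere the proposition is subsequently used, so it should be read as a standing hypothesis here as well; with it in force, your promotion of the identity $\mu\cdot F(x)=-\nu\cdot x$ from $\mathbb{F}_3^{n*}$ to all of $\mathbb{F}_3^{n}$ is valid, the contradiction with non-linearity is genuine, and the proof is complete.
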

\begin{proof}
The proof is similar as that of \cite[Proposition 1]{LYJetal-2023-IT}, we omit it here.
\end{proof}

We now characterize the weight distribution of $C_F$.

\begin{proposition}\label{weightdistri}
Let $F$ be a function from $\mathbb{F}_{3}^n$ to $\mathbb{F}_{3}^m$ with $F({\bf 0_n})={\bf 0_m}$. Let $C_F$ be the  ternary linear code generated by Construction \ref{ourcon}. Then for any $c(\mu,\nu)\in C_F$, it holds that
\begin{align*}
{\rm wt}(c(\mu,\nu))=
\begin{cases}
0, \hspace{0.3cm}&{\rm if}~\mu={\bf0_m},\nu={\bf0_n},\\
3^n-3^{n-1}, &{\rm if}~\mu={\bf0_m},\nu\in\mathbb{F}_3^{n*},\\
3^n-3^{n-1}- \frac{2}{3}Re(W_F(\mu,-\nu)), &{\rm otherwise,}
\end{cases}
\end{align*}
where $Re(\xi)$ denotes the real part of the complex number $\xi$.
\end{proposition}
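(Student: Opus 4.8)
The goal is to compute the Hamming weight of a generic codeword $c(\mu,\nu)$ in terms of the Walsh transform of $F$. The plan is to express the weight through the number of zero coordinates and then convert that count into an exponential sum using the orthogonality relation \eqref{eqlinearwalsh}.

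\medskip

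\noindent\textbf{Plan of proof.} First I would dispose of the two easy cases. When $\mu={\bf 0_m}$ and $\nu={\bf 0_n}$, every coordinate $\mu\cdot F(x)+\nu\cdot x$ vanishes, so the weight is $0$. When $\mu={\bf 0_m}$ but $\nu\in\mathbb{F}_3^{n*}$, the codeword is the linear functional $(\nu\cdot x)_{x\in\mathbb{F}_3^{n*}}$, which is a nonzero affine (in fact linear) function; a standard count shows a nonzero linear functional on $\mathbb{F}_3^n$ takes each value in $\mathbb{F}_3$ exactly $3^{n-1}$ times, so it is zero on exactly $3^{n-1}$ points including $x={\bf 0_n}$, hence nonzero on $3^n-3^{n-1}$ of the points $x\in\mathbb{F}_3^{n*}$, giving the stated weight $3^n-3^{n-1}$.

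\medskip

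For the main case ($\mu\neq{\bf 0_m}$, or equivalently the generic entry), let $g(x):=\mu\cdot F(x)+\nu\cdot x$. The weight equals $(3^n-1)$ minus the number of $x\in\mathbb{F}_3^{n*}$ with $g(x)=0$. I would first count zeros over all of $\mathbb{F}_3^n$ using the indicator
\begin{align*}
\#\{x\in\mathbb{F}_3^n:g(x)=0\}=\frac{1}{3}\sum_{x\in\mathbb{F}_3^n}\sum_{t\in\mathbb{F}_3}\zeta_3^{t\,g(x)}.
\end{align*}
The $t=0$ term contributes $3^n$. For $t\in\mathbb{F}_3^*$, I would rewrite the inner sum as a Walsh transform: since $t\,g(x)=(t\mu)\cdot F(x)-(-t\nu)\cdot x$, the sum over $x$ equals $W_F(t\mu,-t\nu)$. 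Because the codeword only ranges over $x\in\mathbb{F}_3^{n*}$ and we have $F({\bf 0_n})={\bf 0_m}$ so $g({\bf 0_n})=0$ is always a zero, the number of zero coordinates of the codeword is $\#\{x\in\mathbb{F}_3^n:g(x)=0\}-1$, and thus
\begin{align*}
{\rm wt}(c(\mu,\nu))=3^n-\#\{x\in\mathbb{F}_3^n:g(x)=0\}
=3^n-3^{n-1}-\frac{1}{3}\sum_{t\in\mathbb{F}_3^*}W_F(t\mu,-t\nu).
\end{align*}

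\medskip

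The remaining step is to simplify $\frac{1}{3}\sum_{t\in\mathbb{F}_3^*}W_F(t\mu,-t\nu)$ into $\frac{2}{3}Re(W_F(\mu,-\nu))$. Here $t$ ranges over $\{1,-1\}$; the $t=1$ term is $W_F(\mu,-\nu)$, and the $t=-1$ term is $W_F(-\mu,\nu)$. The key observation is that $W_F(-\mu,\nu)=\overline{W_F(\mu,-\nu)}$: indeed $W_F(-\mu,\nu)=\sum_x\zeta_3^{-\mu\cdot F(x)-\nu\cdot x}=\sum_x\zeta_3^{-(\mu\cdot F(x)-(-\nu)\cdot x)}=\overline{W_F(\mu,-\nu)}$, since $\zeta_3^{-a}=\overline{\zeta_3^{\,a}}$. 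Therefore the two terms are complex conjugates, their sum is $2Re(W_F(\mu,-\nu))$, and dividing by $3$ yields the claimed formula. I expect this conjugation identity over $p=3$ to be the most delicate point to state cleanly, since it relies on $\{1,-1\}=\mathbb{F}_3^*$ and on the conjugation behaviour of $\zeta_3$; for general odd $p$ the sum over $t\in\mathbb{F}_p^*$ would pair up into conjugate pairs analogously, but the $p=3$ case collapses neatly to a single real-part term.
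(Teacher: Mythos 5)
Your proposal is correct and follows essentially the same route as the paper: both count the zeros of $\mu\cdot F(x)+\nu\cdot x$ via the orthogonality relation $\frac{1}{3}\sum_{t\in\mathbb{F}_3}\zeta_3^{tg(x)}$ and collapse the two nonzero $t$-terms into $2Re(W_F(\mu,-\nu))$ using $\zeta_3^{-a}=\overline{\zeta_3^{a}}$ (the paper conjugates termwise inside the sum over $x$, you conjugate at the level of the whole Walsh transform, which is the same identity). The treatment of the degenerate cases and the bookkeeping of the coordinate $x={\bf 0_n}$ also match.
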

\begin{proof}
For any $(\mu,\nu)\in\mathbb{F}_{3}^m\times\mathbb{F}_3^n$, the weight of $c(\mu,\nu)\in C_F$ is that
\begin{align*}
{\rm wt}(c(\mu,\nu))=\#\{x\in \mathbb{F}_3^{n*}:\mu\cdot F(x)+\nu\cdot x\neq0\}.
\end{align*}
Since $F({\bf 0_n})={\bf 0_m}$, we have
\begin{align*}
{\rm wt}(c(\mu,\nu))=3^n-T,
\end{align*}
where $T=\#\{x\in\mathbb{F}_3^n:\mu\cdot F(x)+\nu\cdot x=0\}$. Note that
\begin{align*}
T=&\frac{1}{3}\sum_{x\in\mathbb{F}_{3}^n}\sum_{y\in\mathbb{F}_3}\zeta_3^{(\mu\cdot F(x)+\nu\cdot x)y}\\
=&\frac{1}{3}\sum_{x\in\mathbb{F}_{3}^n}\bigg(1+\zeta_3^{\mu\cdot F(x)+\nu\cdot x}+\zeta_3^{-(\mu\cdot F(x)+\nu\cdot x)}\bigg)\\
=&3^{n-1}+\frac{2}{3}\sum_{x\in\mathbb{F}_{3}^n}Re(\zeta_3^{\mu\cdot F(x)+\nu\cdot x}).
\end{align*}
Thus we have
\begin{align*}
{\rm wt}(c(\mu,\nu))=3^n-3^{n-1}-\frac{2}{3}Re\bigg(\sum_{x\in\mathbb{F}_{3}^n}\zeta_3^{\mu\cdot F(x)+\nu\cdot x}\bigg),
\end{align*}
which is equal to $0$ if $\mu={\bf 0_m}, \nu={\bf 0_n}$, equal to $3^n-3^{n-1}$ if $\mu={\bf 0_m}, \nu\neq{\bf 0_n}$, and equal to $3^n-3^{n-1}-\frac{2}{3}Re(W_F(\mu,-\nu))$ otherwise.
\end{proof}

\begin{remark}\label{rmkdistance}
Let $F$ be a function from $\mathbb{F}_3^n$ to $\mathbb{F}_3^m$ with $F({\bf0_n})={\bf0_m}$ and $\max_{\mu\in\mathbb{F}_3^{m*},\nu\in\mathbb{F}_3^n} Re\big(W_F(\mu,\nu)\big)$ $\geq0$. Then by Proposition \ref{weightdistri}, the minimal distance $d$ of the code $C_F$ in Construction \ref{ourcon} is that
\begin{align}\label{nonlinearity}
d=3^n-3^{n-1}-\frac{2}{3}\max_{\mu\in\mathbb{F}_3^{m*},\nu\in\mathbb{F}_3^n} Re\big(W_F(\mu,\nu)\big).
\end{align}
\end{remark}

\begin{remark}\label{rmk3-weight}
When $F$ is a function from $\mathbb{F}_3^n$ to $\mathbb{F}_3^m$ such that $F({\bf0_n})={\bf0_m}$ and $Re(W_F(\mu,\nu))\in\{0,N_1, N_2\}$ for any $(\mu,\nu)\in\mathbb{F}_3^{m*}\times\mathbb{F}_{3}^n$, then by Proposition \ref{weightdistri}, the linear code $C_F$ in Construction \ref{ourcon} is three-weight, whose weights are $3^{n}-3^{n-1}$, $3^{n}-3^{n-1}-\frac{2}{3}N_1$ and $3^{n}-3^{n-1}-\frac{2}{3}N_2$, respectively. In particular, if $F$ is a vectorial regular $s$-plateaued function, then $C_F$ is three-weight, since in this case $Re(W_F(\mu,\nu))\in\{0, 3^{(n+s)/2}, -\frac{3^{(n+s)/2}}{2}\}$ for any $(\mu,\nu)\in\mathbb{F}_3^{m*}\times\mathbb{F}_3^n$.  %$W_F(\mu,\nu)\in\{0, 3^{(n+s)/2}\zeta_3^{i}\}$ (where $i\in\mathbb{F}_3$) and $Re(\zeta_3^{\pm1})=-\frac{1}{2}$.
\end{remark}

\begin{remark}\label{rmkABcondition}
Let $F$ be a function from $\mathbb{F}_{3}^n$ to $\mathbb{F}_{3}^m$ with $F({\bf 0_n})={\bf 0_m}$. If  $\max_{\mu\in\mathbb{F}_3^{m*},\nu\in\mathbb{F}_3^n}Re(W_F(\mu,\nu))$ $\geq 0$, $\min_{\mu\in\mathbb{F}_3^{m*},\nu\in\mathbb{F}_3^n}Re(W_F(\mu,\nu))\leq 0$, and
\begin{align}\label{eqrmkabcondition}
3\max_{\mu\in\mathbb{F}_3^{m*},\nu\in\mathbb{F}_3^n}Re(W_F(\mu,\nu))
-2\min_{\mu\in\mathbb{F}_3^{m*},\nu\in\mathbb{F}_3^n}Re(W_F(\mu,\nu))\geq 3^n,
\end{align}
then by Lemma \ref{lemABcondition} and Proposition \ref{weightdistri}, the linear code $C_F$ in Construction \ref{ourcon} does not satisfy the AB condition.
\end{remark}

%Over the finite field version, we have the following result.

%\begin{proposition}\label{finitefieldv}
%Let $F$ be a function from $\mathbb{F}_{3^n}$ to $\mathbb{F}_{3^m}$ such that $F(0)=0$ and ${\rm Tr}_1^m(\mu F(x))$ is not affine for all $\mu\in\mathbb{F}_{3^m}^*$. Let
%\begin{align}\label{eqpropffv}
%C_F=\bigg\{c(\mu,\nu)=\big({\rm Tr}_1^m(\mu F(x))+{\rm Tr}_1^m(\nu x)\big)_{x\in\mathbb{F}_{3^n}^*}:(\mu,\nu)\in\mathbb{F}_{3^m}\times\mathbb{F}_{3^n}\bigg\}.
%\end{align}
%Then $C_F$ is a $[3^n-1,n+m,d]$ linear code, where $d$ is defined as in \eqref{nonlinearity}.
%\end{proposition}

\subsection{The minimality of $C_F$}

In this subsection, we determine the minimality of the linear codes $C_F$. % in Construction \ref{ourcon}.

\begin{theorem}\label{thminimality}
Let $F$ be a function from $\mathbb{F}_3^n$ to $\mathbb{F}_3^m$ such that $F({\bf 0_n})={\bf 0_m}$ and $\mu\cdot F$ is not affine for any $\mu\in\mathbb{F}_3^{m*}$. Then the linear code $C_F$ in Construction \ref{ourcon} is minimal if and only if the following two conditions are satisfied:
\begin{enumerate}
\item[(1)] For any $\mu\in\mathbb{F}_3^{m*}$, any pairwise distinct $\nu,\nu',\nu''\in\mathbb{F}_3^n$ with $\nu+ \nu'+\nu''={\bf 0_n}$, and any $\theta\in\{1,-2\}$, it holds that
\begin{align}\label{eqthmainresult1}
Re\big(W_F(\mu,\nu)+ W_F(\mu,\nu')+\theta W_F(\mu,\nu'')\big)\neq 3^n;
%Re\big(W_F(\mu,\nu)+ W_F(\mu,\nu')-2W_F(\mu,\nu'')\big)\neq 3^n;
\end{align}

\item[(2)] For any $\mu,\mu'\in\mathbb{F}_{3}^{m*}$ with $\mu\neq\pm\mu'$, and any $\nu,\nu'\in\mathbb{F}_3^n$, it holds that
\begin{align}\label{eqthmainresult2}
Re\big(W_F(\mu,\nu)+W_F(\mu',\nu')+W_F(\mu+\mu',\nu+\nu')-2W_F(\mu-\mu',\nu-\nu')\big)\neq 3^n.
\end{align}
\end{enumerate}
\end{theorem}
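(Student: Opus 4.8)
The plan is to reduce the minimality of $C_F$ to the Walsh-transform criterion via the reformulation of Heng's theorem in Remark \ref{rmkq=3}, and then to run a case analysis on the $\mathbb{F}_3^m$-parts of the two codewords involved. First I would record the structural facts. By Proposition \ref{propdim} (whose hypothesis ``$\mu\cdot F$ not linear'' is equivalent to ``$\mu\cdot F$ not affine'' once $F(\mathbf{0_n})=\mathbf{0_m}$ is imposed, since an affine function vanishing at $\mathbf{0_n}$ is linear), the map $(\mu,\nu)\mapsto c(\mu,\nu)$ is a linear isomorphism from $\mathbb{F}_3^{m+n}$ onto $C_F$. Hence $c(\mu_1,\nu_1)$ and $c(\mu_2,\nu_2)$ are $\mathbb{F}_3$-linearly independent exactly when $(\mu_1,\nu_1),(\mu_2,\nu_2)$ are, which over $\mathbb{F}_3$ is in turn equivalent to all four of $c(\mu_1,\nu_1),c(\mu_2,\nu_2),c(\mu_1+\mu_2,\nu_1+\nu_2),c(\mu_1-\mu_2,\nu_1-\nu_2)$ being nonzero. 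Thus Remark \ref{rmkq=3} applies precisely to the linearly independent pairs, and $C_F$ is minimal if and only if \eqref{eqhengq=3} holds for every such pair.

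Next I would substitute the weight formula of Proposition \ref{weightdistri}, written uniformly as $\mathrm{wt}(c(\mu,\nu))=3^n-3^{n-1}-\tfrac23 Re(W_F(\mu,-\nu))$ (valid also for $\mu=\mathbf{0_m}$, since $W_F(\mathbf{0_m},-\nu)$ equals $3^n$ when $\nu=\mathbf{0_n}$ and $0$ otherwise), into \eqref{eqhengq=3} with $c^{(1)}=c(\mu_1,\nu_1)$ and $c^{(2)}=c(\mu_2,\nu_2)$. Using $c^{(1)}\pm c^{(2)}=c(\mu_1\pm\mu_2,\nu_1\pm\nu_2)$, the constant terms $3^n-3^{n-1}$ cancel, and after clearing denominators (noting $\tfrac32(3^n-3^{n-1})=3^n$) I expect \eqref{eqhengq=3} to fail for the pair exactly when
\[
Re\big(W_F(\mu_1,-\nu_1)+W_F(\mu_1+\mu_2,-\nu_1-\nu_2)+W_F(\mu_2-\mu_1,\nu_1-\nu_2)-2W_F(\mu_2,-\nu_2)\big)=3^n.
\]
So minimality is equivalent to this sum being $\neq 3^n$ for every linearly independent pair. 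Here I would record the two elementary identities used throughout: $W_F(-\mu,-\nu)=\overline{W_F(\mu,\nu)}$, hence $Re(W_F(-\mu,-\nu))=Re(W_F(\mu,\nu))$; and $Re(W_F(\mathbf{0_m},\nu))$ is $3^n$ or $0$ according as $\nu=\mathbf{0_n}$ or not, together with $2\equiv -1$ in $\mathbb{F}_3$.

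Finally I would split into cases by $(\mu_1,\mu_2)$ and match each to the negation of \eqref{eqthmainresult1} or \eqref{eqthmainresult2}. If $\mu_1=\mu_2=\mathbf{0_m}$, the displayed sum is $0$, so minimality holds automatically. If exactly one of $\mu_1,\mu_2$ vanishes, or if both are nonzero with $\mu_2=\pm\mu_1$, then the $\mathbf{0_m}$-terms and the collision term collapse (using the two identities and the independence constraints $\nu_1\neq\mathbf{0_n}$, $\nu_2\neq\mathbf{0_n}$, $\nu_1\neq\pm\nu_2$ as appropriate), and the sum reduces to $Re(W_F(\mu,\nu)+W_F(\mu,\nu')+\theta W_F(\mu,\nu''))$ with $\nu+\nu'+\nu''=\mathbf{0_n}$ and pairwise distinct $\nu,\nu',\nu''$; the ordering $\mu_1\neq\mathbf{0_m},\mu_2=\mathbf{0_m}$ yields $\theta=1$, while the remaining three sub-cases yield $\theta=-2$, which is exactly \eqref{eqthmainresult1}. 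If $\mu_1,\mu_2$ are both nonzero and non-parallel, the substitution $\mu=\mu_1+\mu_2$, $\mu'=\mu_1$, $\nu=-\nu_1-\nu_2$, $\nu'=-\nu_1$ turns the four arguments into $(\mu,\nu),(\mu',\nu'),(\mu+\mu',\nu+\nu'),(\mu-\mu',\nu-\nu')$ with $\mu,\mu'\in\mathbb{F}_3^{m*}$ and $\mu\neq\pm\mu'$, giving exactly \eqref{eqthmainresult2}. For the ``if'' direction it suffices that each independent pair produces some instance of (1) or (2); for the ``only if'' direction I would verify that these correspondences are onto, i.e. every admissible instance of (1) (for each $\theta$) and of (2) arises from a genuinely independent pair, so that a failure of (1) or (2) exhibits an actual non-minimal covering.

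I expect the main obstacle to be the bookkeeping in this last step: checking in both directions that the side conditions line up exactly, namely that ``pairwise distinct $\nu,\nu',\nu''$'', ``$\mu\neq\pm\mu'$'', and the various nonvanishing requirements correspond precisely to linear independence, and that $\theta\in\{1,-2\}$ together with \eqref{eqthmainresult2} exhaust all linearly independent pairs with no omissions and no spurious instances, all while keeping the $\mathbb{F}_3$-arithmetic ($2\equiv-1$) and the conjugation identity consistently applied.
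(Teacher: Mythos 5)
Your proposal is correct and follows essentially the same route as the paper: reduce minimality to Remark \ref{rmkq=3}, substitute the weight formula of Proposition \ref{weightdistri}, and split into cases according to which of $\mu_1,\mu_2,\mu_1\pm\mu_2$ vanish, matching the degenerate cases to \eqref{eqthmainresult1} and the generic case to \eqref{eqthmainresult2} via the same change of variables. The only cosmetic difference is that you first write one unified Walsh-sum identity and then specialize, while the paper computes the six cases separately from the outset.
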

\begin{proof}
 By assumption that $\mu\cdot F$ is not affine for any $\mu\in\mathbb{F}_3^{m*}$, it is easily seen that $c(\mu,\nu)\neq c(\mu',\nu')$ and $c(\mu,\nu)\pm c(\mu',\nu')=c(\mu\pm\mu',\nu\pm\nu')$ for any distinct $(\mu,\nu),(\mu',\nu')\in\mathbb{F}_3^m\times\mathbb{F}_3^n$. Then by Remark \ref{rmkq=3}, we need to show
\begin{align}\label{eqproofmain1}
{\rm wt}(c(\mu+\mu',\nu+\nu'))+{\rm wt}(c(\mu-\mu',\nu-\nu'))\neq 2{\rm wt}(c(\mu,\nu))-{\rm wt}(c(\mu',\nu'))
\end{align}
for any $(\mu,\nu),(\mu',\nu'),(\mu+\mu',\nu+\nu'),(\mu-\mu',\nu-\nu')\in\mathbb{F}_{3}^m\times\mathbb{F}_3^n
\backslash\{({\bf0_m},{\bf0_n})\}$. According to the relations of $\mu,\mu',\mu\pm\mu'$ with ${\bf 0_m}$, we have to distinguish following six cases.

{\bf Case 1}. $\mu=\mu'={\bf 0_m}$ and $\nu, \nu', \nu\pm\nu'\in\mathbb{F}_{3}^{n*}$. In this case, ${\rm wt}(c(\mu+\mu',\nu+\nu'))+{\rm wt}(c(\mu-\mu',\nu-\nu'))=2(3^n-3^{n-1})$ and $2{\rm wt}(c(\mu,\nu))-{\rm wt}(c(\mu',\nu'))=3^n-3^{n-1}$ by Proposition \ref{weightdistri}, which shows that \eqref{eqproofmain1} holds.

{\bf Case 2}.  $\mu={\bf0_m}, \mu'\neq {\bf 0_m}$ and $\nu\neq{\bf0_n}$. In this case,
${\rm wt}(c(\mu+\mu',\nu+\nu'))+{\rm wt}(c(\mu-\mu',\nu-\nu'))=2(3^n-3^{n-1})-\frac{2}{3}Re\big(W_F(\mu',-(\nu+\nu'))+W_F(-\mu',-\nu+\nu')\big)$,
$2{\rm wt}(c(\mu,\nu))-{\rm wt}(c(\mu',\nu'))=3^n-3^{n-1}+\frac{2}{3}Re\big(W_F(\mu',-\nu')\big)$ by Proposition \ref{weightdistri}. Note that $Re\big(W_F(-a,-b)\big)=Re\big(W_F(a,b)\big)$ for any $(a,b)\in\mathbb{F}_3^m\times\mathbb{F}_3^n$. Then \eqref{eqproofmain1} holds if and only if
\begin{align*}
Re\big(W_F(\mu',-(\nu+\nu'))+W_F(\mu',\nu-\nu')+W_F(\mu',-\nu')\big)\neq 3^n,
\end{align*}
where $-(\nu+\nu'),\nu-\nu'$ and $-\nu'$ are pairwise distinct (since $\nu\neq{\bf0_n}$) satisfying $-(\nu+\nu')+\nu-\nu'+(-\nu')={\bf0_n}$.

{\bf Case 3}. $\mu\neq {\bf 0_m}, \mu'={\bf 0_m}$ and $\nu'\neq {\bf 0_n}$. In this case,
${\rm wt}(c(\mu+\mu',\nu+\nu'))+{\rm wt}(c(\mu-\mu',\nu-\nu'))
=2(3^n-3^{n-1})-\frac{2}{3}Re\big(W_F(\mu,-(\nu+\nu'))+W_F(\mu,-\nu+\nu')\big)$,
$2{\rm wt}(c(\mu,\nu))-{\rm wt}(c(\mu',\nu'))=3^n-3^{n-1}-\frac{4}{3}Re\big(W_F(\mu,-\nu)\big)$ by Proposition \ref{weightdistri}. Then \eqref{eqproofmain1} holds if and only if
\begin{align*}
Re\big(W_F(\mu,-(\nu+\nu'))+W_F(\mu,-\nu+\nu')-2W_F(\mu,-\nu)\big)\neq 3^n,
\end{align*}
where $-(\nu+\nu'),-\nu+\nu'$ and $-\nu$ are pairwise distinct (since $\nu'\neq{\bf0_n}$) satisfying $-(\nu+\nu')+(-\nu+\nu')+(-\nu)={\bf0_n}$.

{\bf Case 4}. $\mu=\mu'\neq {\bf 0_m}$ and $\nu\neq\nu'$. In this case,
${\rm wt}(c(\mu+\mu',\nu+\nu'))+{\rm wt}(c(\mu-\mu',\nu-\nu'))
=2(3^n-3^{n-1})-\frac{2}{3}Re\big(W_F(\mu,\nu+\nu')\big)$,
$2{\rm wt}(c(\mu,\nu))-{\rm wt}(c(\mu',\nu'))=3^n-3^{n-1}-\frac{2}{3}Re\big(2W_F(\mu,-\nu)-W_F(\mu,-\nu')\big)$ by Proposition \ref{weightdistri}. Thus, \eqref{eqproofmain1} holds if and only if
\begin{align*}
Re\big(W_F(\mu,\nu+\nu')+W_F(\mu,-\nu')-2W_F(\mu,-\nu)\big)\neq 3^n,
\end{align*}
where $\nu+\nu',-\nu$ and $-\nu'$ are pairwise distinct (since $\nu\neq\nu'$) satisfying $\nu+\nu'+(-\nu)+(-\nu')={\bf0_n}$.

{\bf Case 5}. $\mu=-\mu'\neq {\bf 0_m}$ and $\nu\neq-\nu'$. In this case,
${\rm wt}(c(\mu+\mu',\nu+\nu'))+{\rm wt}(c(\mu-\mu',\nu-\nu'))
=2(3^n-3^{n-1})-\frac{2}{3}Re\big(W_F(\mu,\nu-\nu')\big)$,
$2{\rm wt}(c(\mu,\nu))-{\rm wt}(c(\mu',\nu'))=3^n-3^{n-1}-\frac{2}{3}Re\big(2W_F(\mu,-\nu)-W_F(\mu,\nu')\big)$ by Proposition \ref{weightdistri}. Thus, \eqref{eqproofmain1} holds if and only if
\begin{align*}
Re\big(W_F(\mu,\nu-\nu')+W_F(\mu,\nu')-2W_F(\mu,-\nu)\big)\neq 3^n,
\end{align*}
where $\nu-\nu',-\nu$ and $\nu'$ are pairwise distinct (since $\nu\neq-\nu'$) satisfying $\nu-\nu'+(-\nu)+(\nu')={\bf0_n}$.

{\bf Case 6}. $\mu\neq {\bf 0_m}, \mu'\neq {\bf 0_m}$ and $\mu\pm\mu'\neq{\bf 0_m}$. In this case,
${\rm wt}(c(\mu+\mu',\nu+\nu'))+{\rm wt}(c(\mu-\mu',\nu-\nu'))
=2(3^n-3^{n-1})-\frac{2}{3}Re\big(W_F(\mu+\mu',-(\nu+\nu'))+W_F(\mu-\mu',-\nu+\nu')\big)$,
$2{\rm wt}(c(\mu,\nu))-{\rm wt}(c(\mu',\nu'))=3^n-3^{n-1}-\frac{2}{3}Re\big(2W_F(\mu,-\nu)-W_F(\mu',-\nu')\big)$ by Proposition \ref{weightdistri}. Then \eqref{eqproofmain1} holds if and only if
\begin{align*}
Re\big(W_F(\mu+\mu',-(\nu+\nu'))+W_F(\mu-\mu',-\nu+\nu')+W_F(\mu',-\nu')-2W_F(\mu,-\nu)\big)\neq 3^n.
\end{align*}
Let $a=\mu+\mu'$, $a'=\mu'$, $b=-(\nu+\nu')$ and $b'=-\nu'$. Then we have $\mu-\mu'=a+a'$, $\mu=a-a'$, $-\nu+\nu'=b+b'$ and $-\nu=b-b'$, and thus the above relation becomes that
\begin{align*}
Re\big(W_F(a,b)+W_F(a+a',b+b')+W_F(a',b')-2W_F(a-a',b-b')\big)\neq 3^n,
\end{align*}
where $a,a',a\pm a'\in\mathbb{F}_3^{m*}$, since $\mu,\mu',\mu\pm\mu'\in\mathbb{F}_3^{m*}$.

Therefore, the result follows from the above six cases.
\end{proof}

\begin{remark}
When $m=1$, there do not exist $\mu,\mu'\in\mathbb{F}_3^*$ such that $\mu\pm\mu'\in\mathbb{F}_3^*$, and hence Condition (2) of Theorem \ref{thminimality} does not exist. In this case, Theorem \ref{thminimality} is reduced to
 \cite[Theorem 14]{Heng-2018-FFA}.
\end{remark}

According to Theorem \ref{thminimality} and the fact that $Re(W_F(\mu,\nu))\leq \big|Re(W_F(\mu,\nu))|\leq |W_F(\mu,\nu)\big|$ and $\big|Re(W_F(\mu,\nu))\pm Re(W_F(\mu',\nu'))\big|\leq\big|W_F(\mu,\nu)|+|W_F(\mu',\nu')\big|$ for any $(\mu,\nu),(\mu',\nu')\in\mathbb{F}_3^{m}\times\mathbb{F}_3^n$, it is easy to deduce the following result.

\begin{corollary}\label{rmkminimal}
Let $F$ be a function from $\mathbb{F}_3^n$ to $\mathbb{F}_3^m$ such that $F({\bf 0_n})={\bf0_m}$ and $\big|W_F(\mu,\nu)\big|<\frac{3^n}{5}$ for any $(\mu,\nu)\in\mathbb{F}_3^{m*}\times\mathbb{F}_3^n$. Then the linear code $C_F$ in construction \ref{ourcon} is minimal. %, since for any $(\mu,\nu),(\mu',\nu')\in\mathbb{F}_3^{m}\times\mathbb{F}_3^n$, one has $Re(W_F(\mu,\nu))\leq \big|Re(W_F(\mu,\nu))|\leq |W_F(\mu,\nu)\big|$ and $\big|Re(W_F(\mu,\nu))\pm Re(W_F(\mu',\nu'))\big|\leq\big|W_F(\mu,\nu)|+|W_F(\mu',\nu')\big|$.
\end{corollary}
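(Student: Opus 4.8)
The plan is to obtain the result as a direct consequence of Theorem \ref{thminimality}, by estimating the left-hand sides of Conditions (1) and (2) under the uniform bound $|W_F(\mu,\nu)| < 3^n/5$. First I would check that the hypotheses of Theorem \ref{thminimality} are actually in force: the assumption $|W_F(\mu,\nu)| < \tfrac{3^n}{5}$ for all $\mu \in \mathbb{F}_3^{m*}$ forces $\mu\cdot F$ to be non-affine for every such $\mu$, since an affine component $\mu\cdot F(x) = a\cdot x + b$ would give $W_F(\mu,a) = \zeta_3^{b}\,3^n$, of magnitude $3^n > \tfrac{3^n}{5}$. Together with $F(\mathbf{0}_n)=\mathbf{0}_m$, this means it suffices to verify that neither \eqref{eqthmainresult1} nor \eqref{eqthmainresult2} ever attains the value $3^n$.

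For Condition (1), fix $\mu \in \mathbb{F}_3^{m*}$, pairwise distinct $\nu,\nu',\nu''$ with $\nu+\nu'+\nu''=\mathbf{0}_n$, and $\theta\in\{1,-2\}$. All three arguments share the first coordinate $\mu\in\mathbb{F}_3^{m*}$, so each obeys the hypothesis. Splitting the real part of the sum termwise and using $Re(\xi)\le |\xi|$ on the positive terms and $|Re(\xi)|\le |\xi|$ on the term carrying $\theta$, the larger of the two cases is $\theta=-2$, which gives
$$Re\big(W_F(\mu,\nu)+W_F(\mu,\nu')-2W_F(\mu,\nu'')\big) < \tfrac{3^n}{5}+\tfrac{3^n}{5}+2\cdot\tfrac{3^n}{5} = \tfrac{4}{5}3^n < 3^n,$$
while the case $\theta=1$ is bounded by $\tfrac{3}{5}3^n$. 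Hence \eqref{eqthmainresult1} holds.

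For Condition (2), the assumption $\mu\neq\pm\mu'$ with $\mu,\mu'\in\mathbb{F}_3^{m*}$ guarantees $\mu\pm\mu'\in\mathbb{F}_3^{m*}$, so all four arguments $(\mu,\nu),(\mu',\nu'),(\mu+\mu',\nu+\nu'),(\mu-\mu',\nu-\nu')$ have their first coordinate in $\mathbb{F}_3^{m*}$ and the bound applies to each. Bounding the three coefficient-$+1$ terms by $|W_F|<\tfrac{3^n}{5}$ and the coefficient-$(-2)$ term by $2|W_F|<2\cdot\tfrac{3^n}{5}$ yields
$$Re\big(W_F(\mu,\nu)+W_F(\mu',\nu')+W_F(\mu+\mu',\nu+\nu')-2W_F(\mu-\mu',\nu-\nu')\big) < 5\cdot\tfrac{3^n}{5} = 3^n.$$
The strictness here (each magnitude is strictly below $\tfrac{3^n}{5}$) is exactly what yields $\neq 3^n$, and it also explains the constant: the coefficients contribute a total weight $1+1+1+2=5$, so $\tfrac{3^n}{5}$ is the natural threshold.

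There is no genuine obstacle: once Theorem \ref{thminimality} is available, this is a one-line triangle-inequality estimate. The only points needing care are bookkeeping rather than difficulty — confirming that every Walsh argument appearing in \eqref{eqthmainresult1} and \eqref{eqthmainresult2} really has its $\mu$-coordinate in $\mathbb{F}_3^{m*}$ (so the hypothesis applies to it), and remembering to use $|Re(\xi)|\le|\xi|$, rather than just $Re(\xi)\le|\xi|$, on the term carrying the negative coefficient $-2$.
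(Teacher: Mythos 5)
Your proposal is correct and follows exactly the route the paper intends: the paper derives this corollary directly from Theorem \ref{thminimality} via the triangle inequality $Re(\xi)\leq|\xi|$ applied termwise, with the worst case being the four-term sum of Condition (2) whose coefficients total $1+1+1+2=5$, which is precisely where the threshold $3^n/5$ comes from. Your additional checks — that the Walsh bound forces every component $\mu\cdot F$ to be non-affine, and that all first coordinates appearing in Conditions (1) and (2) lie in $\mathbb{F}_3^{m*}$ — are exactly the bookkeeping the paper leaves implicit.
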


\section{Three-weight  minimal ternary linear codes from vectorial regular $s$-plateaued functions}\label{sec:threeweight}

%\subsection{Three-weight minimal linear codes}

In this section, we present some three-weight minimal ternary linear codes by using certain vectorial regular $s$-plateaued functions. To this end, we first give the following characterization of three-weight minimal ternary linear codes.

\begin{theorem}\label{ththreewcha}
Let $F$ be a vectorial function from $\mathbb{F}_3^n$ to $\mathbb{F}_3^m$ satisfying the following two conditions:
\begin{enumerate}
\item[(1)]  $F({\bf0_n})={\bf0_m}$;

\item[(2)] $Re(W_F(\mu,\nu))\in\{0,N_1,N_2\}$ for any $(\mu,\nu)\in\mathbb{F}_3^{m*}\times\mathbb{F}_3^n$ such that $\max\{|N_1|,|N_2|\}<\frac{3^n}{5}$, where $N_1$ and $N_2$ are two rational numbers.
\end{enumerate}
 Then the linear code $C_F$ generated by Construction \ref{ourcon} is a three-weight minimal ternary linear code with parameters $[3^n-1,n+m,d]$, where $d=3^n-3^{n-1}-\frac{2}{3}\max\{0, N_1, N_2\}$.
\end{theorem}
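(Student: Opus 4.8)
The plan is to read off the parameters from the structural results already established (Propositions \ref{propdim} and \ref{weightdistri} together with Remarks \ref{rmk3-weight} and \ref{rmkdistance}), and then to establish minimality by checking the two conditions of Theorem \ref{thminimality} directly, exploiting that Condition (2) of the hypothesis bounds the real parts of all relevant Walsh values by $\max\{|N_1|,|N_2|\}<\tfrac{3^n}{5}$.

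Before invoking those results I would dispatch a hypothesis check: that $\mu\cdot F$ is non-affine for every $\mu\in\mathbb{F}_3^{m*}$, which both Proposition \ref{propdim} and Theorem \ref{thminimality} require. If some $\mu\cdot F$ were affine, then since $F(\mathbf{0}_n)=\mathbf{0}_m$ it would in fact be linear, say $\mu\cdot F(x)=\nu_0\cdot x$; but then $W_F(\mu,\nu_0)=3^n$, so $Re(W_F(\mu,\nu_0))=3^n\notin\{0,N_1,N_2\}$, contradicting Condition (2) because $\max\{|N_1|,|N_2|\}<\tfrac{3^n}{5}<3^n$. Granting non-affineness, Proposition \ref{propdim} yields length $3^n-1$ and dimension $n+m$, while Proposition \ref{weightdistri} shows that every nonzero weight lies in $\{\,3^n-3^{n-1},\,3^n-3^{n-1}-\tfrac{2}{3}N_1,\,3^n-3^{n-1}-\tfrac{2}{3}N_2\,\}$; this gives the three-weight property (Remark \ref{rmk3-weight}) and identifies the minimal distance as the smallest such weight, namely $d=3^n-3^{n-1}-\tfrac{2}{3}\max\{0,N_1,N_2\}$, where the value $0$ accounts for the codewords with $\mu=\mathbf{0}_m$ of weight $3^n-3^{n-1}$.

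The heart of the argument is minimality, and here I would not quote Corollary \ref{rmkminimal}: that corollary assumes a bound on the full magnitude $|W_F(\mu,\nu)|$, whereas our hypothesis controls only the real part. Instead I would verify Conditions (1) and (2) of Theorem \ref{thminimality} directly, setting $M:=\max\{|N_1|,|N_2|\}$, so that $|Re(W_F(\mu,\nu))|\le M<\tfrac{3^n}{5}$ for every $\mu\in\mathbb{F}_3^{m*}$. For Condition (1), the quantity in \eqref{eqthmainresult1} is a real combination of three such values with coefficients $(1,1,\theta)$, $\theta\in\{1,-2\}$; since $|1|+|1|+|\theta|\le 4$, its absolute value is at most $4M<\tfrac{4\cdot 3^n}{5}<3^n$, so it cannot equal $3^n$. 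For Condition (2), I would first observe that $\mu\ne\pm\mu'$ forces $\mu+\mu'$ and $\mu-\mu'$ to be nonzero, so all four Walsh values in \eqref{eqthmainresult2} have first coordinate in $\mathbb{F}_3^{m*}$ and hence real parts bounded by $M$; the coefficients are $(1,1,1,-2)$ with $|1|+|1|+|1|+|{-2}|=5$, so the combination is at most $5M<5\cdot\tfrac{3^n}{5}=3^n$ in absolute value and again differs from $3^n$. Both conditions of Theorem \ref{thminimality} then hold, giving minimality.

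The one point requiring care is the bookkeeping that explains the threshold $\tfrac{3^n}{5}$: it is calibrated exactly to the largest coefficient-weight $1+1+1+2=5$ occurring across the conditions of Theorem \ref{thminimality} (namely in Condition (2)), which is precisely what forces the strict inequality $5M<3^n$. Beyond verifying that the real-part bound (rather than a magnitude bound) still suffices in this count, the argument is a routine assembly of the earlier propositions, so I expect no further obstacle.
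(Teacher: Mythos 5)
Your proposal is correct, and on the key point --- minimality --- it takes a genuinely different and in fact more careful route than the paper. The paper's proof is a one-liner: it asserts that Condition (2) forces $\mu\cdot F$ to be non-affine, reads off the parameters from Proposition \ref{propdim} and Remark \ref{rmk3-weight}, and then derives minimality by citing Corollary \ref{rmkminimal}. But Corollary \ref{rmkminimal} requires the modulus bound $|W_F(\mu,\nu)|<\frac{3^n}{5}$, whereas the hypothesis of Theorem \ref{ththreewcha} only constrains the \emph{real parts} to lie in $\{0,N_1,N_2\}$ with $\max\{|N_1|,|N_2|\}<\frac{3^n}{5}$; the imaginary parts are a priori unconstrained, so the paper's appeal to the corollary is not literally justified by the stated hypotheses (it is justified in the subsequent application to regular plateaued functions, where the moduli are genuinely small, but not in the general statement). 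You correctly spotted this and instead verified Conditions (1) and (2) of Theorem \ref{thminimality} directly: since the coefficients $(1,1,\theta)$ and $(1,1,1,-2)$ are real, the quantities in \eqref{eqthmainresult1} and \eqref{eqthmainresult2} are linear combinations of real parts with total coefficient weight at most $4$ and $5$ respectively, so the real-part bound $M<\frac{3^n}{5}$ alone yields $4M<3^n$ and $5M<3^n$. This closes the gap and explains the calibration of the threshold $\frac{3^n}{5}$. Your explicit check that affineness of some $\mu\cdot F$ would produce a Walsh value $3^n$ contradicting Condition (2) is also a welcome elaboration of a step the paper leaves implicit. The only caveat, which applies equally to the paper, is that the ``three-weight'' claim tacitly assumes all three values $0,N_1,N_2$ are actually attained and give distinct weights; neither argument addresses this, so it is not a defect of your proposal relative to the paper's.
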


\begin{proof}
Note that Condition (2) of this theorem implies that $\mu\cdot F$ is not affine for any $\mu\in\mathbb{F}_2^{m*}$. Then by Proposition \ref{propdim} and Remark \ref{rmk3-weight}, we obtain that $C_F$ is a three-weight linear code with parameters $[3^n-1,n+m,d]$ if $Re(W_F(\mu,\nu))\in\{0,N_1,N_2\}$ for any $(\mu,\nu)\in\mathbb{F}_3^{m*}\times\mathbb{F}_3^n$. By %Theorem \ref{thminimality} and the fact that
%$Re\big(W_F(\mu,\nu)\pm W_F(\mu,\nu')\big)\leq \big|Re(W_F(\mu,\nu))\big|+\big|Re(W_F(\mu',\nu'))\big|$ for any $(\mu,\nu),(\mu',\nu')\in\mathbb{F}_3^m\times\mathbb{F}_3^n$,
Corollary \ref{rmkminimal}, we derive that $C_F$ is minimal.
\end{proof}

Applying Theorem \ref{ththreewcha} to vectorial  regular $s$-plateaued functions, we deduce the following result.

\begin{theorem}\label{th3weirgl}
Let $n>4$ and $0\leq s\leq n-4$ be two integers with the same parity. Let $F:\mathbb{F}_3^n\to\mathbb{F}_3^m$ be a vectorial regular $s$-plateaued function with $F({\bf0_n})={\bf0_m}$. Then the linear code $C_F$ generated by Construction \ref{ourcon} is a three-weight minimal linear code with parameters $[3^n-1,n+m,3^n-3^{n-1}-3^{\frac{n+s}{2}}+3^{\frac{n+s}{2}-1}]$. Moreover, the weight distribution of $C_F$ is given in Table \ref{b2}.
\begin{table*}[!htbp]\setlength{\abovecaptionskip}{0cm}
 %\small%\footnotesize
\caption{Weight distribution of $C_F$ in Theorem \ref{th3weirgl}}  \centering \label{b2}
\medskip
\begin{tabular}{|c|c|}
 \hline
 Weight & Frequency\\
 \hline
 0 & 1\\
 \hline
 $3^n-3^{n-1}$ & $3^{n}-1+(3^m-1)(3^n-3^{n-s})$\\
 \hline
 $3^n-3^{n-1}-3^{\frac{n+s}{2}}+3^{\frac{n+s}{2}-1}$ & $(3^m-1)(3^{n-s-1}+3^{\frac{n-s}{2}}-3^{\frac{n-s}{2}-1})$\\
 \hline
 $3^n-3^{n-1}+3^{\frac{n+s}{2}-1}$ & $2(3^m-1)(3^{n-s-1}-3^{\frac{n-s}{2}-1})$\\
 \hline
\end{tabular}
\end{table*}
\end{theorem}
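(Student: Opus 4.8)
The plan is to derive the qualitative part (three-weight, minimal, parameters) directly from Theorem~\ref{ththreewcha}, and then obtain the frequencies by counting Walsh values on each component function.

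First I would check the hypotheses of Theorem~\ref{ththreewcha}. Since $F$ is vectorial regular $s$-plateaued, for every $\mu\in\mathbb{F}_3^{m*}$ the component $\mu\cdot F$ is regular $s$-plateaued, so by Definition~\ref{defregular-pla} we have $W_F(\mu,\nu)\in\{0,\,3^{(n+s)/2}\zeta_3^{g_\mu(\nu)}\}$ for all $\nu\in\mathbb{F}_3^n$ (the same-parity hypothesis guarantees $(n+s)/2\in\mathbb{Z}$). Taking real parts and using $Re(\zeta_3^0)=1$ and $Re(\zeta_3)=Re(\zeta_3^2)=-\frac12$ gives $Re(W_F(\mu,\nu))\in\{0,N_1,N_2\}$ with $N_1=3^{(n+s)/2}$ and $N_2=-\frac12 3^{(n+s)/2}$. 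The magnitude bound holds because $\max\{|N_1|,|N_2|\}=3^{(n+s)/2}\le 3^{n-2}<\frac{3^n}{5}$, where the middle inequality is exactly $s\le n-4$. Hence Theorem~\ref{ththreewcha} applies and gives a three-weight minimal code of parameters $[3^n-1,n+m,d]$ with $d=3^n-3^{n-1}-\frac23\max\{0,N_1,N_2\}=3^n-3^{n-1}-\frac23 3^{(n+s)/2}$, and the identity $\frac23 3^{(n+s)/2}=3^{(n+s)/2}-3^{(n+s)/2-1}$ rewrites this as the stated minimum distance.

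For the weight distribution I would count, via Proposition~\ref{weightdistri}, the pairs $(\mu,\nu)$ producing each weight (the weight of $c(\mu,\nu)$ depends only on $Re(W_F(\mu,-\nu))$, and $\nu\mapsto-\nu$ is a bijection, so I may count using $W_F(\mu,\nu)$). The pairs with $\mu={\bf0_m}$ give one codeword of weight $0$ and $3^n-1$ of weight $3^n-3^{n-1}$. For a fixed $\mu\in\mathbb{F}_3^{m*}$, Parseval's identity gives $\#S_{\mu\cdot F}=3^{n-s}$ (each nonzero Walsh value has squared modulus $3^{n+s}$), so $3^n-3^{n-s}$ values of $\nu$ lie off the support and give weight $3^n-3^{n-1}$.

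The crux is to split the $3^{n-s}$ on-support values of $\nu$ by $g_\mu(\nu)\in\{0,1,2\}$. Setting $n_i=\#\{\nu\in S_{\mu\cdot F}:g_\mu(\nu)=i\}$, I would invoke the inverse Walsh transform at ${\bf0_n}$: since $F({\bf0_n})={\bf0_m}$ forces $(\mu\cdot F)({\bf0_n})=0$, we get $\sum_{\nu}W_F(\mu,\nu)=3^n$, i.e. $3^{(n+s)/2}(n_0+n_1\zeta_3+n_2\zeta_3^2)=3^n$. Equating imaginary parts forces $n_1=n_2$ (this is where regularity, $\epsilon=1$, is used), and the real part gives $n_0-n_1=3^{(n-s)/2}$; together with $n_0+n_1+n_2=3^{n-s}$ these solve to $n_0=3^{n-s-1}+3^{(n-s)/2}-3^{(n-s)/2-1}$ and $n_1=n_2=3^{n-s-1}-3^{(n-s)/2-1}$. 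By Proposition~\ref{weightdistri}, $g_\mu(\nu)=0$ yields weight $d$ while $g_\mu(\nu)\in\{1,2\}$ yields weight $3^n-3^{n-1}+3^{(n+s)/2-1}$. Since $n_0,n_1,n_2$ are the same for all $3^m-1$ nonzero $\mu$, summing over $\mu$ and adding the $\mu={\bf0_m}$ contribution reproduces Table~\ref{b2}; as a consistency check the four frequencies sum to $3^{m+n}$. The only genuine obstacle is this counting step — pinning down $n_0,n_1,n_2$ from the Walsh-sum identity — since everything else is a direct application of the earlier characterization.
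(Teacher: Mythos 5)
Your proposal is correct and follows essentially the same route as the paper's proof: invoke Theorem \ref{ththreewcha} for the three-weight/minimal/parameter claims (using $s\le n-4$ to get $3^{(n+s)/2}\le 3^{n-2}<3^n/5$), then determine the counts $n_0,n_1,n_2$ via Parseval's identity together with the inverse Walsh transform evaluated at $\mathbf{0_n}$, exactly as in the paper's computation of $\#S_F^{(0)},\#S_F^{(1)},\#S_F^{(2)}$. The only additions are minor but sound: making the $\nu\mapsto-\nu$ bijection explicit and the consistency check that the frequencies sum to $3^{m+n}$.
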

\begin{proof}
Since $F$ is a vectorial regular $s$-plateaued function, we have
$$W_F(\mu,\nu)\in\big\{0, 3^{\frac{n+s}{2}}, 3^{\frac{n+s}{2}}\zeta_3,3^{\frac{n+s}{2}}\zeta_3^2\big\}$$
 for any $(\mu,\nu)\in\mathbb{F}_3^{m*}\times\mathbb{F}_3^n$, which implies that $Re(W_F(\mu,\nu))\in\{0,3^{\frac{n+s}{2}}, -\frac{3^{\frac{n+s}{2}}}{2}\}$ for any $(\mu,\nu)\in\mathbb{F}_3^{m*}\times\mathbb{F}_3^n$, since $Re(\zeta_3)=Re(\zeta_3^2)=-\frac{1}{2}$. When $s\leq n-4$, we have $5\times 3^{\frac{n+s}{2}}\leq 5\times 3^{n-2}<3^n$. Then by Theorem \ref{ththreewcha}, $C_F$ is a three-weight minimal linear code with parameters $[3^n-1,n+m,d]$, where $d=3^n-3^{n-1}-\frac{2}{3}\times 3^{\frac{n+s}{2}}=3^n-3^{n-1}- 3^{\frac{n+s}{2}}+ 3^{\frac{n+s}{2}-1}$. Below, we determine the weight distribution of $C_F$.

 Since  $Re(W_F(\mu,\nu))\in\{0,3^{\frac{n+s}{2}}, -\frac{3^{\frac{n+s}{2}}}{2}\}$ for any $(\mu,\nu)\in\mathbb{F}_3^{m*}\times\mathbb{F}_3^n$,   we obtain from Proposition \ref{weightdistri} that
 \begin{align*}
 {\rm wt}(c(\mu,\nu))\in T:=\{0, d_0,d, d_1 \}
 \end{align*}
for any $(\mu,\nu)\in\mathbb{F}_3^{m}\times\mathbb{F}_3^n$, where $d_0=3^n-3^{n-1}$, $d=3^n-3^{n-1}- 3^{\frac{n+s}{2}}+ 3^{\frac{n+s}{2}-1}$, and $d_1=3^n-3^{n-1}+ 3^{\frac{n+s}{2}-1}$. Therefore, we need to  determine the frequencies of all elements in the set $T$, which are denoted by $\mathbf{f}(0)$, $\mathbf{f}(d_1)$,
 $\mathbf{f}(d)$ and $\mathbf{f}(d_1)$, respectively.

 Let $\mu$ be a fixed element of $\mathbb{F}_3^{m*}$. Let $S_F:=\big\{\nu\in\mathbb{F}_3^n: W_F(\mu,\nu)\neq 0\big\}$. Then from Parseval's identity $\sum_{\nu\in\mathbb{F}_3^n}|W_F(\mu,\nu)|^2=3^{2n}$ and $|W_F(\mu,\nu)|=3^{\frac{n+s}{2}}$ for any $\nu\in S_F$, we obtain that $\#S_F=3^{n-s}$ and $\#\overline{S_F}=3^n-3^{n-s}$, where $\overline{S_F}=\mathbb{F}_3^n\backslash S_F$. Let
$S_F^{(i)}:=\big\{\nu\in\mathbb{F}_3^n: W_F(\mu,\nu)= 3^{\frac{n+s}{2}}\zeta_3^i\big\}$ for $i\in\mathbb{F}_3$. Then we have
\begin{align}\label{eqth3w1}
\#S_F^{(0)}+\#S_F^{(1)}+\#S_F^{(2)}=3^{n-s}.
\end{align}
Besides, according to the inverse Walsh transform of $\mu\cdot F$, we have
\begin{align}\label{eqinwalsh0}
\sum_{\nu\in\mathbb{F}_3^n}W_F(\mu,\nu)=3^n\zeta_3^{\mu\cdot F({\mathbf{0_n}})}=3^n,
\end{align}
which implies that
\begin{align}\label{eqth3w2}
3^{\frac{n+s}{2}}\big(\#S_F^{(0)}+\#S_F^{(1)}\zeta_3+\#S_F^{(2)}\zeta_3^2\big)=3^n.
\end{align}
Note that
\begin{align*}
\#S_F^{(1)}\zeta_3+\#S_F^{(2)}\zeta_3^2
=&\#S_F^{(1)}(-\frac{1}{2}+\frac{\sqrt{-3}}{2})+\#S_F^{(2)}(-\frac{1}{2}-\frac{\sqrt{-3}}{2})\\
=&-\frac{1}{2}\big(\#S_F^{(1)}+\#S_F^{(2)}\big)+\frac{\sqrt{-3}}{2}\big(\#S_F^{(1)}-\#S_F^{(2)}\big).
\end{align*}
Then from \eqref{eqth3w2}, we obtain that $\#S_F^{(1)}=\#S_F^{(2)}$ and
\begin{align}\label{eqth3w3}
\#S_F^{(0)}-\#S_F^{(1)}=3^{\frac{n-s}{2}}.
\end{align}
Combining \eqref{eqth3w1} with \eqref{eqth3w3}, we derive that
\begin{align*}
\#S_F^{(0)}=3^{n-s-1}+3^{\frac{n-s}{2}}-3^{\frac{n-s}{2}-1} \hspace{0.3cm} {\rm and} \hspace{0.3cm} \#S_F^{(1)}=\#S_F^{(2)}=3^{n-s-1}-3^{\frac{n-s}{2}-1}.
\end{align*}

Therefore, we obtain from Proposition \ref{weightdistri} that
\begin{align*}
&\mathbf{f}(0)=1, \\
&\mathbf{f}(d_0)=3^n-1+(3^m-1)\#\overline{S_F}=3^n-1+(3^m-1)(3^n-3^{n-s}),\\
&\mathbf{f}(d)=(3^m-1)\#S_F^{(0)}=(3^m-1)(3^{n-s-1}+3^{\frac{n-s}{2}}-3^{\frac{n-s}{2}-1}), \\
&\mathbf{f}(d_1)=(3^m-1)(\#S_F^{(1)}+\#S_F^{(2)})=2(3^m-1)(3^{n-s-1}-3^{\frac{n-s}{2}-1}).
\end{align*}
 This completes the proof.
\end{proof}

If $s=0$ or $s=1$, then we get the following two corollaries by Theorem \ref{th3weirgl}.

\begin{corollary}\label{cors=0}
Let $n>4$ be an even positive integer and $1\leq m\leq \frac{n}{2}$. Let $F:\mathbb{F}_3^n\to\mathbb{F}_3^m$ be a vectorial regular bent function with $F({\bf0_n})={\bf0_m}$. Then the code $C_F$ defined by Construction \ref{ourcon} is a three-weight minimal $[3^n-1, n+m,3^n-3^{n-1}-3^{\frac{n}{2}}+3^{\frac{n}{2}-1}]$ linear code. Moreover, the wight distribution of $C_F$ is given in Table \ref{b3}
\begin{table*}[!htbp]\setlength{\abovecaptionskip}{0cm}
 %\small%\footnotesize
\caption{Weight distribution of $C_F$ in Corollary \ref{cors=0}}  \centering \label{b3}
\medskip
\begin{tabular}{|c|c|}
 \hline
 Weight & Frequency\\
 \hline
 0 & 1\\
 \hline
 $3^n-3^{n-1}$ & $3^{n}-1$\\
 \hline
 $3^n-3^{n-1}-3^{\frac{n}{2}}+3^{\frac{n}{2}-1}$ & $(3^m-1)(3^{n-1}+3^{\frac{n}{2}}-3^{\frac{n}{2}-1})$\\
 \hline
 $3^n-3^{n-1}+3^{\frac{n}{2}-1}$ & $2(3^m-1)(3^{n-1}-3^{\frac{n}{2}-1})$\\
 \hline
\end{tabular}
\end{table*}
\end{corollary}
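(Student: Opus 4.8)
The plan is to recognize that Corollary \ref{cors=0} is precisely the $s=0$ instance of Theorem \ref{th3weirgl}, so the whole statement will follow by specialization once I confirm that the hypotheses and the conclusion transfer correctly under the substitution $s=0$. No independent computation is needed: every quantity in the corollary is obtained by plugging $s=0$ into the corresponding quantity of the theorem.

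First I would reconcile the two notions of function. By the remark following Definition \ref{defs-plateaued}, a bent function is exactly a $0$-plateaued function, and by Definition \ref{defvecpla} a function $F$ is vectorial regular $0$-plateaued iff every component $\mu\cdot F$ (with $\mu\in\mathbb{F}_3^{m*}$) is regular $0$-plateaued, i.e.\ regular bent. Hence a vectorial regular bent function is the same object as a vectorial regular $0$-plateaued function, so the hypothesis of Corollary \ref{cors=0} is a special case of that of Theorem \ref{th3weirgl}. Next I would check the numerical side-conditions of the theorem. The theorem requires $n$ and $s$ to have the same parity; since $s=0$ is even, this forces $n$ even, which is assumed. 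It also requires $0\le s\le n-4$; with $s=0$ this reads $n\ge 4$, and since $n>4$ together with $n$ even gives $n\ge 6$, this holds. The assumption $F(\mathbf{0_n})=\mathbf{0_m}$ is carried over verbatim.

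With the hypotheses verified, I would simply substitute $s=0$ into the conclusion of Theorem \ref{th3weirgl}. The minimal distance $3^n-3^{n-1}-3^{(n+s)/2}+3^{(n+s)/2-1}$ specializes to $3^n-3^{n-1}-3^{n/2}+3^{n/2-1}$, matching the claimed distance, and the parameters $[3^n-1,n+m,d]$ are unchanged. For the weight distribution, each row of Table \ref{b2} becomes the corresponding row of Table \ref{b3}: the frequency of $3^n-3^{n-1}$ collapses to $3^n-1+(3^m-1)(3^n-3^{n-s})=3^n-1$ because $3^{n-s}=3^n$ when $s=0$; and the remaining two frequencies are read off by setting $s=0$ in $(3^m-1)\big(3^{n-s-1}+3^{(n-s)/2}-3^{(n-s)/2-1}\big)$ and $2(3^m-1)\big(3^{n-s-1}-3^{(n-s)/2-1}\big)$.

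There is essentially no obstacle here, since the argument is a direct specialization; the only point worth flagging is the role of the range $1\le m\le n/2$, which plays no part in the weight computation but guarantees that the hypothesis is not vacuous. Indeed, by the result of Nyberg cited after Definition \ref{defvecpla}, vectorial regular bent functions from $\mathbb{F}_3^n$ to $\mathbb{F}_3^m$ exist only for $m\le n/2$, so this restriction is exactly what ensures an eligible $F$ is available; I would state this explicitly to justify the bound on $m$ in the corollary.
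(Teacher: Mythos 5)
Your proof is correct and follows exactly the paper's route: the paper derives Corollary \ref{cors=0} by setting $s=0$ in Theorem \ref{th3weirgl}, which is precisely your specialization argument, including the identification of vectorial regular bent with vectorial regular $0$-plateaued and the collapse of the frequency $3^n-1+(3^m-1)(3^n-3^{n-s})$ to $3^n-1$. Your additional remark that the bound $m\le n/2$ reflects Nyberg's existence result is a sensible clarification that the paper leaves implicit.
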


%\begin{remark}
%By Remark \ref{rmkdistance}, we obtain that the minimal distance of the linear codes $C_F$ in Corollary \ref{cors=0} arrives at the optimal amongst the linear codes in Construction \ref{ourcon}, since by Parseval's identity and Relation \eqref{eqinwalsh0}, it is easily seen that
% $$\max_{\mu\in\mathbb{F}_3^{m*},\nu\in\mathbb{F}_3^n} Re\big(W_F(\mu,\nu)\big)\geq 3^{\frac{n}{2}}.$$
%\end{remark}

\begin{corollary}\label{cors=1}
Let $n>4$ be an odd positive integer. Let $F:\mathbb{F}_3^n\to\mathbb{F}_3^m$ be a vectorial regular $1$-plateaued function with $F({\bf0_n})={\bf0_m}$. Then the code $C_F$ defined by Construction \ref{ourcon} is a three-weight minimal  linear code with parameters $[3^n-1, n+m,3^n-3^{n-1}-3^{\frac{n+1}{2}}+3^{\frac{n+1}{2}-1}]$. Moreover, the wight distribution of $C_F$ is given in Table \ref{b4}.
\begin{table*}[!htbp]\setlength{\abovecaptionskip}{0cm}
 %\small%\footnotesize
\caption{Weight distribution of $C_F$ in Corollary \ref{cors=1}}  \centering \label{b4}
\medskip
\begin{tabular}{|c|c|}
 \hline
 Weight & Frequency\\
 \hline
 0 & 1\\
 \hline
 $3^n-3^{n-1}$ & $3^{n}-1+(3^m-1)(3^n-3^{n-1})$\\
 \hline
 $3^n-3^{n-1}-3^{\frac{n+1}{2}}+3^{\frac{n+1}{2}-1}$ & $(3^m-1)(3^{n-2}+3^{\frac{n-1}{2}}-3^{\frac{n-1}{2}-1})$\\
 \hline
 $3^n-3^{n-1}+3^{\frac{n+1}{2}-1}$ & $2(3^m-1)(3^{n-2}-3^{\frac{n-1}{2}-1})$\\
 \hline
\end{tabular}
\end{table*}
\end{corollary}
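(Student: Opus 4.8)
The plan is to recognize this corollary as nothing more than the $s=1$ instance of Theorem \ref{th3weirgl}, so that no fresh argument is required beyond matching hypotheses and specializing the general formulas. First I would confirm that the standing assumptions of Theorem \ref{th3weirgl} are met. That theorem requires $n>4$ and $0\leq s\leq n-4$ with $n$ and $s$ of the same parity. Taking $s=1$, the equal-parity condition forces $n$ to be odd, and $0\leq 1\leq n-4$ is equivalent to $n\geq 5$; both are exactly the hypotheses imposed here (namely $n>4$ odd). Hence any vectorial regular $1$-plateaued $F$ with $F({\bf 0_n})={\bf 0_m}$ satisfies all the requirements of Theorem \ref{th3weirgl} at $s=1$.

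Next I would invoke Theorem \ref{th3weirgl} directly to conclude that $C_F$ is a three-weight minimal linear code of length $3^n-1$ and dimension $n+m$, whose minimal distance is obtained by substituting $s=1$ into $d=3^n-3^{n-1}-3^{\frac{n+s}{2}}+3^{\frac{n+s}{2}-1}$, giving $d=3^n-3^{n-1}-3^{\frac{n+1}{2}}+3^{\frac{n+1}{2}-1}$, as stated. For the weight distribution I would simply push $s=1$ through the three nonzero rows of Table \ref{b2}: this turns $3^n-3^{n-s}$ into $3^n-3^{n-1}$, each $3^{n-s-1}$ into $3^{n-2}$, each $3^{\frac{n-s}{2}}$ into $3^{\frac{n-1}{2}}$, and each $3^{\frac{n-s}{2}-1}$ into $3^{\frac{n-1}{2}-1}$, producing precisely the frequencies recorded in Table \ref{b4}.

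Since the whole statement is a specialization, there is no genuine obstacle; the only points that deserve a moment of care are the admissibility bookkeeping above and an implicit check that the key inequality used inside the proof of Theorem \ref{th3weirgl} still holds at $s=1$. Concretely, regular $1$-plateauedness gives $W_F(\mu,\nu)\in\{0,3^{\frac{n+1}{2}},3^{\frac{n+1}{2}}\zeta_3,3^{\frac{n+1}{2}}\zeta_3^2\}$, whence $Re(W_F(\mu,\nu))\in\{0,3^{\frac{n+1}{2}},-\frac{3^{\frac{n+1}{2}}}{2}\}$, and the minimality criterion of Corollary \ref{rmkminimal} requires $5\times 3^{\frac{n+1}{2}}<3^n$, equivalently $3^{\frac{n-1}{2}}>5$. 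This already holds at the smallest admissible value $n=5$ (where $3^{2}=9>5$) and a fortiori for all larger odd $n$, so both the three-weight property and minimality persist. Therefore the corollary follows immediately from Theorem \ref{th3weirgl}.
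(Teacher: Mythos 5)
Your proof is correct and matches the paper's own treatment: the paper states Corollary \ref{cors=1} as an immediate specialization of Theorem \ref{th3weirgl} at $s=1$, with no separate argument. Your hypothesis-matching (parity, $1\leq n-4$) and the substitution of $s=1$ into the distance formula and Table \ref{b2} are exactly what is needed, and your extra check that $5\times 3^{\frac{n+1}{2}}<3^n$ for $n\geq 5$ is a harmless confirmation of the bound already guaranteed by $s\leq n-4$.
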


%It is noted that the parameters of the minimal linear codes $C_F$ in Corollaries \ref{cors=0} and \ref{cors=1} are better than that of in many previous paper. For instance, when $n=m=5$, the minimal code $C_F$ generated by Corollary \ref{cors=1} is a three-weight minimal $[242, 10, 144]$ linear code, while the minimal code $C_{g(5,2)}$ generated by \cite[Example 1]{Heng-2018-FFA} is a five-weight minimal $[242, 6, 50]$ linear code; when $n=m=7$, the minimal code $C_F$ in Corollary \ref{cors=1} is a three-weight minimal $[2186,14,1404]$ linear code, while the minimal code $C_{g(7,2)}$ in \cite[Example 2]{Heng-2018-FFA} is a nine-weight minimal $[2186, 8, 98]$ linear code. In these two cases, the weight, dimension and minimal distance of the linear codes generated by Corollary \ref{cors=1} are clearly better than that of in  Examples 1 and 2 of \cite{Heng-2018-FFA}.

It is noted in Theorem \ref{th3weirgl} that $w_{\min}=3^n-3^{n-1}-3^{\frac{n+s}{2}}+3^{\frac{n+s}{2}-1}$ and $w_{\max}=3^n-3^{n-1}+3^{\frac{n+s}{2}-1}$, which satisfy $\frac{w_{\min}}{w_{\max}}>\frac{2}{3}$ for any $n-s\geq 4$. Therefore, the minimal codes obtained from this section satisfy the AB condition.

In literatures, tremendous efforts  (see, e.g., \cite{Ding-2015-IT, Ding-2013-IT, Ding-2015-IT2, Sihem-2019, Sihem-2020-IT2, Pelen-2022-AMC, Rodriguez-2023-IT, Tangcm-2016-IT, Xia-2017-FFA, Xu-2019-IT, Yang-2017-DCC, Zhou-2018-DCC, Zhouzc-2016-DCC}) have been paid to find linear codes with few weights, because of their important applications in data storage systems, communication systems and consumer electronics. At the end of this section, we collect the parameters of some known three-weight ternary linear codes in Table \ref{b3-weight}, from which one can easily see that the dimensions and minimal distances of the ternary linear codes given by Theorems \ref{th3weirgl}, Corollaries \ref{cors=0} and \ref{cors=1} are better than many of the others.

\begin{table*}[t]\setlength{\abovecaptionskip}{0cm}
 \small%\footnotesize
\caption{The parameters of some known three-weight ternary linear codes}  \centering \label{b3-weight}
\medskip
\begin{tabular}{|c|c|c|c|}
 \hline
Length& Dimension & Minimal distance & References\\

 \hline
$\frac{3^{3n-1}-1}{2}$ &  $3n$  &$3^{3n-2}-3^{2n-2}$          &{\cite[Theorem 16]{Ding-2015-IT}}\\
 \hline
$3^n-1$ &  $2n$  &$2\times3^{n-1}-3^{\frac{n-1}{2}}$          &{\cite[Theorems 4.3, 4.8]{Ding-2013-IT}}\\
\hline
$3^n-1$ &  $2n$  &$2\times(3^{n-1}-3^{\frac{n-1}{2}})$          &{\cite[Theorem 5.2]{Ding-2013-IT}}\\
\hline
$3^{n-1}-1$    &  $n$   &$2\times(3^{n-2}-3^{\frac{n-3}{2}})$ &{\cite[Theorem 1]{Ding-2015-IT2}}\\
\hline
$\frac{3^{n-1}-1}{2}$    &  $n$   &$3^{n-2}-3^{\frac{n-3}{2}}$ &{\cite[Corallary 3]{Ding-2015-IT2}}\\
\hline
 $3^n-1$      & $n+1$ & $2\!\times\!(3^{n-1}\!-\! 3^{\frac{n+s-2}{2}})$ or
  $3^n\!-\!3^{n-1}\!-\! 3^{\frac{n+s-2}{2}}$                       & {\cite[Theorem 3]{Sihem-2019}} \\
 \hline
 $3^n-1$      & $n+1$ &  $3^n-3^{n-1}- 3^{\frac{n+s-1}{2}}$ & {\cite[Theorem 4]{Sihem-2019}} \\
 \hline
 $3^{n-1}-1$  & $n$   & $2\times(3^{n-2}-3^{\frac{n+s-3}{2}})$ & {\cite[Theorem 1]{Sihem-2020-IT2}} \\
  \hline
 $3^{r-1}\!-\!3^{\frac{n}{2}-1}\!+\!3^{\frac{n}{2}}\!-\!1$ & $r(r\!\geq\!\frac{n}{2}\!+\!1)$  &   $2\times3^{r-2}$
                                                             &  {\cite[Theorem 3.1]{Pelen-2022-AMC}}\\
 \hline
  $3^{r-1}+3^{\frac{n-1}{2}}$ & $r(r\geq\frac{n+1}{2})$  &   $2\times3^{r-2}$
                                                             &  {\cite[Theorem 3.2]{Pelen-2022-AMC}}\\
  \hline
  $3^{r-1}+3^{\frac{n}{2}-1}$ & $r(r\!\geq\!\frac{n}{2}\!+\!1)$  &   $2\times3^{r-2}$
                                                             &  {\cite[Theorem 4.1]{Pelen-2022-AMC}}\\
 \hline
  $3^n-1$     & $n+1$ &  $3^n-3^{n-1}-3^{\frac{n+k-1}{2}} $ &   %(n+k\ is\ odd)
                                                     {\cite[Theorems 4, 5]{Rodriguez-2023-IT}}  \\
  \hline
$3^{n-1}-1$            &  $n$   &$2\times(3^{n-2}-3^{\frac{n-3}{2}})$  &{\cite[Theorem 16]{Tangcm-2016-IT}}\\
\hline
$\frac{3^{n-1}-1}{2}$            &  $n$   &$3^{n-2}-3^{\frac{n-3}{2}}$  &{\cite[Corollary 18]{Tangcm-2016-IT}}\\
\hline
$3^{n-1}-1$            &  $n$   &$2\times(3^{n-2}-3^{\frac{n+e}{2}-2})$  &{\cite[Theorem 1]{Xia-2017-FFA}}\\
\hline
$3^{n-1}-1$            &  $n$   &$2\times3^{n-2}-3^{\frac{n+e}{2}-2}$  &{\cite[Theorem 2]{Xia-2017-FFA}}\\
\hline
 $3^n-1$      & $n$   & $2^2\times 3^{n-2}$                 & {\cite[Theorem 1]{Xu-2019-IT}} \\
%\hline
%$3^{n-1}$            & $n$    & $3^{n-2}$                  & {\cite[Theorem 1]{XU-2017-IWSDA}}\\
\hline
$3^{n-1}-1$            & $n$    &$2\!\times\!(3^{n-2}\!-\!3^{\frac{n-2}{2}})$ or $2\!\times\!(3^{n-2}\!-\!3^{\frac{n-3}{2}})$                                                             & {\cite[Theprem 2]{Yang-2017-DCC}}\\
\hline
$\frac{3^{sk-1}}{2},s=2l+1$ & $sk$   &$3^{sk-2}-3^{(l+1)k-2}$    &{\cite[Theorem 3]{Zhou-2018-DCC}}\\
 \hline
$3^{n-1}-1$         &  $n$   & $2\times(3^{n-2}-3^{\frac{n-3}{2}})$ &{\cite[Theorem 1]{Zhouzc-2016-DCC}}\\
\hline
$\frac{3^{n-1}-1}{2}$         &  $n$   & $3^{n-2}-3^{\frac{n-3}{2}}$ &{\cite[Corollary 6]{Zhouzc-2016-DCC}}\\
 \hline
 $3^{n}-1$ &   $n+m$ &  $3^n-3^{n-1}-3^{\frac{n+s}{2}}+3^{\frac{n+s}{2}-1}$  & Theorem \ref{th3weirgl}\\
 \hline
 $3^{n}-1$ &   $n+m$ &  $3^n-3^{n-1}-3^{\frac{n}{2}}+3^{\frac{n}{2}-1}$  & Corollary \ref{cors=0} \\
 \hline
 $3^{n}-1$ &   $n+m$ &  $3^n-3^{n-1}-3^{\frac{n+1}{2}}+3^{\frac{n+1}{2}-1}$  & Corollary \ref{cors=1} \\
 \hline

\end{tabular}
\end{table*}

\section{A generic construction of minimal ternary linear codes going against the AB condition}\label{sec:mvab}

 Similarly as that of Section V of \cite{LYJetal-2023-IT}, in this section, we shall exhibit a generic construction of minimal ternary linear codes going against the AB condition. We first define the following construction of ternary linear codes.

 \begin{construction}\label{con2}
 Let $f$ be a function from $\mathbb{F}_3^n$ to $\mathbb{F}_3$ such that $f({\bf0_n})=0$, $\max_{\nu\in\mathbb{F}_3^n} Re\big(W_f(\nu)\big)$ $\geq0$, $\min_{\nu\in\mathbb{F}_p^n}Re(W_f(\nu))\leq 0$, and the following two conditions are fulfilled:
\begin{enumerate}
\item[{\rm({\bf a})}] $Re\big(W_f(\nu)+ W_f(\nu')+\theta W_f(\nu'')\big)\neq 3^n$ for any pairwise distinct $\nu$,$\nu'$,$\nu''\in\mathbb{F}_3^n$ with $\nu+ \nu'+\nu''={\bf 0_n}$, and for any $\theta\in\{1,-2\}$;

\item[{\rm({\bf b})}] $3\max_{\nu\in\mathbb{F}_3^n}Re(W_f(\nu))
-2\min_{\nu\in\mathbb{F}_3^n}Re(W_f(\nu))\geq 3^n$.
 \end{enumerate}
 Let $G$ be a function from $\mathbb{F}_3^n$ to $\mathbb{F}_3^{m-1}$ such that $G(\mathbf{0_n})=\mathbf{0_{m-1}}$,  $\mu_1f+\tilde{\mu}\cdot G$ is not affine for all nonzero $\mu=(\mu_1,\tilde{\mu})\in\mathbb{F}_3\times\mathbb{F}_3^{m-1}$, and  the linear code $C_G$ defined by
 \begin{align}\label{conG}
 C_G=\bigg\{c(\tilde{\mu},\nu)=\big(\tilde{\mu}\cdot G(x)+\nu\cdot x\big)_{x\in\mathbb{F}_3^{n*}} : (\tilde{\mu},\nu)\in\mathbb{F}_3^{m-1}\times \mathbb{F}_3^n\bigg\}
 \end{align}
 is minimal.
Let $F$ be the function from $\mathbb{F}_3^n$ to $\mathbb{F}_3^m$ defined as
\begin{align}\label{mainfunction}
F(x)=(f(x), G(x)),
 \end{align}
 and let $C_F$ be the linear code generated by Construction \ref{ourcon}.
 \end{construction}

  Then we find some conditions  such that the ternary linear code $C_F$ generated by Construction \ref{con2} is minimal and violates the AB condition. Similarly as that of \cite[Proposition 4]{LYJetal-2023-IT}, we need first  to determine the Walsh transform of $F$.

 \begin{proposition}\label{propwalshtransform}
Let $F$ be the function from $\mathbb{F}_3^n$ to $\mathbb{F}_3^m$ defined as in Construction \ref{con2}. Then the Walsh transform of $F$ at $\mu=(\mu_1,\tilde{\mu})\in\mathbb{F}_3\times\mathbb{F}_3^{m-1}\backslash\{(0,{\bf0_{m-1}})\}$, $\nu\in\mathbb{F}_3^n$ is given by
\begin{align*}
W_F(\mu,\nu)=\begin{cases}
W_G(\tilde{\mu},\nu),\hspace{0.3cm}&{\rm if} ~\mu_1=0,~\tilde{\mu}\neq{\mathbf{0_{m-1}}},\\
W_f(\nu),\hspace{0.3cm}&{\rm if} ~\mu_1=1,~\tilde{\mu}={\mathbf{0_{m-1}}},\\
W_{A_{\tilde{\mu}}}(\nu), \hspace{0.3cm}&{\rm if} ~\mu_1=1,~\tilde{\mu}\neq{\mathbf{0_{m-1}}},\\
\overline{W_{f}(-\nu)},\hspace{0.3cm}&{\rm if} ~\mu_1=-1,~\tilde{\mu}={\mathbf{0_{m-1}}},\\
\overline{W_{A_{2\tilde{\mu}}}(-\nu)}, \hspace{0.3cm}&{\rm if} ~\mu_1=-1,~\tilde{\mu}\neq{\mathbf{0_{m-1}}},
\end{cases}
\end{align*}
where $A_{{\tilde{\mu}}}(x)=f(x)+\tilde{\mu}\cdot G(x)$.
 % {\color{blue}and $B_{{\tilde{\mu}}}(x)=f(x)-\tilde{\mu}\cdot G(x)$}.
\end{proposition}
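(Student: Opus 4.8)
The plan is to unwind the definition of the Walsh transform and run a direct case analysis on $\mu=(\mu_1,\tilde\mu)$. Since $F=(f,G)$, for every $\mu=(\mu_1,\tilde\mu)\in\mathbb{F}_3\times\mathbb{F}_3^{m-1}$ the component function is $\mu\cdot F(x)=\mu_1 f(x)+\tilde\mu\cdot G(x)$, so by the definition of $W_F$,
\begin{align*}
W_F(\mu,\nu)=\sum_{x\in\mathbb{F}_3^n}\zeta_3^{\,\mu_1 f(x)+\tilde\mu\cdot G(x)-\nu\cdot x}.
\end{align*}
The five rows of the claimed formula correspond exactly to the five admissible possibilities for $(\mu_1,\tilde\mu)$ with $(\mu_1,\tilde\mu)\neq(0,\mathbf{0_{m-1}})$: namely $\mu_1=0$ (which forces $\tilde\mu\neq\mathbf{0_{m-1}}$), and $\mu_1\in\{1,-1\}$ with $\tilde\mu$ either zero or nonzero.

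For the three cases with $\mu_1\in\{0,1\}$ I would simply substitute and read the answer off against the scalar Walsh-transform definitions. When $\mu_1=0$ the exponent is $\tilde\mu\cdot G(x)-\nu\cdot x$, yielding $W_G(\tilde\mu,\nu)$; when $\mu_1=1,\tilde\mu=\mathbf{0_{m-1}}$ it is $f(x)-\nu\cdot x$, yielding $W_f(\nu)$; and when $\mu_1=1,\tilde\mu\neq\mathbf{0_{m-1}}$ it is $A_{\tilde\mu}(x)-\nu\cdot x$ by the very definition $A_{\tilde\mu}(x)=f(x)+\tilde\mu\cdot G(x)$, yielding $W_{A_{\tilde\mu}}(\nu)$. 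These three identifications are immediate.

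The two cases with $\mu_1=-1$ rely on the elementary conjugation identity $\overline{W_h(-\nu)}=\sum_{x\in\mathbb{F}_3^n}\zeta_3^{-h(x)-\nu\cdot x}$, valid for any $h:\mathbb{F}_3^n\to\mathbb{F}_3$ because $\overline{\zeta_3^{\,t}}=\zeta_3^{-t}$ and $W_h(-\nu)=\sum_x\zeta_3^{\,h(x)+\nu\cdot x}$. For $\mu_1=-1,\tilde\mu=\mathbf{0_{m-1}}$ the exponent is $-f(x)-\nu\cdot x$, which is precisely $\overline{W_f(-\nu)}$. The one step that deserves care is the last row, $\mu_1=-1,\tilde\mu\neq\mathbf{0_{m-1}}$: here the exponent is $-f(x)+\tilde\mu\cdot G(x)-\nu\cdot x$, and the key is to recognize $-f(x)+\tilde\mu\cdot G(x)=-\big(f(x)+2\tilde\mu\cdot G(x)\big)=-A_{2\tilde\mu}(x)$, using $-1\equiv 2\pmod 3$ so that the surviving coefficient $\tilde\mu$ becomes $2\tilde\mu$ after factoring out the global minus sign. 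Applying the conjugation identity with $h=A_{2\tilde\mu}$ then produces $\overline{W_{A_{2\tilde\mu}}(-\nu)}$, and the proposition follows from the six displayed identifications.

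The argument is essentially mechanical; there is no genuine obstacle, only a bookkeeping subtlety to flag. In the $\mu_1=-1$ rows one must keep the $G$-coefficient honest: factoring out the $-1$ from the $f$-term simultaneously turns $\tilde\mu\cdot G$ into $2\tilde\mu\cdot G$ inside the argument of $A$, rather than negating $\tilde\mu$, and it is precisely this $\mathbb{F}_3$-arithmetic that distinguishes the third and fifth rows of the formula. Once that is tracked correctly, every case closes by inspection.
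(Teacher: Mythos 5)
Your proposal is correct and follows essentially the same route as the paper: expand $W_F(\mu,\nu)=\sum_{x}\zeta_3^{\mu_1 f(x)+\tilde\mu\cdot G(x)-\nu\cdot x}$ and read off each case, using $\overline{\zeta_3^{\,j}}=\zeta_3^{-j}$ for the $\mu_1=-1$ rows. The paper's proof is a one-line version of yours; your explicit tracking of $-\tilde\mu=2\tilde\mu$ in $\mathbb{F}_3$ for the last row is exactly the bookkeeping the paper leaves implicit.
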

 \begin{proof}
Recall that $F=(f, G)$ in Construction \ref{con2}. For any $\mu=(\mu_1,\tilde{\mu})\in\mathbb{F}_3\times\mathbb{F}_3^{m-1}\backslash\{(0,{\bf0_{m-1}})\}$ and $\nu\in\mathbb{F}_3^n$, one has
\begin{align*}
W_F(\mu,\nu)=\sum_{x\in\mathbb{F}_3^n}\zeta_3^{\mu\cdot F(x)-\nu\cdot x}=\sum_{x\in\mathbb{F}_3^n}\zeta_3^{\mu_1f(x)+\tilde{\mu}\cdot G(x)-\nu\cdot x}.
\end{align*}
Then the result follows from the fact that $\zeta_3^{-j}=\overline{\zeta_3^j}$ for any $j\in\mathbb{F}_3$.
\end{proof}

From Proposition \ref{propwalshtransform}, we know that
\begin{align*}
\max_{\mu\in\mathbb{F}_3^{m*},\nu\in\mathbb{F}_3^n}Re(W_F(\mu,\nu))\geq \max_{\nu\in\mathbb{F}_3^n}Re(W_f(\nu))\hspace{0.2cm}{\rm and}\hspace{0.2cm}
\min_{\mu\in\mathbb{F}_3^{m*},\nu\in\mathbb{F}_3^n}Re(W_F(\mu,\nu))\leq \min_{\nu\in\mathbb{F}_3^n}Re(W_f(\nu)),
\end{align*}
and hence according to Condition ({\bf b}) of Construction \ref{con2}, one has
\begin{align*}
3\max_{\mu\in\mathbb{F}_3^{m*},\nu\in\mathbb{F}_3^n}Re(W_F(\mu,\nu))
-2\min_{\mu\in\mathbb{F}_3^{m*},\nu\in\mathbb{F}_3^n}Re(W_F(\mu,\nu))\geq 3^n.
\end{align*}
Then by Remark \ref{rmkABcondition}, we deduce the following proposition.

\begin{proposition}\label{propviolateAB}
The ternary linear code $C_F$ generated by Construction \ref{con2} does not satisfy the AB condition.
\end{proposition}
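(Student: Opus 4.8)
The plan is to recognize that Remark \ref{rmkABcondition} already packages the desired implication: it shows that a function $F:\mathbb{F}_3^n\to\mathbb{F}_3^m$ with $F({\bf 0_n})={\bf 0_m}$ fails the AB condition (via Lemma \ref{lemABcondition} and Proposition \ref{weightdistri}) as soon as three conditions on the real parts of its Walsh transform hold. So the entire task reduces to verifying, for $F=(f,G)$ as in Construction \ref{con2}, that (i) $F({\bf 0_n})={\bf 0_m}$, (ii) $\max_{\mu\in\mathbb{F}_3^{m*},\nu\in\mathbb{F}_3^n}Re(W_F(\mu,\nu))\geq 0$ and $\min_{\mu\in\mathbb{F}_3^{m*},\nu\in\mathbb{F}_3^n}Re(W_F(\mu,\nu))\leq 0$, and (iii) the key inequality $3\max-2\min\geq 3^n$. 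The normalization (i) is immediate: since Construction \ref{con2} requires $f({\bf 0_n})=0$ and $G({\bf 0_n})={\bf 0_{m-1}}$, we get $F({\bf 0_n})=(f({\bf 0_n}),G({\bf 0_n}))={\bf 0_m}$.

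The heart of the argument is that $f$ is literally one component of $F$. First I would take $\mu=(1,{\bf 0_{m-1}})\in\mathbb{F}_3^{m*}$, for which Proposition \ref{propwalshtransform} gives $W_F((1,{\bf 0_{m-1}}),\nu)=W_f(\nu)$ for every $\nu\in\mathbb{F}_3^n$. Thus every value $Re(W_f(\nu))$ occurs among the quantities $Re(W_F(\mu,\nu))$ as $\mu$ ranges over $\mathbb{F}_3^{m*}$, whence
\begin{align*}
\max_{\mu\in\mathbb{F}_3^{m*},\nu\in\mathbb{F}_3^n}Re(W_F(\mu,\nu))\geq \max_{\nu\in\mathbb{F}_3^n}Re(W_f(\nu))\hspace{0.2cm}{\rm and}\hspace{0.2cm}\min_{\mu\in\mathbb{F}_3^{m*},\nu\in\mathbb{F}_3^n}Re(W_F(\mu,\nu))\leq \min_{\nu\in\mathbb{F}_3^n}Re(W_f(\nu)).
\end{align*}
The standing hypotheses on $f$ supply $\max_{\nu}Re(W_f(\nu))\geq 0$ and $\min_{\nu}Re(W_f(\nu))\leq 0$, so the sign conditions (ii) transfer to $F$. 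For (iii), I would multiply the first inequality above by $3$ and the second by $-2$, then add: the left-hand side dominates $3\max_{\nu}Re(W_f(\nu))-2\min_{\nu}Re(W_f(\nu))$, which is $\geq 3^n$ by Condition ({\bf b}) of Construction \ref{con2}. Feeding these three verified facts into Remark \ref{rmkABcondition} yields the conclusion that $C_F$ violates the AB condition.

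There is no deep obstacle in this proposition; it is bookkeeping sitting directly on top of Proposition \ref{propwalshtransform}. The only point demanding attention is the direction of the inequality when passing to the minimum: because the coefficient $-2$ is negative, I must use $\min Re(W_F)\leq\min Re(W_f)$ precisely so that $-2\min Re(W_F)\geq -2\min Re(W_f)$, keeping the summed inequality oriented correctly toward the bound $3^n$. Once that sign bookkeeping is handled, the statement is an immediate corollary of the cited results.
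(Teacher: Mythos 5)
Your proof is correct and follows essentially the same route as the paper: both arguments use Proposition \ref{propwalshtransform} to observe that $f$ is the component of $F$ at $\mu=(1,\mathbf{0_{m-1}})$, deduce $\max_{\mu,\nu}Re(W_F(\mu,\nu))\geq\max_{\nu}Re(W_f(\nu))$ and $\min_{\mu,\nu}Re(W_F(\mu,\nu))\leq\min_{\nu}Re(W_f(\nu))$, and then combine Condition ({\bf b}) of Construction \ref{con2} with Remark \ref{rmkABcondition}. Your extra care with the sign of the $-2$ coefficient is a correct and welcome piece of bookkeeping that the paper leaves implicit.
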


In the following theorem, we give a necessary and sufficient condition such that the ternary linear code $C_F$ in Construction \ref{con2} is minimal.

\begin{theorem}\label{thNS}
The ternary linear code $C_F$ defined by Construction \ref{con2} is minimal if and only if the following three conditions are satisfied:
\begin{enumerate}
\item[(1)] For any $\tilde{\mu}\in\mathbb{F}_3^{m-1}\backslash\{\mathbf{0_{m-1}}\}$, any pairwise distinct $\nu,\nu',\nu''\in\mathbb{F}_3^n$ with $\nu+ \nu'+\nu''={\bf0_n}$, and any $\theta\in\{1,-2\}$, it holds that
\begin{align}\label{eqgAB1}
Re\big(W_{A_{\tilde{\mu}}}(\nu)+W_{A_{\tilde{\mu}}}(\nu')+\theta W_{A_{\tilde{\mu}}}(\nu'')\big)\neq 3^{n};
\end{align}

\item[(2)] For any $\tilde{\mu}\in\mathbb{F}_3^{m-1}\backslash\{\mathbf{0_{m-1}}\}$, any $\nu,\nu'\in\mathbb{F}_3^n$, and any $\lambda_1,\lambda_2,\lambda_3,\lambda_4\in \{1, -2\}$ with one of them being $-2$,  it holds that
\begin{align}\label{eqgAB2}
Re\big(\lambda_1W_f(\nu)+\lambda_2W_G(\tilde{\mu},\nu')+\lambda_3W_{A_{\tilde{\mu}}}(\nu+\nu')
+\lambda_4W_{A_{2\tilde{\mu}}}(\nu-\nu')\big)\neq3^n;
\end{align}

\item[(3)] For any  $\tilde{\mu},\tilde{\mu}',\tilde{\mu}\pm\tilde{\mu}'\in\mathbb{F}_3^{m-1}\backslash\{\mathbf{0_{m-1}}\}$,  any $\nu,\nu'\in\mathbb{F}_3^n$,  and any $\lambda_1,\lambda_2,\lambda_3,\lambda_4\in \{1, -2\}$ with one of them being $-2$, it holds that
    \begin{align}\label{eqgAB3}
Re\big(\lambda_1W_{A_{\tilde{\mu}}}(\nu)+\lambda_2W_G(\tilde{\mu}',\nu')
+\lambda_3W_{A_{\tilde{\mu}+\tilde{\mu}'}}(\nu+\nu')
+\lambda_4W_{A_{\tilde{\mu}-\tilde{\mu}'}}(\nu-\nu')\big)\neq 3^n.
\end{align}
\end{enumerate}
\end{theorem}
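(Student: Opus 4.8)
The plan is to apply Theorem \ref{thminimality} to the function $F=(f,G)$ and then translate its two abstract conditions into statements about $W_f$, $W_G$ and $W_{A_{\tilde\mu}}$ via the Walsh description in Proposition \ref{propwalshtransform}. First I would verify the hypotheses of Theorem \ref{thminimality}: one has $F({\bf 0_n})={\bf 0_m}$ since $f({\bf 0_n})=0$ and $G({\bf 0_n})={\bf 0_{m-1}}$, and $\mu\cdot F=\mu_1 f+\tilde\mu\cdot G$ is not affine for every nonzero $\mu=(\mu_1,\tilde\mu)$ by the assumption in Construction \ref{con2}. Hence $C_F$ is minimal if and only if Conditions (1) and (2) of Theorem \ref{thminimality} both hold, and the whole argument reduces to rewriting those two conditions in the canonical forms \eqref{eqgAB1}, \eqref{eqgAB2}, \eqref{eqgAB3}.

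For Condition (1) of Theorem \ref{thminimality}, I would split the quantifier over $\mu=(\mu_1,\tilde\mu)\in\mathbb{F}_3^{m*}$ according to Proposition \ref{propwalshtransform}. When $\mu_1=0$ and $\tilde\mu\neq{\bf 0_{m-1}}$, the values $W_F(\mu,\cdot)$ equal $W_G(\tilde\mu,\cdot)$, so the inequality is exactly Condition (1) of Theorem \ref{thminimality} for $G$ and holds automatically because $C_G$ is minimal. When $\tilde\mu={\bf 0_{m-1}}$ and $\mu_1=\pm1$, the values are $W_f(\nu)$ or $\overline{W_f(-\nu)}$; since $Re(\bar\xi)=Re(\xi)$ and $\nu\mapsto-\nu$ permutes admissible triples, the inequality collapses to Condition~({\bf a}) of Construction \ref{con2}. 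The only genuinely new case is $\mu_1=\pm1$, $\tilde\mu\neq{\bf 0_{m-1}}$, where $W_F(\mu,\cdot)$ equals $W_{A_{\tilde\mu}}$ or $\overline{W_{A_{2\tilde\mu}}(-\cdot)}$; after the same conjugation-and-negation reduction, and using that $2\tilde\mu$ again ranges over $\mathbb{F}_3^{m-1}\setminus\{{\bf 0_{m-1}}\}$, this becomes precisely \eqref{eqgAB1}. Thus Condition (1) of Theorem \ref{thminimality} is equivalent to Condition (1) of Theorem \ref{thNS}.

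The bulk of the work is Condition (2) of Theorem \ref{thminimality}, which I would organize by the pair of first coordinates $(\mu_1,\mu_1')$ together with the vanishing pattern of the second coordinates $\tilde\mu,\tilde\mu',\tilde\mu\pm\tilde\mu'$. For $(\mu_1,\mu_1')=(0,0)$ all four components $\mu,\mu',\mu\pm\mu'$ have first coordinate $0$, the inequality is Condition (2) of Theorem \ref{thminimality} for $G$, and it holds because $C_G$ is minimal. For every other $(\mu_1,\mu_1')$ exactly one of the four components has first coordinate $0$, producing a single $W_G$-term; the remaining three yield $W_f$ whenever a second coordinate vanishes and $W_{A_{\cdot}}$ otherwise. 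One then checks that $f$ can appear in at most one slot (two occurrences would force $\mu'={\bf 0_m}$ or $\mu=\pm\mu'$). When $f$ does appear, the surviving $G$-blocks collapse to a single $\tilde\mu$ and the four terms become $W_f$, $W_G(\tilde\mu,\cdot)$, $W_{A_{\tilde\mu}}$, $W_{A_{2\tilde\mu}}$, i.e.\ the left-hand side of \eqref{eqgAB2}; when $f$ never appears alone, the second coordinates $\tilde\mu,\tilde\mu',\tilde\mu\pm\tilde\mu'$ are all nonzero and the terms become $W_{A_{\tilde\mu}}$, $W_G(\tilde\mu',\cdot)$, $W_{A_{\tilde\mu+\tilde\mu'}}$, $W_{A_{\tilde\mu-\tilde\mu'}}$, i.e.\ the left-hand side of \eqref{eqgAB3}.

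The hard part will be tracking the sign pattern and the conjugations precisely. In Theorem \ref{thminimality} the coefficient $-2$ is pinned to the slot $\mu-\mu'$, but which of the four functions occupies that slot depends on the choice of $(\mu,\mu')$; running over all admissible pairs generating the same quadruple of functions rotates the $-2$ through all four positions, which is exactly why \eqref{eqgAB2} and \eqref{eqgAB3} quantify over all $\lambda_1,\lambda_2,\lambda_3,\lambda_4\in\{1,-2\}$ with a single entry equal to $-2$. Simultaneously the cases with $\mu_1=-1$ introduce conjugates $\overline{W_f(-\nu)}$ and $\overline{W_{A_{2\tilde\mu}}(-\nu)}$; since only real parts enter and the substitutions $\nu\mapsto-\nu$ and the linear relabellings of $(\nu,\nu')$ are bijections absorbed by the free quantifiers, each such case normalises to the canonical arguments $\nu,\nu',\nu+\nu',\nu-\nu'$. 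Verifying that this trichotomy is exhaustive (every admissible $(\mu,\mu')$ lands in one bucket) and non-redundant (no sign pattern is missed or spuriously forced) is the crux; once it is done, Condition (2) of Theorem \ref{thminimality} is equivalent to the conjunction of Conditions (2) and (3) of Theorem \ref{thNS}, which completes the equivalence.
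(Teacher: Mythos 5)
Your proposal is correct and follows essentially the same route as the paper: reduce to the two conditions of Theorem \ref{thminimality}, translate each via Proposition \ref{propwalshtransform}, and case-split on $(\mu_1,\mu_1')$ together with the vanishing pattern of $\tilde{\mu},\tilde{\mu}',\tilde{\mu}\pm\tilde{\mu}'$, with the rotation of the $-2$ coefficient explaining the quantification over the $\lambda_i$. The only difference is presentational: the paper enumerates all $33$ cases explicitly in a table, whereas you organize them by the trichotomy of which function types ($G$ only; one $f$; no $f$) occupy the four slots, which is a valid and arguably cleaner packaging of the same argument.
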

\begin{proof}
To proof this theorem, we need to check the two conditions of Theorem \ref{thminimality}. We first verify the first condition of Theorem \ref{thminimality}.  For any $\mu=(\mu_1,\tilde{\mu})\in\mathbb{F}_3\times\mathbb{F}_3^{m-1}\backslash\{(0,{\bf0_{m-1}})\}$, any pairwise distinct $\nu,\nu',\nu''\in\mathbb{F}_3^n$ with $\nu+\nu'+\nu''={\bf0_n}$, and any $\theta\in\{1,-2\}$, let $\Delta=Re\big(W_F(\mu,\nu)+W_F(\mu,\nu')+\theta W_F(\mu,\nu'')\big)$. Then by Proposition \ref{propwalshtransform}, one has
\begin{align*}
\Delta
=\begin{cases}
Re\big(W_G(\tilde{\mu},\nu)+W_G(\tilde{\mu},\nu')
+\theta W_G(\tilde{\mu},\nu'')\big),&{\rm if}~\mu=(0,\tilde{\mu}),\\
Re\big(W_f(\nu)+W_f(\nu')+\theta W_f(\nu'')\big),~~&{\rm if}~\mu=(1,{\bf0_{m-1}}),\\
Re\big(W_{A_{\tilde{\mu}}}(\nu)+W_{A_{\tilde{\mu}}}(\nu')+\theta W_{A_{\tilde{\mu}}}(\nu'')\big),
&{\rm if}~\mu=(1,\tilde{\mu}),\tilde{\mu}\neq{\bf0_{m-1}},\\
Re\big(W_f(-\nu)+W_f(-\nu')+\theta W_f(-\nu'')\big),~~&{\rm if}~\mu=(-1,{\bf0_{m-1}}),\\
Re\big(W_{A_{2\tilde{\mu}}}(-\nu)+W_{A_{2\tilde{\mu}}}(-\nu')+\theta W_{A_{2\tilde{\mu}}}(-\nu'')\big),
&{\rm if}~\mu=(-1,\tilde{\mu}),\tilde{\mu}\neq{\bf0_{m-1}}.
\end{cases}
\end{align*}
By Condition ({\bf a}) of Construction \ref{con2}, we have $\Delta\neq 3^n$ if $\mu=(1,{\bf0_{m-1}})$ and $\mu=(-1,{\bf0_{m-1}})$. By Theorem \ref{thminimality} and the minimality of $C_G$ in Construction \ref{con2}, we have $\Delta\neq 3^n$ if $\mu=(0,\tilde{\mu})$. Hence, for any $\mu=(\mu_1,\tilde{\mu})\in\mathbb{F}_3\times\mathbb{F}_3^{m-1}\backslash\{(0,{\bf0_{m-1}})\}$, any pairwise distinct $\nu,\nu',\nu''\in\mathbb{F}_3^n$ with $\nu+\nu'+\nu''={\bf0_n}$, and any $\theta\in\{1,-2\}$, $\Delta\neq 3^n$ if and only if for any $\tilde{\mu}\in\mathbb{F}_3^{m-1}\backslash\{{\bf0_{m-1}}\}$,  any pairwise distinct $\nu,\nu',\nu''\in\mathbb{F}_3^n$ with $\nu+\nu'+\nu''={\bf0_n}$, and any $\theta\in\{1,-2\}$, $Re\big(W_{A_{\tilde{\mu}}}(\nu)+W_{A_{\tilde{\mu}}}(\nu')+\theta W_{A_{\tilde{\mu}}}(\nu'')\big)\neq 3^n$, that is, Condition (1) of this theorem holds.

Now we verify Condition (2) of Theorem \ref{thminimality}. For any $\mu,\mu'\in\mathbb{F}_{3}^{m*}$ with $\mu\neq\pm\mu'$, and any $\nu,\nu'\in\mathbb{F}_3^n$,
let $\Gamma=Re\big(W_F(\mu,\nu)+W_F(\mu',\nu')+W_F(\mu+\mu',\nu+\nu')-2W_F(\mu-\mu',\nu-\nu')\big)$.  Then by the relations of $\mu=(\mu_1,\tilde{\mu})$ and $\mu'=(\mu_1',\tilde{\mu}')$ with $\mu,\mu',\mu+\mu',\mu-\mu'\in\mathbb{F}_3^{m*}$, the discussions can be divided into 33 cases, which are given in Table \ref{b5}. Since the proof of each case is similar, we only give the detail proofs of the cases 1, 2 and 5.

{\bf Case 1.} $\mu_1=\mu_1'=0$, $\tilde{\mu}, \tilde{\mu}'$ and $\tilde{\mu}\pm\tilde{\mu}'$ are not $\mathbf{0_{m-1}}$. In this case, we obtain from Proposition \ref{propwalshtransform} that
\begin{align*}
\Gamma=Re\big(W_G(\tilde{\mu},\nu)+W_G(\tilde{\mu}',\nu')+W_G(\tilde{\mu}+\tilde{\mu}',\nu+\nu')
-2W_G(\tilde{\mu}-\tilde{\mu}',\nu-\nu')\big),
\end{align*}
which is not equal to $3^n$ for any $\nu,\nu'\in\mathbb{F}_3^n$ by Theorem \ref{thminimality}, since $C_G$ is minimal.

{\bf Case 2.} $\mu_1=0$, $ \tilde{\mu}\neq \mathbf{0_{m-1}}$, $\mu_1'=1$ and $\tilde{\mu}'=\mathbf{0_{m-1}}$. In this case, by Proposition \ref{propwalshtransform}, we have
\begin{align*}
\Gamma=Re\big(W_G(\tilde{\mu},\nu)+W_f(\nu')+W_{A_{\tilde{\mu}}}(\nu+\nu')
-2W_{A_{2\tilde{\mu}}}(\nu'-\nu)\big).
\end{align*}
Let $a=\nu'$ and $a'=\nu$. Then $a+a'=\nu+\nu'$, $a-a'=\nu'-\nu$, and
\begin{align*}
\Gamma=Re\big(W_f(a)+W_G(\tilde{\mu},a')+W_{A_{\tilde{\mu}}}(a+a')
-2W_{A_{2\tilde{\mu}}}(a-a')\big).
\end{align*}
Hence in this case, $\Gamma\neq 3^n$ for any $\nu,\nu'\in\mathbb{F}_3^n$ if and only if $Re\big(W_f(\nu)+W_G(\tilde{\mu},\nu')+W_{A_{\tilde{\mu}}}(\nu+\nu')
-2W_{A_{2\tilde{\mu}}}(\nu-\nu')\big)\neq 3^n$ for any $\nu,\nu'\in\mathbb{F}_3^n$. This corresponds to Condition (2) of this theorem with $\lambda_1=\lambda_2=\lambda_3=1$ and $\lambda_4=-2$.

{\bf Case 5.} $\mu_1=0$, $ \tilde{\mu}\neq \mathbf{0_{m-1}}$, $\mu_1'=1$, $\tilde{\mu}'\neq\mathbf{0_{m-1}}$ and $\tilde{\mu}\pm\tilde{\mu}'\neq\mathbf{0_{m-1}}$. In this case, by Proposition \ref{propwalshtransform}, we have
\begin{align*}
\Gamma=Re\big(W_G(\tilde{\mu},\nu)+W_{A_{\tilde{\mu}'}}(\nu')+W_{A_{\tilde{\mu}+\tilde{\mu}'}}(\nu+\nu')
-2W_{A_{2(\tilde{\mu}-\tilde{\mu}')}}(\nu'-\nu)\big).
\end{align*}
Let $a=\nu'$, $a'=\nu$. Then $a+a'=\nu+\nu'$ and $a-a'=\nu'-\nu$. Let $b=\tilde{\mu}', b'=\tilde{\mu}$. Then $b+b'=\tilde{\mu}+\tilde{\mu}'$ and $b-b'=\tilde{\mu}'-\tilde{\mu}=2(\tilde{\mu}-\tilde{\mu}')$. Hence, we have
\begin{align*}
\Gamma=Re\big(W_{A_{b}}(a)+W_G(b',a')+W_{A_{b+b'}}(a+a')
-2W_{A_{b-b'}}(a-a')\big).
\end{align*}
Therefore, in this case $\Gamma\neq 3^n$ for any $\nu,\nu'\in\mathbb{F}_3^n$ if and only if $Re\big(W_{A_{\tilde{\mu}}}(\nu)+W_G(\tilde{\mu}',\nu')+W_{A_{\tilde{\mu}+\tilde{\mu}'}}(\nu+\nu')
-2W_{A_{\tilde{\mu}-\tilde{\mu}'}}(\nu-\nu')\big)\neq 3^n$ for any $\nu,\nu'\in\mathbb{F}_3^n$. This corresponds to Condition (3) of this theorem with $\lambda_1=\lambda_2=\lambda_3=1$ and $\lambda_4=-2$.

The other cases and the corresponding values of $\lambda_1,\lambda_2,\lambda_3,\lambda_4$ are given in Table \ref{b5}, where Condition (3) of this theorem is determined by the cases 5, 9, 15, 18, 21, 27,  30 and 33 in Table \ref{b5}; and Condition (2) of this theorem is determined by the other cases in Table \ref{b5}.

The proof is then obtained from the above discussions.
\end{proof}

\begin{table*}[!htbp]\setlength{\abovecaptionskip}{0cm}
%\footnotesize %\small
\caption{The cases of $\mu$ and $\mu'$ in the proof of Theorem \ref{thNS}}  \centering \label{b5}
\medskip
\begin{tabular}{|c|c|c|c|c|c|c|c|c|c|c|}
 \hline
Cases & $\mu_1$ & $\tilde{\mu}$ & $\mu_1'$ & $\tilde{\mu}'$ &$\tilde{\mu} +\tilde{\mu}'$ &
 $\tilde{\mu} -\tilde{\mu}'$ & $\lambda_1$ & $\lambda_2$ & $\lambda_3$ & $\lambda_4$ \\
 \hline
1 & $0 $ & $\neq\mathbf{0_{m-1}}$ & $0 $ & $\neq\mathbf{0_{m-1}}$ &$\neq\mathbf{0_{m-1}}$
& $\neq\mathbf{0_{m-1}}$ & - &-&-&- \\
 \hline
 2 & $0 $ & $\neq\mathbf{0_{m-1}}$ & $1 $ & $\mathbf{0_{m-1}}$ &$\neq\mathbf{0_{m-1}}$ &
 $\neq\mathbf{0_{m-1}}$ & 1 &1&1&-2 \\
 \hline
 3 & $0 $ & $\neq\mathbf{0_{m-1}}$ & $1 $ & $\neq\mathbf{0_{m-1}}$ &$\mathbf{0_{m-1}}$
 & $\neq\mathbf{0_{m-1}}$ & 1& 1 & -2 & 1 \\
 \hline
 4 & $0 $ & $\neq\mathbf{0_{m-1}}$ & $1 $ & $\neq\mathbf{0_{m-1}}$ &$\neq\mathbf{0_{m-1}}$
 & $\mathbf{0_{m-1}}$ & -2& 1 & 1 & 1 \\
 \hline
  5 & $0 $ & $\neq\mathbf{0_{m-1}}$ & $1 $ & $\neq\mathbf{0_{m-1}}$ & $\neq\mathbf{0_{m-1}}$
 & $\neq\mathbf{0_{m-1}}$ & 1& 1 & 1 & -2 \\
  \hline
   6 & $0 $ & $\neq\mathbf{0_{m-1}}$ & $-1 $ & $\mathbf{0_{m-1}}$ &$\neq\mathbf{0_{m-1}}$
 & $\neq\mathbf{0_{m-1}}$ & 1& 1 & -2 & 1 \\
 \hline
  7 & $0 $ & $\neq\mathbf{0_{m-1}}$ & $-1 $ & $\neq\mathbf{0_{m-1}}$ &$\mathbf{0_{m-1}}$
 & $\neq\mathbf{0_{m-1}}$ & 1& 1 & 1 & -2 \\
 \hline
  8 & $0 $ & $\neq\mathbf{0_{m-1}}$ & $-1 $ & $\neq\mathbf{0_{m-1}}$ &$\neq\mathbf{0_{m-1}}$
 & $\mathbf{0_{m-1}}$ & -2& 1 & 1 & 1\\
 \hline
  9 & $0 $ & $\neq\mathbf{0_{m-1}}$ & $-1 $ & $\neq\mathbf{0_{m-1}}$ &$\neq\mathbf{0_{m-1}}$
 & $\neq\mathbf{0_{m-1}}$ & 1& 1 & -2 & 1\\
 \hline
 10 & $1 $ & $ \mathbf{0_{m-1}}$ & $0 $ & $\neq\mathbf{0_{m-1}}$ &$\neq\mathbf{0_{m-1}}$
 & $\neq\mathbf{0_{m-1}}$ & 1& 1 & 1 & -2\\
  \hline
 11 & $1 $ & $ \mathbf{0_{m-1}}$ & $1 $ & $\neq\mathbf{0_{m-1}}$ &$\neq\mathbf{0_{m-1}}$
 & $\neq\mathbf{0_{m-1}}$ & 1& -2 & 1 & 1\\
 \hline
 12 & $1 $ & $ \mathbf{0_{m-1}}$ & $-1 $ & $\neq\mathbf{0_{m-1}}$ &$\neq\mathbf{0_{m-1}}$
 & $\neq\mathbf{0_{m-1}}$ & 1& 1 & -2 & 1\\
 \hline
 13 & $1 $ & $ \neq\mathbf{0_{m-1}}$ & $0 $ & $\neq\mathbf{0_{m-1}}$ &$\mathbf{0_{m-1}}$
 & $\neq\mathbf{0_{m-1}}$ & 1& 1 & -2 & 1\\
 \hline
 14 & $1 $ & $ \neq\mathbf{0_{m-1}}$ & $0 $ & $\neq\mathbf{0_{m-1}}$ &$\neq\mathbf{0_{m-1}}$
 & $\mathbf{0_{m-1}}$ & -2& 1 & 1 & 1\\
 \hline
 15 & $1 $ & $ \neq\mathbf{0_{m-1}}$ & $0 $ & $\neq\mathbf{0_{m-1}}$ &$\neq\mathbf{0_{m-1}}$
 & $\neq\mathbf{0_{m-1}}$ & 1& 1 & 1 & -2\\
 \hline
 16 & $1 $ & $ \neq\mathbf{0_{m-1}}$ & $1 $ & $\mathbf{0_{m-1}}$ &$\neq\mathbf{0_{m-1}}$
 & $\neq\mathbf{0_{m-1}}$ & 1& -2 & 1 & 1\\
 \hline
  17 & $1 $ & $ \neq\mathbf{0_{m-1}}$ & $1 $ & $\neq\mathbf{0_{m-1}}$ &$\mathbf{0_{m-1}}$
 &$\neq \mathbf{0_{m-1}}$ & 1& -2 & 1 & 1\\
  \hline
  18 & $1 $ & $ \neq\mathbf{0_{m-1}}$ & $1 $ & $\neq\mathbf{0_{m-1}}$ &$\neq\mathbf{0_{m-1}}$
 &$\neq \mathbf{0_{m-1}}$ & 1& -2 & 1 & 1\\
 \hline
 19 & $1 $ & $ \neq\mathbf{0_{m-1}}$ & $-1 $ & $\mathbf{0_{m-1}}$ &$\neq\mathbf{0_{m-1}}$
 & $\neq\mathbf{0_{m-1}}$ & 1& 1 & 1 & -2\\
 \hline
  20 & $1 $ & $ \neq\mathbf{0_{m-1}}$ & $-1 $ & $\neq\mathbf{0_{m-1}}$ &$\neq\mathbf{0_{m-1}}$
 & $\mathbf{0_{m-1}}$ & -2& 1 & 1 & 1\\
 \hline
 21 & $1 $ & $ \neq\mathbf{0_{m-1}}$ & $-1 $ & $\neq\mathbf{0_{m-1}}$ &$\neq\mathbf{0_{m-1}}$
 &  $\neq\mathbf{0_{m-1}}$ & -2& 1 & 1 & 1\\
 \hline
 22 & $-1 $ & $ \mathbf{0_{m-1}}$ & $0 $ & $\neq\mathbf{0_{m-1}}$ &$\neq\mathbf{0_{m-1}}$
 &  $\neq\mathbf{0_{m-1}}$ & 1& 1 & -2 & 1\\
 \hline
 23 & $-1 $ & $ \mathbf{0_{m-1}}$ & $1$ & $\neq\mathbf{0_{m-1}}$ &$\neq\mathbf{0_{m-1}}$
 &  $\neq\mathbf{0_{m-1}}$ & 1& 1 & 1 & -2\\
 \hline
 24 & $-1 $ & $ \mathbf{0_{m-1}}$ & $-1$ & $\neq\mathbf{0_{m-1}}$ &$\neq\mathbf{0_{m-1}}$
 &  $\neq\mathbf{0_{m-1}}$ & 1& -2 & 1 & 1\\
 \hline
 25 & $-1 $ & $ \neq\mathbf{0_{m-1}}$ & $0$ & $\neq\mathbf{0_{m-1}}$ &$\mathbf{0_{m-1}}$
 &  $\neq\mathbf{0_{m-1}}$ & 1& 1 & 1 & -2\\
 \hline
 26 & $-1 $ & $ \neq\mathbf{0_{m-1}}$ & $0$ & $\neq\mathbf{0_{m-1}}$ &$\neq\mathbf{0_{m-1}}$
 &  $\mathbf{0_{m-1}}$ & -2& 1 & 1 & 1\\
 \hline
  27 & $-1 $ & $ \neq\mathbf{0_{m-1}}$ & $0$ & $\neq\mathbf{0_{m-1}}$ &$\neq\mathbf{0_{m-1}}$
 &  $\neq\mathbf{0_{m-1}}$ & 1& 1 & -2 & 1\\
 \hline
 28 & $-1 $ & $ \neq\mathbf{0_{m-1}}$ & $1$ & $\mathbf{0_{m-1}}$ &$\neq\mathbf{0_{m-1}}$
 &  $\neq\mathbf{0_{m-1}}$ & 1& 1 & -2 & 1\\
 \hline
 29 & $-1 $ & $ \neq\mathbf{0_{m-1}}$ & $1$ & $\neq\mathbf{0_{m-1}}$ &$\neq\mathbf{0_{m-1}}$
 &  $\mathbf{0_{m-1}}$ & -2& 1 & 1 & 1\\
 \hline
 30 & $-1 $ & $ \neq\mathbf{0_{m-1}}$ & $1$ & $\neq\mathbf{0_{m-1}}$ &$\neq\mathbf{0_{m-1}}$
 &  $\neq\mathbf{0_{m-1}}$ & -2& 1 & 1 & 1\\
 \hline
 31 & $-1 $ & $ \neq\mathbf{0_{m-1}}$ & $-1$ & $\mathbf{0_{m-1}}$ &$\neq\mathbf{0_{m-1}}$
 &  $\neq\mathbf{0_{m-1}}$ & 1& -2 & 1 & 1\\
 \hline
 32 & $-1 $ & $ \neq\mathbf{0_{m-1}}$ & $-1$ & $\neq\mathbf{0_{m-1}}$ &$\mathbf{0_{m-1}}$
 &  $\neq\mathbf{0_{m-1}}$ & 1& -2 & 1 & 1\\
 \hline
 33 & $-1 $ & $ \neq\mathbf{0_{m-1}}$ & $-1$ & $\neq\mathbf{0_{m-1}}$ &$\neq\mathbf{0_{m-1}}$
 &  $\neq\mathbf{0_{m-1}}$ & 1& -2 & 1 & 1\\
 \hline
\end{tabular}
\end{table*}

Observe that though there are some symmetries in the conditions of Theorem \ref{thNS}, its proof is tedious (it discusses 33 cases). However, these discussions seem to be unavoidable, because this theorem is in fact a refined version of Theorem \ref{thminimality} (refine $F$ to be $F=(f,G)$), and so one needs to verify Conditions (1) and (2) of Theorem \ref{thminimality}, which naturally  leads to the 33 discussions by Proposition \ref{propwalshtransform}. Therefore, Theorem \ref{thNS} seems difficult to be extended to the case of $p\neq3$.

\begin{remark}\label{rmkm=2}
Note that when $m=2$, Condition (3) of Theorem \ref{thNS} does not exist, and hence all the three conditions of Theorem \ref{thNS} are reduced to the first two of them.
\end{remark}

\begin{remark}
It is  very challenging to find a minimal ternary linear code violating the AB condition from Construction \ref{con2}, because for that purpose, one has to find a function $f:\mathbb{F}_3^n\to\mathbb{F}_3$ and a vectorial function $G:\mathbb{F}_3^{n}\to \mathbb{F}_3^{m-1}$ such that the conditions of Construction \ref{con2} are satisfied (those conditions are difficult to satisfy, especially for Conditions ${\rm({\bf a})}$ and  ${\rm({\bf b})}$ since they imply that the linear code $C_f$ in \eqref{eqcodeBooleanintro} is minimal and violates the AB condition). Moreover, %which is difficult, since   Conditions ${\rm({\bf a})}$ and  ${\rm({\bf b})}$ imply that the linear code $C_f$ in \eqref{eqcodeBooleanintro} is minimal and violates the AB condition, One
 one has also  to manage $f$ and $G$ such that  the three conditions of Theorem \ref{thNS} are satisfied.
\end{remark}

According to the fact $Re(z_1+z_2)\leq |z_1|+|z_2|$ for any two complex numbers $z_1,z_2$, Theorem \ref{thNS} can be simplified to the following corollary.

\begin{corollary}\label{corMAB}
The ternary linear code $C_F$ generated by Construction \ref{con2} is minimal if the following condition is satisfied:
\begin{align}\label{eqcorMAB}
\max_{\tilde{\mu}\in\mathbb{F}_3^{m-1}\backslash\{{\bf0_{m-1}}\},\nu\in\mathbb{F}_3^n}\bigg\{|W_f(\nu)|, |W_G(\tilde{\mu},\nu)|,|W_{A_{\tilde{\mu}}}(\nu)|\bigg\}<\frac{3^n}{5}.
\end{align}
\end{corollary}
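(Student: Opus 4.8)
The plan is to invoke Theorem \ref{thNS}, which characterizes the minimality of $C_F$ through its three conditions \eqref{eqgAB1}, \eqref{eqgAB2} and \eqref{eqgAB3}. Since those conditions are already necessary and sufficient, it suffices to show that the single hypothesis \eqref{eqcorMAB} forces each of the three non-equalities. The uniform tool throughout will be the elementary estimate $Re(\xi)\le|\xi|$ combined with the triangle inequality, which lets me bound the real part of any $\mathbb{F}_3$-combination $Re(\sum_i\lambda_i W_i)=\sum_i Re(\lambda_i W_i)\le\sum_i|\lambda_i|\,|W_i|$ of Walsh values by the weighted sum of their magnitudes. The whole argument then reduces to checking that in each condition this weighted sum stays strictly below $3^n$.

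First I would treat Condition (1), where the combination is $W_{A_{\tilde{\mu}}}(\nu)+W_{A_{\tilde{\mu}}}(\nu')+\theta W_{A_{\tilde{\mu}}}(\nu'')$ with $\theta\in\{1,-2\}$, so the coefficient magnitudes sum to at most $1+1+2=4$; since each term has magnitude strictly below $\frac{3^n}{5}$ by \eqref{eqcorMAB}, the real part is strictly less than $4\cdot\frac{3^n}{5}<3^n$, giving \eqref{eqgAB1}. For Conditions (2) and (3) the relevant reading of Table \ref{b5} is that exactly one of $\lambda_1,\lambda_2,\lambda_3,\lambda_4$ equals $-2$ while the other three equal $1$, so the coefficient magnitudes sum to exactly $2+1+1+1=5$; bounding as before yields a real part strictly below $5\cdot\frac{3^n}{5}=3^n$, which is precisely \eqref{eqgAB2} and \eqref{eqgAB3}. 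This is also the point where the strictness of \eqref{eqcorMAB} is used: because each magnitude is strictly below $\frac{3^n}{5}$, the weighted sum cannot reach $3^n$, which is exactly why the threshold $\frac{3^n}{5}$ is the natural one for a four-term combination with coefficient weight $5$.

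The one point that genuinely needs care, and the only real bookkeeping obstacle, is verifying that \emph{every} Walsh value appearing in the three conditions is actually controlled by the maximum in \eqref{eqcorMAB}, whose index $\tilde{\mu}$ ranges only over $\mathbb{F}_3^{m-1}\backslash\{\mathbf{0_{m-1}}\}$. In Condition (2) the term $W_{A_{2\tilde{\mu}}}(\nu-\nu')$ carries the subscript $2\tilde{\mu}$, which is nonzero whenever $\tilde{\mu}\neq\mathbf{0_{m-1}}$, so it equals $|W_{A_{\tilde{\mu}'}}|$ for the nonzero index $\tilde{\mu}'=2\tilde{\mu}$ and is therefore captured by \eqref{eqcorMAB}. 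In Condition (3) the four subscripts $\tilde{\mu},\tilde{\mu}',\tilde{\mu}+\tilde{\mu}',\tilde{\mu}-\tilde{\mu}'$ are all nonzero by the standing hypothesis of that condition, so each of $|W_{A_{\tilde{\mu}}}|$, $|W_G(\tilde{\mu}',\cdot)|$, $|W_{A_{\tilde{\mu}+\tilde{\mu}'}}|$ and $|W_{A_{\tilde{\mu}-\tilde{\mu}'}}|$ is again bounded by \eqref{eqcorMAB}. Once this matching of indices is in place, the three estimates are immediate and the corollary follows; there is no deeper difficulty beyond confirming that the bound \eqref{eqcorMAB} indeed covers all the Walsh values that Theorem \ref{thNS} puts into play.
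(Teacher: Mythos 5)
Your proposal is correct and follows exactly the route the paper intends: the paper derives this corollary from Theorem \ref{thNS} via the single observation that $Re(z_1+z_2)\le|z_1|+|z_2|$, which is precisely your triangle-inequality bound of $4\cdot\frac{3^n}{5}$ for Condition (1) and $5\cdot\frac{3^n}{5}$ for Conditions (2) and (3). Your added care in reading ``one of them being $-2$'' as \emph{exactly} one (consistent with every row of Table \ref{b5}) and in checking that all subscripts $2\tilde{\mu}$, $\tilde{\mu}\pm\tilde{\mu}'$ are nonzero is a useful explicit verification of details the paper leaves implicit.
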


\section{Several minimal ternary linear codes violating the AB condition}\label{sec:appli}

In this section, we present several minimal ternary linear codes violating the AB condition according to Proposition \ref{propviolateAB} and Theorem \ref{thNS}. To this end, we need first to find a function $f$ from $\mathbb{F}_3^n$ to $\mathbb{F}_3$ such that it satisfies the conditions of Construction \ref{con2}. Note that there is a known such function in the literature (see the function (9) and Theorem 18 in \cite{Heng-2018-FFA}). But for that function, we cannot find a vectorial function $G$ from $\mathbb{F}_3^n$ to $\mathbb{F}_3^{m-1}$ such that  the conditions of Theorem \ref{thNS} are satisfied. So we give another $f$ by using the following result.

\begin{lemma}\label{lemwalsh3fun}
Let $\phi_1,\phi_2,\phi_3$ be three functions from $\mathbb{F}_3^n$ to $\mathbb{F}_3$. Let
\begin{align}\label{3fun}
f(x)=\phi_1(x)+(\phi_2-\phi_1)(x)(\phi_3-\phi_1)(x).
 \end{align}
 Then the Walsh transform of $f$ at $\nu\in\mathbb{F}_3^n$ satisfies that
 \begin{align*}
 W_f(\nu)=&\frac{1}{3}\bigg(W_{\phi_1}(\nu)+W_{\phi_2}(\nu)+W_{2(\phi_1+\phi_2)}(\nu)+W_{\phi_3}(\nu)
 +W_{2(\phi_1+\phi_3)}(\nu)\\
 &~~~~+\zeta_3\big(W_{\phi_1+2\phi_2+\phi_3}(\nu)
 +W_{\phi_1+\phi_2+2\phi_3}(\nu)\big)+\zeta_3^2\big(W_{2\phi_1+\phi_2+\phi_3}(\nu)+W_{2(\phi_2+\phi_3)}(\nu)\big)\bigg).
 \end{align*}
\end{lemma}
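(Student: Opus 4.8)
The plan is to linearize the product term by setting $u:=\phi_2-\phi_1$ and $v:=\phi_3-\phi_1$ (all arithmetic in $\mathbb{F}_3$), so that $f=\phi_1+uv$ and hence, pointwise, $\zeta_3^{f(x)}=\zeta_3^{\phi_1(x)}\,\zeta_3^{u(x)v(x)}$. Since each Walsh transform $W_g(\nu)=\sum_{x}\zeta_3^{g(x)-\nu\cdot x}$ is $\mathbb{C}$-linear in the function $x\mapsto\zeta_3^{g(x)}$, it will suffice to expand $\zeta_3^{uv}$ as a linear combination of the exponentials $\zeta_3^{au+bv}$ with $(a,b)\in\mathbb{F}_3^2$, substitute, and then read off the Walsh transforms term by term.

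First I would expand $\zeta_3^{uv}$, regarded as a function of $(u,v)\in\mathbb{F}_3^2$, in the additive-character basis $\{\zeta_3^{au+bv}\}_{(a,b)\in\mathbb{F}_3^2}$, writing $\zeta_3^{uv}=\sum_{(a,b)\in\mathbb{F}_3^2}c_{a,b}\,\zeta_3^{au+bv}$ with $c_{a,b}=\tfrac{1}{9}\sum_{u,v\in\mathbb{F}_3}\zeta_3^{uv-au-bv}$, by orthogonality of the characters of $\mathbb{F}_3^2$. The coefficients are then computed in two cases. When $a=0$ or $b=0$, summing first over the unconstrained variable and invoking the one-variable form of \eqref{eqlinearwalsh} (namely $\sum_{v\in\mathbb{F}_3}\zeta_3^{uv}=3$ if $u={\bf 0}$ and $0$ otherwise) collapses the double sum to the single surviving term, giving $c_{a,b}=\tfrac13$. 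When $ab\neq 0$, the key device is the completing-the-product identity $uv-au-bv=(u-b)(v-a)-ab$; after the translation $(u,v)\mapsto(u+b,v+a)$ the inner sum reduces to $\sum_{u,v}\zeta_3^{uv}=3$, yielding $c_{a,b}=\tfrac13\zeta_3^{-ab}$. In particular $3c_{a,b}\in\{1,\zeta_3,\zeta_3^2\}$, with the root of unity dictated by $-ab\bmod 3$.

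Substituting back gives $\zeta_3^{f(x)}=\sum_{a,b}c_{a,b}\,\zeta_3^{g_{a,b}(x)}$, where the $(a,b)$-term has exponent $g_{a,b}=\phi_1+a(\phi_2-\phi_1)+b(\phi_3-\phi_1)=(1-a-b)\phi_1+a\phi_2+b\phi_3$ (coefficients reduced mod $3$). Multiplying by $\zeta_3^{-\nu\cdot x}$, summing over $x\in\mathbb{F}_3^n$, and using linearity then produces $W_f(\nu)=\sum_{a,b}c_{a,b}\,W_{g_{a,b}}(\nu)$. It remains to match the nine pairs $(a,b)$ with the nine functions in the statement: the pairs $(0,0),(1,0),(2,0),(0,1),(0,2)$ carry coefficient $\tfrac13$ and yield $\phi_1,\phi_2,2(\phi_1+\phi_2),\phi_3,2(\phi_1+\phi_3)$; the pairs $(2,1),(1,2)$ carry $\tfrac13\zeta_3$ and yield $\phi_1+2\phi_2+\phi_3,\phi_1+\phi_2+2\phi_3$; and $(1,1),(2,2)$ carry $\tfrac13\zeta_3^2$ and yield $2\phi_1+\phi_2+\phi_3,2(\phi_2+\phi_3)$, which is exactly the claimed formula.

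The only genuinely fiddly part is the last two steps: keeping the phases $\zeta_3^{-ab}$ and the reductions $1-a-b\bmod 3$ straight so that each of the nine terms lands on the precise function named in the statement; the analytic input is just the elementary evaluation $\sum_{u,v\in\mathbb{F}_3}\zeta_3^{uv}=3$. As an alternative that sidesteps the coefficient computation, one could instead verify the equivalent pointwise identity $3\zeta_3^{uv}=\sum_{a,b}(3c_{a,b})\zeta_3^{au+bv}$ directly on the nine values of $(u,v)\in\mathbb{F}_3^2$ and then apply $\sum_x(\cdot)\zeta_3^{-\nu\cdot x}$; I expect this brute-force check to be the simplest way to remove any doubt about the phase bookkeeping.
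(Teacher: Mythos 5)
Your proof is correct. It arrives at the same nine-term expansion as the paper, but by a noticeably different organization: the paper partitions $\mathbb{F}_3^n$ into the level sets $T_j=\{x:(\phi_2-\phi_1)(x)=j\}$, observes that $f$ restricts to $\phi_1$, $\phi_3$ and $2(\phi_1+\phi_3)$ on $T_0$, $T_1$, $T_2$ respectively, and then writes each indicator $1_{T_j}$ as a one-variable character sum in $u=\phi_2-\phi_1$, producing the nine Walsh transforms in three batches of three. You never partition the domain; instead you expand the quadratic phase $\zeta_3^{uv}$ in the full character basis of $\mathbb{F}_3^2$ and obtain all nine coefficients $c_{a,b}=\tfrac13\zeta_3^{-ab}$ at once from the identity $uv-au-bv=(u-b)(v-a)-ab$ (which in fact also covers the $ab=0$ cases, so your case split is dispensable). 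Unwinding the paper's argument shows the two computations coincide term by term (the paper's $j$ plays the role of your $b$), so the difference is essentially presentational: your version is symmetric in $\phi_2,\phi_3$ and makes the origin of the phases $\zeta_3,\zeta_3^2$ transparent, while the paper's stays with one-dimensional orthogonality at the cost of an asymmetric case analysis. I checked your identification of the nine pairs $(a,b)$ with the functions $(1-a-b)\phi_1+a\phi_2+b\phi_3$ and the attached coefficients; all nine agree with the statement.
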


\begin{proof}
Let $T_j:=\{x\in\mathbb{F}_3^n: \phi_2(x)-\phi_1(x)=j\}$ for $j\in\mathbb{F}_3$. Then the Walsh transform of $f$ satisfies that
\begin{align*}
W_{f}(\nu)=\sum_{x\in T_0}\zeta_3^{\phi_1(x)-\nu\cdot x}+\sum_{x\in T_1}\zeta_3^{\phi_3(x)-\nu\cdot x}+\sum_{x\in T_2}\zeta_3^{2\phi_1(x)+2\phi_3(x)-\nu\cdot x}.
\end{align*}
Note that for any $j\in\mathbb{F}_3$ and any function $h:\mathbb{F}_3^n\to \mathbb{C}$, it holds that
\begin{align*}
\sum_{x\in T_j}h(x)=\sum_{x\in\mathbb{F}_3^n}\frac{1+\zeta_3^{3-j}\zeta_3^{(\phi_2-\phi_1)(x)}
+\zeta_3^{3-2j}\zeta_3^{2(\phi_2-\phi_1)(x)}}{3}h(x).
\end{align*}
We have
\begin{align*}
\sum_{x\in T_0}\zeta_3^{\phi_1(x)-\nu\cdot x}
=&\frac{1}{3}\sum_{x\in\mathbb{F}_3^n}\bigg(\zeta_3^{\phi_1(x)-\nu\cdot x}+\zeta_3^{\phi_2(x)-\nu\cdot x}
+\zeta_3^{2(\phi_1+\phi_2)(x)-\nu\cdot x}\bigg)\\
=&\frac{1}{3}\big(W_{\phi_1}(\nu)+W_{\phi_2}(\nu)+W_{2(\phi_1+\phi_2)}(\nu)\big).
\end{align*}
Similarly, one has
\begin{align*}
\sum_{x\in T_1}\zeta_3^{\phi_3(x)-\nu\cdot x}=&\frac{1}{3}\big(W_{\phi_3}(\nu)+\zeta_3^2W_{2\phi_1+\phi_2+\phi_3}(\nu)+\zeta_3 W_{\phi_1+2\phi_2+\phi_3}(\nu)\big)\hspace{0.3 cm}{\rm and}\\
\sum_{x\in T_2}\zeta_3^{2\phi_1(x)+2\phi_3(x)-\nu\cdot x}=&\frac{1}{3}\big(W_{2(\phi_1+\phi_3)}(\nu)+\zeta_3W_{\phi_1+\phi_2+2\phi_3}(\nu)+\zeta_3^2 W_{2\phi_2+2\phi_3}(\nu)\big).
\end{align*}
The result follows then from the above discussions.
\end{proof}

For a generalized version of Lemma \ref{lemwalsh3fun} on $\mathbb{F}_2^n$, the reader is referred to \cite{Li et al.-2021}.

\begin{lemma}\label{lemvioteAB1}
 Let $n$ and $r$ be two integers with $n-r>2$. Let $a,b$ be two independent vectors of $\mathbb{F}_3^n$. Let $E$ be an $r$-dimensional linear subspace of $\mathbb{F}_3^n$ and $E^{\bot}$ denote the dual space of $E$, that is, $E^{\bot}=\{x\in\mathbb{F}_3^n: x\cdot y=0,~\forall~y\in E\}$. Let $1_{E}(x)$ be the indicator of $E$, that is, $1_{E}(x)$ equals $1$ if $x\in E$, and equals $0$ otherwise. Then the function $f$ from $\mathbb{F}_3^n$ to $\mathbb{F}_3$ defined by
\begin{align*}
 f(x)=1_{E}(x)+(a\cdot x)(b\cdot x)+2
 \end{align*}
 satisfies the conditions of Construction \ref{con2} if $a,b,a+b, a-b\notin E^{\bot}$.
\end{lemma}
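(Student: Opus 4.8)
The plan is to compute the full Walsh spectrum of $f$ explicitly and then read off every requirement of Construction~\ref{con2} from it. First I would dispatch the normalization: since $\mathbf{0_n}\in E$ and $(a\cdot\mathbf{0_n})(b\cdot\mathbf{0_n})=0$, one has $f(\mathbf{0_n})=1+0+2=0$ in $\mathbb{F}_3$. The crucial preliminary observation is that the four hypotheses $a,b,a+b,a-b\notin E^{\bot}$ are equivalent to saying that the linear forms $x\mapsto a\cdot x$ and $x\mapsto b\cdot x$ restrict to \emph{linearly independent} forms on $E$: over $\mathbb{F}_3$ two forms are dependent exactly when one of them is zero or a $\pm$ multiple of the other, and the conditions $a,b\notin E^{\bot}$ together with $a\pm b\notin E^{\bot}$ rule out precisely these cases. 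Consequently $Q(x):=(a\cdot x)(b\cdot x)$ restricts to a rank-$2$ hyperbolic quadratic form on $E$, which is exactly what makes the spectrum computable.

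Next I would compute $W_f$. Splitting the defining sum over $x\in E$ and $x\notin E$ gives $W_f(\nu)=(1-\zeta_3^2)S_E(\nu)+\zeta_3^2 W_Q(\nu)$, where $W_Q(\nu)=\sum_{x\in\mathbb{F}_3^n}\zeta_3^{Q(x)-\nu\cdot x}$ and $S_E(\nu)=\sum_{x\in E}\zeta_3^{Q(x)-\nu\cdot x}$. Both sums are evaluated by the same fibering trick: group $x$ according to the values $(u,v)=(a\cdot x,b\cdot x)$. For $W_Q$ this shows the transform is supported on $\langle a,b\rangle$ with $W_Q(\alpha a+\beta b)=3^{n-1}\zeta_3^{-\alpha\beta}$. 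For $S_E$, using the independence just established, the fibres inside $E$ are cosets of $E\cap\{a,b\}^{\bot}$ of dimension $r-2$, and $S_E$ is supported on $E^{\bot}+\langle a,b\rangle$ with $S_E(w+\alpha a+\beta b)=3^{r-1}\zeta_3^{-\alpha\beta}$ (the decomposition being unique because $E^{\bot}\cap\langle a,b\rangle=\{\mathbf{0_n}\}$). Substituting yields $W_f(\nu)$ in three regimes: it is $0$ off $E^{\bot}+\langle a,b\rangle$; it is a small multiple of $(1-\zeta_3^2)$ on the part of $E^{\bot}+\langle a,b\rangle$ lying outside $\langle a,b\rangle$; and it equals $\zeta_3^{-\alpha\beta}\big((1-\zeta_3^2)3^{r-1}+\zeta_3^2\,3^{n-1}\big)$ on $\langle a,b\rangle$.

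Taking real parts with $Re(\zeta_3)=Re(\zeta_3^2)=-\tfrac12$, the possible values of $Re(W_f(\nu))$ form a short finite list whose extremes come from the $\langle a,b\rangle$ regime; I expect $\max_\nu Re(W_f(\nu))=3^{n-1}-\tfrac{3^r}{2}$ (attained at $\nu=a-b$) and $\min_\nu Re(W_f(\nu))=-\tfrac{3^{n-1}}{2}$ (attained at $\nu=a+b$), both of which are nonzero and have the required signs. For Condition~(\textbf{b}) I would then compute $3\max-2\min=3^n+3^{n-1}-\tfrac{3^{r+1}}{2}$, which is $\ge 3^n$ because $n-r>2$ gives $3^{r+1}/2<3^{n-1}$. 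For Condition~(\textbf{a}), since $\theta$ is real one has $Re\big(W_f(\nu)+W_f(\nu')+\theta W_f(\nu'')\big)=Re(W_f(\nu))+Re(W_f(\nu'))+\theta\,Re(W_f(\nu''))$; bounding termwise gives at most $3\max=3^n-\tfrac{3^{r+1}}{2}<3^n$ when $\theta=1$, and at most $2\max-2\min=3^n-3^r<3^n$ when $\theta=-2$. Hence the value can never equal $3^n$, and indeed the distinctness and sum-zero constraints are not even needed.

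The main obstacle is not any of these inequalities, which are routine once the spectrum is in hand, but the accurate evaluation of $S_E(\nu)$: one must correctly identify its support $E^{\bot}+\langle a,b\rangle$ and the uniqueness of the decomposition $\nu=w+\alpha a+\beta b$, since a slip there would corrupt exactly the two extreme real-part values that drive Conditions~(\textbf{a}) and~(\textbf{b}). I would therefore treat the independence of $a\cdot{}$ and $b\cdot{}$ on $E$ and the fibre computation of $S_E$ as the technical heart of the argument, keeping the inequality $n-r>2$ in view throughout, since it is precisely what guarantees both that the restricted forms can be independent and that Condition~(\textbf{b}) is met.
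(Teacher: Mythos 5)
Your proposal is correct, and it arrives at exactly the same Walsh spectrum (and hence the same extreme real parts $3^{n-1}-\tfrac{3^r}{2}$ at $\pm(a-b)$ and $-\tfrac{3^{n-1}}{2}$ at $\pm(a+b)$) as the paper, but by a genuinely different computation. The paper writes $f=\phi_1+(\phi_2-\phi_1)(\phi_3-\phi_1)$ with $\phi_1=1_E+2$, $\phi_2=\phi_1+a\cdot x$, $\phi_3=\phi_1+b\cdot x$, and invokes its general Lemma \ref{lemwalsh3fun} to express $W_f$ as a fixed linear combination of nine shifts $W_{\phi_1}(\nu-\cdot)$, which it then evaluates from the explicit spectrum of $W_{\phi_1}$. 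You instead split $W_f(\nu)=(1-\zeta_3^2)S_E(\nu)+\zeta_3^2W_Q(\nu)$ and evaluate the two Gauss-type sums directly by fibering over the pair $(a\cdot x,b\cdot x)$; your observation that the hypothesis $a,b,a\pm b\notin E^{\bot}$ is precisely linear independence of the restricted forms on $E$ (which the paper never states explicitly) is what makes the fiber count inside $E$ work and also explains why $E^{\bot}\cap\langle a,b\rangle=\{\mathbf{0_n}\}$. Your route is more self-contained and structurally transparent (it exposes the support $E^{\bot}+\langle a,b\rangle$ and the role of the rank-$2$ quadratic), while the paper's route buys reusability of Lemma \ref{lemwalsh3fun} elsewhere. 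A further point in your favour: for Condition~({\bf a}) the paper only asserts that ``one can verify'' the inequality from the value table, whereas your termwise bound $3\max=3^n-\tfrac{3^{r+1}}{2}<3^n$ for $\theta=1$ and $2\max-2\min=3^n-3^r<3^n$ for $\theta=-2$ is a clean, complete justification that does not even need the distinctness and zero-sum constraints on $\nu,\nu',\nu''$.
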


\begin{proof}
Firstly, it is easily seen that $f(\mathbf{0_n})=0$. Now
let $\phi_1(x)=1_{E}(x)+2$, $\phi_2(x)=\phi_1(x)+a\cdot x$ and $\phi_3(x)=\phi_1(x)+b\cdot x$. We have $f(x)=\phi_1(x)+(\phi_2-\phi_1)(x)(\phi_3-\phi_1)(x)$. Note that $W_{\phi(x)+a\cdot x}(\nu)=W_{\phi}(\nu-a)$ for any  function $\phi:\mathbb{F}_3^n\to\mathbb{F}_3$ and any $a, \nu\in\mathbb{F}_3^n$. Then by Lemma \ref{lemwalsh3fun}, we obtain that
\begin{align}\label{eqwalshf+ab}
W_f(\nu)=&\frac{1}{3}\!\bigg(\!W_{\phi_1}(\nu)\!+\!W_{\phi_1}(\nu\!-\!a)\!+\!W_{\phi_1}(\nu\!-\!2a)
\!+\!W_{\phi_1}(\nu\!-\!b)\!+\!W_{\phi_1}(\nu\!-\!2b)\nonumber\\
&~~+\!\zeta_3\big(W_{\phi_1}(\nu\!-\!a\!-\!2b)\!+\!W_{\phi_1}(\nu\!-\!2a\!-\!b)\big)
\!+\!\zeta_3^2\big(W_{\phi_1}(\nu\!-\!a\!-\!b)\!+\!W_{\phi_1}(\nu\!-\!2a\!-\!2b)\big)\!\bigg)\!.
\end{align}
In addition, the Walsh transform of $\phi_1$ at $\nu\in\mathbb{F}_3^n$ is that
\begin{align*}
W_{\phi_1}(\nu)=\zeta_3^2\sum_{x\in\mathbb{F}_3^n}\zeta_3^{1_{E}(x)-\nu\cdot x}
=\zeta_3^2\big(\zeta_3+\sum_{x\in\mathbb{F}_3^{n*}}\zeta_3^{\nu\cdot x}+(\zeta_3-1)\sum_{x\in E^*}\zeta_3^{\nu\cdot x}\big).
\end{align*}
Recall that $E$ is an $r$-dimensional linear subspace of $\mathbb{F}_3^n$. We have  $\sum_{x\in E^*}\zeta_3^{\nu\cdot x}$ equals $\sum_{x\in E}\zeta_3^{\nu\cdot x}-1=-1$ if $\nu\notin E^{\bot}$, and equals $3^r-1$ otherwise. Thus, we obtain that
 \begin{align}\label{eqwalphi1}
W_{\phi_1}(\nu)=\begin{cases}
3^n\zeta_3^2+3^r-3^r\zeta_3^2, &{\rm if}~\nu=\mathbf{0_n},\\
3^r-3^r\zeta_3^2, &{\rm if}~\nu\in E^{\bot}\backslash\{\mathbf{0_n}\},\\
0,&{\rm otherwise.}
\end{cases}
\end{align}
Recall that $a,b\in\mathbb{F}_3^n$ are independent. Now if $a,b,a+b,a-b\notin E^{\bot}$, then from \eqref{eqwalshf+ab} and \eqref{eqwalphi1}, we obtain that
\begin{align*}
W_f(\nu)=\begin{cases}
\frac{1}{3}(3^{n}\zeta_3^2+3^{r}-3^{r}\zeta_3^2), &{\rm if}~\nu\in\{\mathbf{0_n}, a,2a, b, 2b\},\\
\frac{1}{3}\zeta_3(3^{n}\zeta_3^2+3^{r}-3^{r}\zeta_3^2), &{\rm if}~\nu\in\{a+2b, 2a+b\},\\
\frac{1}{3}\zeta_3^2(3^{n}\zeta_3^2+3^{r}-3^{r}\zeta_3^2), &{\rm if}~\nu\in\{a+b, 2a+2b\},\\
\frac{1}{3}(3^{r}-3^{r}\zeta_3^2), &{\rm if}~\nu\in
(E^{\bot}\cup a+E^{\bot}\cup 2a+E^{\bot}\cup b+E^{\bot} \cup 2b+E^{\bot})\backslash T,\\
\frac{1}{3}\zeta_3(3^{r}-3^{r}\zeta_3^2), &{\rm if}~\nu\in (a+2 b+E^{\bot}\cup 2a+ b+E^{\bot})\backslash T,\\
\frac{1}{3}\zeta_3^2(3^{r}-3^{r}\zeta_3^2), &{\rm if}~\nu\in (a+b+E^{\bot}\cup 2a+ 2b+E^{\bot})\backslash T,\\
0, &{\rm otherwise},
\end{cases}
\end{align*}
where $T$ is a 2-dimensional linear space of $\mathbb{F}_3^n$ spanned by $a$ and $b$, from which we derive that
\begin{align}\label{eqrewfab}
Re(W_f(\nu))=\begin{cases}
-\frac{3^{n-1}}{2}+\frac{3^{r}}{2}, &{\rm if}~\nu\in\{\mathbf{0_n}, a,2a, b, 2b\},\\
3^{n-1}-\frac{3^{r}}{2}, &{\rm if}~\nu\in\{a+2b, 2a+b\},\\
-\frac{3^{n-1}}{2}, &{\rm if}~\nu\in\{a+b, 2a+2b\},\\
\frac{3^{r}}{2}, &{\rm if}~\nu\in
(E^{\bot}\cup a+E^{\bot}\cup 2a+E^{\bot}\cup b+E^{\bot} \cup 2b+E^{\bot})\backslash T,\\
-\frac{3^{r}}{2}, &{\rm if}~\nu\in (a+2 b+E^{\bot}\cup 2a+ b+E^{\bot})\backslash T,\\
0, &{\rm otherwise}.
\end{cases}
\end{align}
Then we have
\begin{align*}
3\max_{\nu\in\mathbb{F}_3^n}Re(W_f(\nu))-2\min_{\nu\in\mathbb{F}_3^n}Re(W_f(\nu))
=3(3^{n-1}-\frac{3^r}{2})-2(-\frac{3^{n-1}}{2})=3^n+3^{n-1}-\frac{3^{r+1}}{2},
\end{align*}
which is greater than $3^n$ when $n-r> 2$. Thus, Condition $({\bf b})$ of Construction \ref{con2} is satisfied.
Moreover, according to \eqref{eqrewfab}, one can also verify that
\begin{align*}
Re(W_f(\nu)+W_f(\nu')+\theta W_f(\nu''))\neq 3^n
\end{align*}
for any $\nu,\nu',\nu''\in\mathbb{F}_3^n$ and any $\theta\in\{1,-2\}$. Thus, Condition $({\bf a})$ of Construction \ref{con2} is also satisfied. This completes the proof.
\end{proof}

In our proof, we also need the following result.

\begin{lemma}\label{lemminimalplate}
%Let $n$ and $s$ be two integers with the same parity.
Let $g$ be a regular $s$-plateaued function from $\mathbb{F}_3^n$ to $\mathbb{F}_3$ such that $g(\mathbf{0_n})=0$. Let $\phi(x)=g(x)+1_{E}(x)+2$, where $E$ is an $r$-dimensional linear subspace of  $\mathbb{F}_3^n$. Then  for any $\nu\in\mathbb{F}_3^n$, it holds that
\begin{align*}
\big|W_{\phi}(\nu)\big|\leq 3^{\frac{n+s}{2}}+2\times 3^r.
\end{align*}
\end{lemma}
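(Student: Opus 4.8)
The plan is to expand the Walsh transform of $\phi$ directly and isolate the contribution of the indicator $1_E$. First I would factor the constant term $2$ out as $\zeta_3^2$, writing
\begin{align*}
W_\phi(\nu)=\zeta_3^2\sum_{x\in\mathbb{F}_3^n}\zeta_3^{g(x)+1_E(x)-\nu\cdot x}.
\end{align*}
To handle the indicator, I would split the sum according to whether $x\in E$ or $x\notin E$ (equivalently, use the identity $\zeta_3^{1_E(x)}=1+(\zeta_3-1)1_E(x)$, valid because $1_E(x)\in\{0,1\}$). Writing $\sum_{x\notin E}=\sum_{x\in\mathbb{F}_3^n}-\sum_{x\in E}$, the full sum collapses to $W_g(\nu)$ plus a correction supported on $E$.

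This yields the closed form
\begin{align*}
W_\phi(\nu)=\zeta_3^2\Big(W_g(\nu)+(\zeta_3-1)\sum_{x\in E}\zeta_3^{g(x)-\nu\cdot x}\Big).
\end{align*}
Since $|\zeta_3^2|=1$, taking magnitudes and applying the triangle inequality gives
\begin{align*}
|W_\phi(\nu)|\leq|W_g(\nu)|+|\zeta_3-1|\cdot\Big|\sum_{x\in E}\zeta_3^{g(x)-\nu\cdot x}\Big|.
\end{align*}
Each factor is then bounded by an elementary estimate: because $g$ is $s$-plateaued, $|W_g(\nu)|^2\in\{0,3^{n+s}\}$ and hence $|W_g(\nu)|\leq 3^{\frac{n+s}{2}}$; the character sum over $E$ has $\#E=3^r$ terms of modulus $1$, so it is bounded by $3^r$; and $|\zeta_3-1|\leq|\zeta_3|+1=2$. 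Combining these three bounds delivers exactly $|W_\phi(\nu)|\leq 3^{\frac{n+s}{2}}+2\times 3^r$.

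There is no genuine obstacle here; the estimate is a one-line triangle inequality once the indicator has been separated off. The only points requiring a little care are the algebraic rewriting of the indicator contribution (linearizing the $\zeta_3^{1_E(x)}$ factor correctly so that the leftover reassembles into $W_g(\nu)$) and the choice to bound $|\zeta_3-1|$ by $2$ rather than its exact value $\sqrt{3}$; the latter is what produces the stated constant $2$, although the sharper bound $3^{\frac{n+s}{2}}+\sqrt{3}\times 3^r$ is in fact available. Note also that only the $s$-plateaued property of $g$ enters the magnitude bound—regularity and the normalization $g(\mathbf{0_n})=0$ are not needed for this estimate, even though they are used elsewhere in the construction.
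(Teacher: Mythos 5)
Your proposal is correct and follows essentially the same route as the paper: the identical closed form $W_\phi(\nu)=\zeta_3^2\big(W_g(\nu)+(\zeta_3-1)\sum_{x\in E}\zeta_3^{g(x)-\nu\cdot x}\big)$, followed by the triangle inequality with the bounds $|W_g(\nu)|\leq 3^{\frac{n+s}{2}}$, $|\zeta_3-1|\leq 2$, and $3^r$ for the sum over $E$. Your side remarks (that only the plateaued property is needed for the magnitude bound, and that $|\zeta_3-1|=\sqrt{3}$ would give a sharper constant) are accurate refinements the paper does not make.
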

\begin{proof}
For any $\nu\in\mathbb{F}_3^n$, it is easy to obtain that
\begin{align*}
W_{\phi}(\nu)=&\zeta_3^2\big(W_{g}(\nu)+(\zeta_3-1)\sum_{x\in E}\zeta_3^{g(x)-\nu\cdot x}\big).
\end{align*}
Since $g$ is regular $s$-plateaued, we have $\zeta_3^2W_g(\nu)\in \{0, 3^{\frac{n+s}{2}}\zeta_3^{2+\tilde{g}(\nu)}\}$ for some function $\tilde{g}:\mathbb{F}_3^n\to\mathbb{F}_3$ and for any $\nu\in\mathbb{F}_3^n$. Thus, $\big|\zeta_3^2W_g(\nu)\big|\leq 3^{\frac{n+s}{2}}$ for any $\nu\in\mathbb{F}_3^n$.
Since $E$ is an $r$-dimensional linear subspace of $\mathbb{F}_3^n$, we have $\big|(\zeta_3-1)\sum_{x\in E}\zeta_3^{g(x)-\nu\cdot x}\big)\big|\leq 2\times 3^{r}$ for any $\nu\in\mathbb{F}_3^n$. Combining them, we obtain the desired result.
\end{proof}

Now we give the main result of this section.

\begin{theorem}\label{thmvabpla}
 Let $n>6$ and $ m\geq 2$ be two integers. Let $E$ be an $r$-dimensional linear subspace of $\mathbb{F}_3^n$ with $n-r>3$, and let $f(x)=1_{E}(x)+(a\cdot x)(b\cdot x)+2$, where $a,b\in\mathbb{F}_3^n$ are independent with $a,b,a+b,a-b\notin E^{\bot}$.
 Let $G: \mathbb{F}_3^n\to \mathbb{F}_3^{m-1}$ be a vectorial regular $s$-plateaued function with $0\leq s\leq n-6$ and $G(\mathbf{0_n})=\mathbf{0_{m-1}}$.  Let $F=(f, G)$. Then the ternary linear code $C_F$ generated by Construction \ref{ourcon} is a $[3^n-1, n+m, 3^{n-1}+3^{n-2}+3^{r-1}]$ code violating the AB condition.
\end{theorem}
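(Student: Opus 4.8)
The plan is to realize $F=(f,G)$ as an instance of Construction~\ref{con2} and then feed it into the results of Sections~\ref{sec:generic1}--\ref{sec:mvab}. First I would confirm the hypotheses of Construction~\ref{con2}. Since $n-r>3>2$ and $a,b,a\pm b\notin E^{\bot}$, Lemma~\ref{lemvioteAB1} tells us that $f$ meets Conditions~({\bf a}) and ({\bf b}) of Construction~\ref{con2}, with $f(\mathbf{0_n})=0$, $\max_{\nu}Re(W_f(\nu))\ge 0$ and $\min_{\nu}Re(W_f(\nu))\le 0$. For $G$: it vanishes at $\mathbf{0_n}$, and for every $\tilde{\mu}\in\mathbb{F}_3^{m-1}\backslash\{\mathbf{0_{m-1}}\}$ the component $\tilde{\mu}\cdot G$ is regular $s$-plateaued, so $|W_G(\tilde{\mu},\nu)|\in\{0,3^{(n+s)/2}\}\le 3^{n-3}<3^n/5$; hence $C_G$ is minimal by Corollary~\ref{rmkminimal}. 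Non-affineness of each $\mu_1 f+\tilde{\mu}\cdot G$ (with $(\mu_1,\tilde{\mu})\ne(0,\mathbf{0_{m-1}})$) holds because a nonzero plateaued component with $s<n$ and the quadratic $f$ both have all Walsh moduli strictly below $3^n$ (for the mixed components this uses the bound established below). Thus $F$ fits Construction~\ref{con2}, so Proposition~\ref{propviolateAB} immediately gives that $C_F$ violates the AB condition, while Proposition~\ref{propdim} gives length $3^n-1$ and dimension $n+m$.

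For the minimal distance I would first obtain a uniform bound on the mixed components $A_{\tilde{\mu}}=f+\tilde{\mu}\cdot G$ ($\tilde{\mu}\ne\mathbf{0_{m-1}}$). Writing $A_{\tilde{\mu}}(x)=\phi_1(x)+(a\cdot x)(b\cdot x)$ with $\phi_1(x)=1_E(x)+2+\tilde{\mu}\cdot G(x)$ and applying Lemma~\ref{lemwalsh3fun} exactly as in \eqref{eqwalshf+ab}, $W_{A_{\tilde{\mu}}}(\nu)$ becomes $\frac13$ times a sum of nine shifts $W_{\phi_1}(\nu-\ell)$ weighted by cube roots of unity. As $\tilde{\mu}\cdot G$ is regular $s$-plateaued and vanishes at $\mathbf{0_n}$, Lemma~\ref{lemminimalplate} yields $|W_{\phi_1}(\nu)|\le 3^{(n+s)/2}+2\cdot 3^r$, so that
$$|W_{A_{\tilde{\mu}}}(\nu)|\le 3\big(3^{(n+s)/2}+2\cdot 3^r\big)\le 3^{n-2}+6\cdot 3^r<\frac{3^n}{5},$$
where the last two inequalities use $s\le n-6$ and $r\le n-4$. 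Since $|W_G(\tilde{\mu},\nu)|\le 3^{n-3}$ and the explicit real spectrum \eqref{eqrewfab} of $f$ has maximum $3^{n-1}-3^r/2$, Proposition~\ref{propwalshtransform} shows that the $W_f$-branch dominates, i.e. $\max_{\mu\in\mathbb{F}_3^{m*},\nu}Re(W_F(\mu,\nu))=3^{n-1}-3^r/2$. Remark~\ref{rmkdistance} then gives $d=3^n-3^{n-1}-\frac23(3^{n-1}-3^r/2)=3^{n-1}+3^{n-2}+3^{r-1}$.

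For minimality, note that $|W_f(\nu)|$ may reach about $3^{n-1}>3^n/5$, so Corollary~\ref{corMAB} is unavailable and I would verify the three conditions of Theorem~\ref{thNS} directly. Conditions~(1) and (3) involve only $W_{A_{\cdot}}$ (nonzero index) and $W_G$ (nonzero index), all of modulus $<3^n/5$; since the coefficient magnitudes sum to at most $4$ and $5$ respectively, the triangle inequality puts each real part strictly below $3^n$. The only delicate case is Condition~(2), where $W_f(\nu)$ enters with coefficient $\lambda_1$. Here I would split on $\lambda_1$: if $\lambda_1=-2$ then $\lambda_2=\lambda_3=\lambda_4=1$ and $-2Re(W_f(\nu))\le -2\min Re(W_f)=3^{n-1}$, while the other three terms contribute less than $3\cdot 3^n/5$, giving a total below $\frac{14}{15}3^n<3^n$; if $\lambda_1=1$ then $Re(W_f(\nu))\le 3^{n-1}-3^r/2$ and a short inspection of which of $\lambda_2,\lambda_3,\lambda_4$ equals $-2$ (using $|W_G|\le 3^{n-3}$ and $|W_{A_{\cdot}}|\le 3^{n-2}+6\cdot 3^r$) keeps the real part below $3^n$ as well. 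Hence all three conditions of Theorem~\ref{thNS} are satisfied and $C_F$ is minimal.

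I expect Condition~(2) of Theorem~\ref{thNS} to be the main obstacle: unlike the other two conditions it contains the large Walsh coefficients of $f$, so a single modulus estimate does not close the gap and one must combine the exact real values \eqref{eqrewfab} of $W_f$ with the sign carried by $\lambda_1$, checking that even the worst sub-case (for instance $\lambda_1=1$, $\lambda_3=-2$, where the bound piles up to $3^{n-1}+10\cdot 3^{(n+s)/2}+O(3^r)$) stays strictly under $3^n$ thanks to $s\le n-6$ and $r\le n-4$. The structural reason it succeeds is that at most one coefficient (that of $W_f$) can be large, and its positive extreme $3^{n-1}-3^r/2$ is safely bounded away from $3^n$, leaving room for the small plateaued corrections.
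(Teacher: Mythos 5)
Your proposal is correct and follows the same skeleton as the paper's proof: verify the hypotheses of Construction \ref{con2} via Lemma \ref{lemvioteAB1} and the plateaued-ness of $G$, invoke Proposition \ref{propviolateAB} for the AB violation, bound $\big|W_{A_{\tilde{\mu}}}\big|$ through Lemma \ref{lemminimalplate} and the nine-term expansion \eqref{eqwalshf+ab}, and check the three conditions of Theorem \ref{thNS} by triangle-inequality estimates. The one place where you genuinely diverge is Condition (2), and there your version is the stronger one. The paper bounds the $W_f$-term uniformly by $|\lambda_1|\cdot|Re(W_f(\nu))|\leq 2(3^{n-1}-\tfrac{3^r}{2})$ and arrives at the estimate $2\times 3^{n-1}+7\times 3^{\frac{n+s}{2}}+11\times 3^r$; at the extreme admissible parameters $s=n-6$ and $r=n-4$ this equals $86\times 3^{n-4}$, which exceeds $3^n=81\times 3^{n-4}$, so the paper's own displayed inequality does not close there. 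Your split on the sign of $\lambda_1$ --- using $-2\,Re(W_f(\nu))\leq -2\min_{\nu}Re(W_f(\nu))=3^{n-1}$ when $\lambda_1=-2$, versus $Re(W_f(\nu))\leq 3^{n-1}-\tfrac{3^r}{2}$ when $\lambda_1=1$ --- exploits the asymmetry of the spectrum \eqref{eqrewfab} and yields totals of at most $\tfrac{14}{15}\cdot 3^n$ and (in the worst sub-case $\lambda_3=-2$) about $74.5\times 3^{n-4}$, both strictly below $3^n$ throughout the full stated range $s\leq n-6$, $r\leq n-4$. So your refinement is not merely cosmetic: it is what actually makes the argument airtight where the paper's cruder absolute-value bound falls short. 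The remaining deviations (citing Corollary \ref{rmkminimal} rather than Theorem \ref{th3weirgl} for the minimality of $C_G$, and spelling out how the $W_f$-branch of Proposition \ref{propwalshtransform} dominates in the computation of $d$ via Remark \ref{rmkdistance}) are minor and equally valid.
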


\begin{proof}
From Lemma \ref{lemvioteAB1} and Theorem \ref{th3weirgl}, we obtain that $f$ and $G$ satisfy all the conditions of Construction \ref{con2}. From Proposition \ref{propviolateAB}, we derive that $C_F$ violates the AB condition. Below, we show that $C_F$ is minimal by verifying the three conditions of Theorem \ref{thNS}.

For any $\mathbf{0_{m-1}}\neq\tilde{\mu}\in\mathbb{F}_3^{m-1}$, let $A_{\tilde{\mu}}(x):=f(x)+\tilde{\mu}\cdot G(x)=\phi_{\tilde{\mu}}(x)+(a\cdot x)(b\cdot x)$, where $\phi_{\tilde{\mu}}(x)=\tilde{\mu}\cdot G(x)+1_{E}(x)+2$. Since $G$ is a vectorial regular $s$-plateaued function, by  Lemma \ref{lemminimalplate}, we have
$$\big|W_{\phi_{\tilde{\mu}}}(\nu)\big|\leq 3^{\frac{n+s}{2}}+2\times 3^r, ~\forall~\tilde{\mu}\neq \mathbf{0_{m-1}},\nu\in\mathbb{F}_3^n.$$
By \eqref{eqwalshf+ab}, we have
\begin{align*}
\big|W_{A_{\tilde{\mu}}}(\nu)\big|\leq 3\max_{\nu\in\mathbb{F}_3^n}\big|W_{\phi_{\tilde{\mu}}}(\nu)\big|, ~\forall ~\tilde{\mu}\neq\mathbf{0_{m-1}}.
\end{align*}
Therefore, we obtain that
\begin{align}\label{eqreAmu}
-{3^{\frac{n+s}{2}+1}}-2\times 3^{r+1}\leq Re(W_{A_{\tilde{\mu}}}(\nu))
\leq 3^{\frac{n+s}{2}+1}+2\times 3^{r+1}, ~\forall ~\tilde{\mu}\neq\mathbf{0_{m-1}}, \nu\in\mathbb{F}_3^n,
\end{align}
from which one easily derives that
\begin{align*}
Re\big(W_{A_{\tilde{\mu}}}(\nu)\!+\!W_{A_{\tilde{\mu}}}(\nu')\!+\!\theta W_{A_{\tilde{\mu}}}(\nu'')\big)\!\leq\! 4(3^{\frac{n+s}{2}+1}\!+\!2\!\times\! 3^{r+1}), \forall ~\tilde{\mu}\neq\mathbf{0_{m-1}},\nu,\nu',\nu''\in\mathbb{F}_3^n,\theta\in\{1,-2\},
\end{align*}
which is strictly less than $3^n$ if $n-r>3$ and $0\leq s\leq n-6$, since in this case $4\times3^{\frac{n+s}{2}+1}\leq 12\times 3^{n-3}$ and $8\times 3^{r+1}\leq 8\times 3^{n-3}$. Thus, Condition (1) of Theorem \ref{thNS} is satisfied. Since $G$ is a vectorial regular $s$-plateaued function from $\mathbb{F}_3^n$ to $\mathbb{F}_3^{m-1}$, we have $\big|Re(W_{G}(\tilde{\mu}, \nu))\big|\leq 3^{\frac{n+s}{2}}$ for any $\tilde{\mu}\in\mathbb{F}_{3}^{m-1}\backslash\{\mathbf{0_{m-1}}\}$, $\nu\in\mathbb{F}_3^n$. In addition,
according to  \eqref{eqrewfab} and \eqref{eqreAmu}, we have
\begin{align*}
\big|Re(W_f(\nu))\big|\leq 3^{n-1}-\frac{3^r}{2}, \big|Re(W_{A_{\tilde{\mu}}}(\nu))\big|\leq 3^{\frac{n+s}{2}+1}+2\times 3^{r+1},~~\forall~\tilde{\mu}\neq\mathbf{0_{m-1}}, \nu\in\mathbb{F}_3^n.
\end{align*}
Therefore, for any $\lambda_1,\lambda_2,\lambda_3, \lambda_4\in\{1,-2\}$ with one being $-2$, any $\tilde{\mu}, \tilde{\mu}',\tilde{\mu}\pm \tilde{\mu}'\in\mathbb{F}_3^{m-1}\backslash\{\mathbf{0_{m-1}}\}$, and any $\nu,\nu'\in\mathbb{F}_3^n$, we obtain that
\begin{align*}
&Re\big(\lambda_1W_f(\nu)+\lambda_2W_{G}(\tilde{\mu},\nu')+\lambda_3W_{A_{\tilde{\mu}}}(\nu+\nu')
+\lambda_4W_{A_{2\tilde{\mu}}}(\nu-\nu')\big)\\
\leq &2(3^{n-1}-\frac{3^r}{2})+ 3^{\frac{n+s}{2}}
+2(3^{\frac{n+s}{2}+1}+2\times 3^{r+1})\\
=& 2\times 3^{n-1}+7\times 3^{\frac{n+s}{2}}+11\times3^r,
\end{align*}
and
\begin{align*}
&Re\big(\lambda_1W_{A_{\tilde{\mu}}}(\nu)+\lambda_2W_{G}(\tilde{\mu}',\nu')
+\lambda_3W_{A_{\tilde{\mu}+\tilde{\mu}'}}(\nu+\nu')+\lambda_4W_{A_{\tilde{\mu}-\tilde{\mu}'}}(\nu-\nu')\big)\\
\leq &3^{\frac{n+s}{2}}+4(3^{\frac{n+s}{2}+1}+2\times 3^{r+1})\\
=&13\times 3^{\frac{n+s}{2}}+24\times 3^r.
\end{align*}
These two relations are strictly less than $3^n$ when $n-r\geq 4$ and $0\leq s\leq n-6$. This shows that Conditions (2) and (3) of Theorem \ref{thNS} are also satisfied, and hence $C_F$ is a minimal ternary linear code violating the AB condition.

Finally, by Propositions \ref{propdim} and \ref{propwalshtransform}, Remark \ref{rmkdistance}, and Relations \eqref{eqrewfab} and \eqref{eqreAmu}, we obtain that $C_F$ is a $[3^n-1, n+m, 3^{n-1}+3^{n-2}+3^{r-1}]$ code. The proof is finished.
\end{proof}

%\begin{remark}
Note that when $s=0$ in Theorem \ref{thmvabpla}, that is, $G$ is a vectorial regular bent function from $\mathbb{F}_3^n$ to $\mathbb{F}_3^{m-1}$, the value of $m$ should be less than or equal to $\frac{n}{2}+1$, since vectorial regular bent functions exist only for $m-1\leq \frac{n}{2}$.
%\end{remark}

\begin{remark}
In section VI of \cite{LYJetal-2023-IT}, the authors presented several infinite families of minimal binary linear codes violating the AB condition from some special vectorial Boolean functions $F:\mathbb{F}_2^n\to\mathbb{F}_2^m$, where $n$ is even and $2\leq m\leq \frac{n}{2}+1$ or $m=n+1$. While for the other cases (including the case $n$ odd and other values of $m$),  the authors left them as a open question in Remark 12 of \cite{LYJetal-2023-IT}. In this section, we also presented several minimal ternary linear codes violating the AB condition from some special vectorial functions $F:\mathbb{F}_3^n\to\mathbb{F}_3^m$. In addition, our result (Theorem \ref{thmvabpla})  does not have ``restrictions" to the values of $n$ and $m$. Therefore, similarly as this section, one can use our method to solve the left question in Remark 12 of \cite{LYJetal-2023-IT}.
\end{remark}

\begin{remark}
Theorem \ref{thNS} is a general result for finding new minimal ternary linear codes violating the AB condition. In this section, we have given an application of Theorem \ref{thNS} by finding special functions $f$ and $G$. It would be interesting if someone can give other applications of Theorem \ref{thNS}.
\end{remark}

\section{ Concluding remarks}\label{sec:conclusion}

The main {contributions of this paper are presented as follows}:

 \begin{itemize}
\item A necessary and sufficient condition for a large class of ternary linear codes to be minimal (Theorem \ref{thminimality}).

\item Several {minimal ternary linear codes with three-weight and complete} weight distribution (Theorem \ref{th3weirgl}, Corollaries \ref{cors=0} and \ref{cors=1}).

\item A generic construction of ternary linear codes violating the AB condition from vectorial functions (Construction \ref{con2}).

\item A necessary and sufficient condition for the ternary linear codes in Construction \ref{con2} to be minimal (Theorem \ref{thNS}).

\item Several minimal linear codes violating the AB condition (Theorem \ref{thmvabpla}).
\end{itemize}

We highlight that, to the best of our knowledge, this is the first paper to construct minimal ternary linear codes from vectorial functions $F:\mathbb{F}_3^n\to\mathbb{F}_3^m$. Moreover, the minimal ternary linear codes obtained from Theorem \ref{thmvabpla} are valid for any positive integers $n\geq 6$ and $m\geq 2$ (without any restriction for $n$ even and $2\leq m\leq \frac{n}{2}$ or $m=n+1$, while those restrictions are required in Section VI of \cite{LYJetal-2023-IT}).
Hence, by using our method given in Section \ref{sec:appli}, one can solve the question raised in Remark 12 of \cite{LYJetal-2023-IT}.

\section*{Acknowledgments}

This work was supported in part by the National Key Research and Development Program of China under Grant 2019YFB2101703; in part
by the National Natural Science Foundation of China under Grants  62302001, 62372221, 61972258, 62272107 and U19A2066, in part by the China Postdoctoral Science Foundation under Grant 2023M740714, in part by the
Innovation Action Plan of Shanghai Science and Technology under Grants  20511102200 and 21511102200, in part by the Key Research and Development Program of Guangdong Province  under Grant 2020B0101090001,
 in part by the Natural Science Foundation for the Higher Education Institutions of Anhui Province under Grant 2023AH050250.

\end{document}